\def\qu#1 {\fbox {\footnote {\ }}\ \footnotetext { From Qu: {\color{red}#1}}}
\def\hqu#1 {}
\def\kq#1 {\fbox {\footnote {\ }}\ \footnotetext { From KangQuan: {\color{blue}#1}}}
\def\hkq#1 {}
\newtheorem{Th}{Theorem}[section]
\newtheorem{Prop}[Th]{Proposition}
\newtheorem{Lemma}[Th]{Lemma}
\newtheorem{Def}[Th]{Definition}
\newtheorem{Rem}[Th]{Remark}
\newcommand{\tr}{{\rm Tr}}
\newcommand{\gf}{{\mathbb F}}
\newcommand{\figcaption}{\def\@captype{figure}\caption}
\newcommand{\tabcaption}{\def\@captype{table}\caption}
\begin{document}
	\title{Further study of $2$-to-$1$ mappings over $\gf_{2^n}$ }
	\author{{Kangquan Li,  Sihem Mesnager and Longjiang Qu }\\
	\space
This work is dedicated in memory of Vladimir I. Levenshtein
	\thanks{Kangquan Li and Longjiang Qu  are with the College of Liberal Arts and Sciences,
		National University of Defense Technology, Changsha, 410073, China.
		Longjiang Qu is also with the State Key Laboratory of Cryptology, Beijing, 100878, China. Sihem Mesnager is with the department of Mathematics, University of Paris VIII, 93526 Saint-Denis, France, University of Paris XIII, CNRS, LAGA UMR 7539, Sorbonne Paris Cit\'e, 93430 Villetaneuse,
France, and Telecom ParisTech 75013 Paris.
		E-mail: likangquan11@nudt.edu.cn, smesnager@univ-paris8.fr, ljqu\_happy@hotmail.com. 	The research of Longjiang Qu is supported by the Nature Science Foundation of China (NSFC) under Grant  61722213, 11531002, 61572026,   National Key R$\&$D Program of China (No.2017YFB0802000),  and the Open Foundation of State Key Laboratory of Cryptology.
 }}
	\maketitle{}
\begin{abstract}
	$2$-to-$1$ mappings over finite fields play an important role in symmetric cryptography, in particular in the constructions of APN functions, bent functions, semi-bent functions and so on. Very recently, Mesnager and Qu \cite{MQ2019} provided a systematic study of $2$-to-$1$ mappings over finite fields. In particular, they determined all $2$-to-$1$ mappings of degree at most 4 over any finite fields. In addition, another research direction is to consider $2$-to-$1$ polynomials with few terms.  Some results about $2$-to-$1$ monomials and binomials have been obtained in \cite{MQ2019}.
	
Motivated by their work, in this present paper, we push further the  study of $2$-to-$1$ mappings, particularly, over finite fields with characteristic $2$ (binary case being the most interesting for applications). Firstly,  we completely determine  $2$-to-$1$ polynomials with degree $5$ over $\gf_{2^n}$ using the well known Hasse-Weil bound. Besides, we consider $2$-to-$1$ mappings with few terms, mainly trinomials and quadrinomials. Using the multivariate method and the resultant of two polynomials, we present  two classes of $2$-to-$1$ trinomials, which explain all the examples of $2$-to-$1$ trinomials of the form $x^k+\beta x^{\ell} + \alpha x\in\gf_{{2^n}}[x]$ over $\gf_{{2^n}}$ with $n\le 7$,  and derive twelve classes of  $2$-to-$1$  quadrinomials with trivial coefficients over $\gf_{2^n}$.
\end{abstract}

   \begin{IEEEkeywords}
	 $2$-to-$1$ mapping, Polynomial of low degree, Trinomial, Quadrinomial, Symmetric cryptography.
\end{IEEEkeywords}

\section{Introduction}
Let $\gf_{2^n}$ be the finite field of order $2^n$ and $f$ be a mapping from $\gf_{2^n}$ to itself. $f$ is said to be a $2$-to-$1$ mapping if $\# f^{-1}(a) \in \{ 0,2\},$ for any $a\in\gf_{2^n}$.  $2$-to-$1$ mappings over finite fields in even characteristic have wide applications in symmetric cryptography, in particular in the construction of APN functions, bent functions, semi-bent functions and so on. For example, $2$-to-$1$ mappings over finite fields with characteristic $2$ allow to construct bent Boolean functions in bivariate representation from the so-called class $\mathcal{H}$ introduced by Carlet and Mesnager \cite{CM2011}.  In addition, $2$-to-$1$ mappings over $\gf_{2^n}$ can also determine semi-bent Boolean functions in bivariate representation from the Maiorana-MaFarland class \cite{MQ2019}. For more applications of $2$-to-$1$ mappings over finite fields, we refer to \cite[Section 6]{MQ2019}.

 Very recently, Mesnager and Qu \cite{MQ2019} provided a systematic study of $2$-to-$1$ mappings over finite fields with arbitrary characteristic including characterizations by the Walsh transform, several constructions (an AGW-like criterion, those from permutation polynomials or linear translators), some classical classes of polynomials (linearized polynomials, monomials, low degree polynomials, etc) and many explicit applications of $2$-to-$1$ mappings. In this paper\footnote{ Some parts of the present paper have been accepted in the proceedings of the conference IWSDA 2019, mainly the determination of $2$-to-$1$ mappings with degree $5$ and two classes of $2$-to-$1$ trinomials and four classes of $2$-to-$1$  quadrinomials. However, due to the limit of length, only sketch  of the proofs have been included in the proceedings paper.} we push  further the study  initiated in \cite{MQ2019} and we focus on even characteristic (binary case being the most interesting for applications). More specifically, in \cite{MQ2019}, the authors determined completely  the $2$-to-$1$ mappings of degree at most 4 over finite fields. In this paper, we firstly consider $2$-to-$1$ polynomials of degree $5$ over $\gf_{2^n}$. Our method is based on the Hasse-Weil bound (see e.g. \cite{Houbook2018,S1993}), which has been used recently in the study of permutation polynomials (see e.g. \cite{Bartoli2018,Hou2018}). Next, we focus on $2$-to-$1$ mappings with few terms. The characterization of $2$-to-$1$ monomials is trivial. The authors \cite{MQ2019} presented four classes of $2$-to-$1$ binomials of the form $x^k+x^{\ell}$ thanks to the known results on hyperoval sets (see Proposition \ref{2-1-binomials}). In the present paper, we investigate by  MAGMA all $2$-to-$1$ binomials of the form $f(x) = x^{k}+ \alpha x^{\ell} $ on $\gf_{{2^n}}$, where $k\ge \ell$, $\alpha\in\gf_{{2^n}}^{*}$ and $n\le 7$. It turns out that for $n$ odd, up to the o-equivalence (see Definition \ref{o-equivalence}), these experiment examples can be explained by the four classes of $2$-to-$1$ binomials given in Proposition \ref{2-1-binomials}. For the case when $n$ is even, there is only one class of $2$-to-$1$ binomials that is $x^{2^n-2}+x$ over $\gf_{{2^n}}$.  For $2$-to-$1$ trinomials and quadrinomials, we derive, up to the QM-equivalence (see Definition \ref{m-equivalence}),  two classes of $2$-to-$1$ trinomials of the form $x^k+\beta x^{\ell} + \alpha x\in\gf_{{2^n}}[x]$, which explain all the experiment examples of $2$-to-$1$ trinomials  over  $\gf_{{2^n}}$ with $n\le 7$. We also present twelve classes of  $2$-to-$1$ quadrinomials of the form $x^k+x^{\ell} +x^d+x$ over $\gf_{2^n}$.  Our method in this part uses the multivariate method introduced by Dobbertin \cite{Dobbertin2002} and the key point is to determine the number of solutions of equations with a high degree. In addition, we shall use an important tool (which is the resultant of two polynomials) to treat the case of quadrinomials.

The remainder of this paper is organized as follows. Section 2 introduces some basic notions which will be used in the manuscript.  Based on the Hasse-Weil bound, Section 3 determines completely the $2$-to-$1$ mappings with degree $5$ over $\gf_{2^n}$. In Section 4, we consider $2$-to-$1$ polynomials with few terms, including  mainly experiment examples obtained by MAGMA and two classes of $2$-to-$1$ trinomials over $\gf_{{2^n}}$.  In Section 5, we present twelve classes of $2$-to-$1$  quadrinomials over $\gf_{2^n}$.  Finally, Section 6 is a conclusion.
 Throughout this paper, for any $m \mid n$, we assume $\tr_{m}^{n}$ denotes the trace function from $\gf_{{2^n}}$ to $\gf_{2^m}$, i.e., $\tr_m^n(x) = x+x^{2^m}+\cdots+x^{2^{\left(\frac{n}{m}-1\right)m}}$ for any $x\in\gf_{{2^n}}$. Particularly, when $m=1$, we use $\tr_{2^n}$ to denote the absolute trace function over $\gf_{2^{n}}$, i.e., $\tr_{2^{n}}(x) = x+x^2+\cdots+x^{2^{n-1}}$ for any $x\in\gf_{2^{n}}$.  The algebraic closure of $\gf_{2^{n}}$ is denoted by $\overline{\gf}_{2^n}$. For any sets $E$, $\# E$ denotes the cardinality of $E$.

\section{Preliminaries}

In this section, we introduce some basic notions on the Hasse-Weil bound as well as some known results  concerning the solutions of equations with low degree (quadratic, cubic and quartic). We also recall  the resultant of two polynomials which will be useful in our subsequent proofs. Finally, we introduce and recall two equivalences to study $2$-to-$1$ polynomials.

\subsection{The Hasse-Weil bound}

In this subsection, we recall some well known results on algebraic curves and algebraic function fields, mainly the Hasse-Weil bound.  These classical results can be found in most of the textbooks on algebraic curves and algebraic function fields.

\begin{Lemma}
\cite[Hasse-Weil bound]{Houbook2018,S1993}
\label{Hasse-weil}
Let $G(X,Y)$ be an absolutely irreducible polynomial in $\gf_{q}[X,Y]$ of degree $d$ and let $\# V_{\gf_{q}^2}(G)$ be the number of zeros of $G$. Then 
$$ \left| \# V_{\gf_{q}^2}(G)-q \right| \le 2gq^{1/2}, $$
where $g$ is the genus of the function field $\gf_{2^n}(\mathbb{X}, \mathbb{Y} )/ \gf_{2^n}$ and $ \mathbb{X}, \mathbb{Y} $ are transcendentals over $\gf_{2^n}$ with $G(\mathbb{X}, \mathbb{Y} ) = 0$. 	
\end{Lemma}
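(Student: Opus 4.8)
The plan is to derive the inequality from the Riemann hypothesis for function fields, applied to the smooth projective model of the curve cut out by $G$. Since $G$ is \emph{absolutely} irreducible, the function field $F=\gf_q(\mathbb{X},\mathbb{Y})$ with $G(\mathbb{X},\mathbb{Y})=0$ is a genuine one-variable function field having $\gf_q$ as its exact field of constants, and its genus is the $g$ appearing in the statement. The quantity the theory controls directly is not $\#V_{\gf_q^2}(G)$ but the number $N_1$ of places of degree one of $F$, i.e.\ the $\gf_q$-rational points of the nonsingular model. So the first step is a reduction: I would pass to this model, count its rational points, and then recover the affine count by a routine accounting of the finitely many places lying over the points at infinity and over the singularities of the plane model, checking that the resulting discrepancy is absorbed into the stated bound.

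The main engine is the zeta function $Z(t)=\exp\!\big(\sum_{r\ge 1} N_r\,t^r/r\big)$, where $N_r$ is the number of degree-one places of the constant-field extension $F\cdot\gf_{q^r}$. Using the Riemann--Roch theorem one proves that $Z(t)$ is rational of the form $Z(t)=L(t)/\big((1-t)(1-qt)\big)$, where $L(t)=\prod_{i=1}^{2g}(1-\alpha_i t)\in\Z[t]$ has degree $2g$, and that a functional equation (coming from Riemann--Roch / Serre duality) pairs each reciprocal root $\alpha_i$ with $q/\alpha_i$. Comparing the logarithmic derivative of $Z$ with its defining series then yields the clean identity $N_1=q+1-\sum_{i=1}^{2g}\alpha_i$.

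The crux --- and the step I expect to be by far the hardest --- is the Riemann hypothesis for function fields, namely $|\alpha_i|=\sqrt q$ for every $i$; everything else is essentially formal once this is in hand. I would follow the elementary Stepanov--Bombieri route in preference to Weil's original argument. Here one uses Riemann--Roch to manufacture an auxiliary function on the model with a zero of high prescribed order at each rational point, and a dimension count yields an upper estimate $N_r\le q^r+c\,q^{r/2}$, with $c$ depending only on $g$, valid for all $r\ge 1$. Substituting these into $N_r=q^r+1-\sum_i\alpha_i^r$ forces $|\alpha_i|\le\sqrt q$ for all $i$, and the functional equation $\alpha_i\mapsto q/\alpha_i$ upgrades this to equality. (Weil's alternative deduces the same bound from the positive-definiteness of the Rosati pairing on the Jacobian, equivalently the Castelnuovo--Severi / Hodge index inequality on the product surface.) Granting $|\alpha_i|=\sqrt q$, the triangle inequality applied to $N_1=q+1-\sum_i\alpha_i$ gives $|N_1-(q+1)|\le 2g\sqrt q$, and the affine reduction of the first step then delivers $|\#V_{\gf_q^2}(G)-q|\le 2g q^{1/2}$, as claimed.
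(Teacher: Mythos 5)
The paper does not prove this lemma at all: it is quoted as a classical result from \cite{Houbook2018,S1993}, so there is no internal argument to compare yours against. Your outline faithfully reproduces the standard proof found in those very sources --- rationality of the zeta function via Riemann--Roch, the functional equation pairing each reciprocal root $\alpha_i$ with $q/\alpha_i$, the Riemann hypothesis $|\alpha_i|=\sqrt{q}$ via the Stepanov--Bombieri method (this is exactly the route Stichtenoth takes), and then $N_1=q+1-\sum_{i=1}^{2g}\alpha_i$ plus the triangle inequality. You are also right to stress that absolute irreducibility is what guarantees $\gf_q$ is the exact constant field, which is needed for the zeta-function formalism. As a reconstruction of the cited proof, the sketch is sound.

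One step deserves a correction, though: your claim that the discrepancy between the affine count $\#V_{\gf_q^2}(G)$ and the degree-one place count $N_1$ of the smooth model is ``absorbed into the stated bound'' cannot be literally true. Take $G=XY-1$: it is absolutely irreducible of degree $2$ with $g=0$, yet $\#V_{\gf_q^2}(G)=q-1$, so $\left|\#V_{\gf_q^2}(G)-q\right|=1>2g\sqrt{q}=0$. The honest affine statement carries an extra error term depending only on $d$ (at most $d$ places over the line at infinity, plus a correction at the singular points of the plane model, bounded e.g.\ by $(d-1)(d-2)/2$), i.e.\ something of the shape $\left|\#V_{\gf_q^2}(G)-q\right|\le 2g\sqrt{q}+O(d^2)$. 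The lemma as printed in the paper has the same imprecision, so you have matched the statement rather than introduced the defect; and in the paper's only application (where $d=14$ and one compares $2^n+1+22\cdot 2^{n/2}$ against $2(2^n-9)$ for $n>9$) the slack comfortably absorbs any such $O(d^2)$ term, so nothing downstream breaks. But in your write-up you should either prove the corrected inequality with the explicit $d$-dependent term, or restrict the clean bound to the projective nonsingular count $N_1$, rather than assert the affine reduction is free.
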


Let $F/K$ be a function field and $K$ be perfect. Let $g$ denote the genus of $F/K$. Then we have the following upper bound on  the genus.

\begin{Lemma}
	\cite{Houbook2018,S1993}
	\label{genus-bound}
	Let $ F = K(\mathbb{X}, \mathbb{Y} ) $, where  $ \mathbb{X}, \mathbb{Y} $ are transcendentals over $K$. Then the genus of the  function field $F/ K$ satisfies:
	$$ g\le ([F: K(\mathbb{X})] -1 )([F: K(\mathbb{Y})] -1). $$
\end{Lemma}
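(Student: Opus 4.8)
This is the classical Castelnuovo-type genus bound for a function field generated by two elements, and the plan is to deduce it from the Riemann--Roch theorem by producing, for each large parameter $r$, an explicit family of functions whose poles are confined to a divisor of controlled degree. Write $a=[F:K(\mathbb{X})]$ and $b=[F:K(\mathbb{Y})]$, so that the goal is $g\le(a-1)(b-1)$. I would first recall two standard facts about the rational subfields: for a transcendental $z\in F$ one has $\deg(z)_\infty=[F:K(z)]$, so $\deg(\mathbb{X})_\infty=a$ and $\deg(\mathbb{Y})_\infty=b$; and since $F=K(\mathbb{X})(\mathbb{Y})$ with $[F:K(\mathbb{X})]=a$, the powers $1,\mathbb{Y},\dots,\mathbb{Y}^{a-1}$ form a basis of $F$ over $K(\mathbb{X})$.

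Next I would introduce, for each integer $r\ge 0$, the divisor $D_r=r(\mathbb{X})_\infty+(a-1)(\mathbb{Y})_\infty$ together with the $(r+1)a$ functions $\mathbb{X}^i\mathbb{Y}^j$ with $0\le i\le r$ and $0\le j\le a-1$. Because $(\mathbb{X}^i\mathbb{Y}^j)_\infty\le i(\mathbb{X})_\infty+j(\mathbb{Y})_\infty\le D_r$, each of these functions lies in the Riemann--Roch space $L(D_r)$. They are moreover $K$-linearly independent: reading a relation $\sum_{i,j}c_{ij}\mathbb{X}^i\mathbb{Y}^j=0$ as $\sum_j\bigl(\sum_i c_{ij}\mathbb{X}^i\bigr)\mathbb{Y}^j=0$ and using that $1,\mathbb{Y},\dots,\mathbb{Y}^{a-1}$ are independent over $K(\mathbb{X})$ forces each inner polynomial in $\mathbb{X}$ to vanish, whence all $c_{ij}=0$ since $\mathbb{X}$ is transcendental over $K$. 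This yields the lower bound $\ell(D_r)\ge(r+1)a$.

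Finally I would invoke Riemann--Roch. Since $\deg D_r=ra+(a-1)b$ grows without bound, for all sufficiently large $r$ we have $\deg D_r>2g-2$, so $\ell(D_r)=\deg D_r+1-g$ holds exactly. Combining with the lower bound gives $(r+1)a\le ra+(a-1)b+1-g$; the terms linear in $r$ cancel, leaving $a\le(a-1)b+1-g$, that is, $g\le(a-1)(b-1)$. The one point that needs care, and the only real obstacle, is avoiding circularity when passing to the exact form of Riemann--Roch: one cannot assume $\deg D>2g-2$ directly from the unknown $g$. Letting the free parameter $r$ tend to infinity resolves this cleanly, since $\deg D_r$ is unbounded while $g$ is a fixed finite invariant of $F/K$, and the $r$-dependence then cancels to deliver the stated bound.
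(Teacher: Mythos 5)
Your proposal is correct and is precisely the classical proof of this lemma (Riemann's inequality) found in the references the paper cites for it (the paper itself gives no proof): bounding $\ell\left(r(\mathbb{X})_\infty+(a-1)(\mathbb{Y})_\infty\right)$ from below by the $(r+1)a$ independent monomials $\mathbb{X}^i\mathbb{Y}^j$ and letting $r\to\infty$ in the exact Riemann--Roch formula is the standard route, and your handling of the $\deg D_r>2g-2$ issue is sound since $g$ is a fixed finite invariant. The only tacit hypotheses --- that $F/K$ is a function field of one variable with $K$ its full constant field, so that Riemann--Roch applies --- are exactly what the paper's application secures via the absolute irreducibility of $G$ over $\overline{\gf}_{2^n}$.
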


Given two plane curves $\mathcal{A}$ and $\mathcal{B}$ and a point $P$ on the plane, the \emph{intersection number} $I(P,\mathcal{A}\cap \mathcal{B})$ of $\mathcal{A}$ and $\mathcal{B}$ at the point $P$ is defined by seven axioms. We do not include its precise and long definitions here. For more details, we refer to \cite{HKT2008}.
\begin{Lemma}
 \cite[B\'{e}zout's Theorem]{HKT2008}
	\label{Btheorem}
	Let $\mathcal{A}$ and $\mathcal{B}$ be two projective plane curves over an algebraically closed field $K$, having no component in common. Let $A$ and $B$ be the polynomials associated with $\mathcal{A}$ and $\mathcal{B}$ respectively. Then 
	$$\sum_{P}I(P,\mathcal{A}\cap \mathcal{B})=(\deg A) (\deg B),$$
	where the sum runs over all points in the projective plane $\mathrm{PG}(2,K)$.
\end{Lemma}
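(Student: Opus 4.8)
The plan is to prove B\'ezout's theorem by eliminating one variable via the resultant and matching the resulting root multiplicities with the local intersection numbers. Write $a=\deg A$ and $b=\deg B$. First I would exploit that $K$ is algebraically closed, hence infinite, to normalise the coordinate system: after a projective change of coordinates I may assume that the point $[0:1:0]$ lies on neither curve and on no line joining two distinct points of $\mathcal{A}\cap\mathcal{B}$. Since $\mathcal{A}$ and $\mathcal{B}$ share no component, their common zero set is finite (a sum of intersection numbers over infinitely many points could not be bounded by $ab$), so such a point exists. This choice guarantees that every intersection point lies in the affine chart $Z\neq 0$ and that the projection $[X:Y:Z]\mapsto[X:Z]$ is injective on $\mathcal{A}\cap\mathcal{B}$, so that distinct intersection points have distinct $X/Z$-coordinates.

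Next I would dehomogenise and regard $A$ and $B$ as polynomials in $Y$ with coefficients in $K[X]$. Because $[0:1:0]$ is off both curves, the leading coefficients in $Y$ are nonzero constants, so the resultant $\mathcal{R}(X)=\mathrm{Res}_Y(A,B)$ is a polynomial whose homogenisation has degree exactly $ab$ in $(X,Z)$. Its roots are precisely the $X$-coordinates of the intersection points, and by the injectivity of the projection each root $x_0$ corresponds to a unique point $P\in\mathcal{A}\cap\mathcal{B}$ lying above it. The heart of the argument is then the identity $I(P,\mathcal{A}\cap\mathcal{B})=\dim_K \mathcal{O}_P/(A,B)$, which one extracts from the seven defining axioms of the intersection number, together with the claim that the multiplicity of $x_0$ as a root of $\mathcal{R}$ equals this local dimension. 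Granting both, summing over all roots of $\mathcal{R}$ gives $\sum_{P} I(P,\mathcal{A}\cap\mathcal{B})=\deg\mathcal{R}=ab$, as required.

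The hard part will be the local step: showing that the order of vanishing of the resultant at $x_0$ equals $\dim_K\mathcal{O}_P/(A,B)$ for the unique point $P$ above it. I would pass to the local ring (or the completion) at $P$ and use the multiplicativity and additivity axioms of $I$ to match $\dim_K\mathcal{O}_P/(A,B)$ with the exact power of $(X-x_0)$ dividing $\mathcal{R}$. A purely algebraic alternative that sidesteps resultants is to note that, since $A$ and $B$ have no common factor, they form a regular sequence in the graded ring $R=K[X,Y,Z]$, yielding the Koszul resolution
\[
0\longrightarrow R(-a-b)\longrightarrow R(-a)\oplus R(-b)\longrightarrow R\longrightarrow R/(A,B)\longrightarrow 0 .
\]
Reading off Hilbert functions shows $\dim_K (R/(A,B))_d=ab$ for all large $d$, and identifying this stable value with the sum of the $\dim_K\mathcal{O}_P/(A,B)$ (via the Chinese Remainder decomposition of the Artinian quotient supported on the finitely many intersection points) again delivers the total $ab$. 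Either way, the single genuinely delicate point is the passage between the global algebraic count and the sum of the axiomatically defined local intersection numbers.
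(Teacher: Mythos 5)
The paper does not prove this lemma at all: it is quoted verbatim as a classical result from the cited reference \cite{HKT2008} (the paper even declines to spell out the seven axioms defining $I(P,\mathcal{A}\cap\mathcal{B})$), and it is then used as a black box in the proof of Theorem 3.1. So there is no in-paper proof to compare against; your sketch has to be judged against the standard literature. Your strategy is precisely the classical resultant proof (as in Fulton's \emph{Algebraic Curves} or in \cite{HKT2008} itself): choose the centre of projection $[0:1:0]$ off both curves and off all chords of $\mathcal{A}\cap\mathcal{B}$, observe that $\mathrm{Res}_Y(A,B)$ is homogeneous of degree $ab$ in $(X,Z)$, and match root multiplicities with local intersection numbers; the Koszul/Hilbert-function alternative you mention is also standard and correct.

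Two concrete points, one a genuine flaw and one an acknowledged gap. First, your argument for finiteness of $\mathcal{A}\cap\mathcal{B}$ is circular: you justify it by saying an infinite sum of intersection numbers ``could not be bounded by $ab$'', which presupposes the theorem being proved. The correct elementary argument is independent of the theorem: if the common zero set were infinite, then after the coordinate normalisation the resultant $\mathrm{Res}_Y(A,B)\in K[X,Z]$ would have infinitely many projective roots and hence vanish identically, forcing $A$ and $B$ to share a common factor, contrary to hypothesis. Second, the entire content of the theorem sits in the two claims you explicitly grant: that the axiomatically defined $I(P,\mathcal{A}\cap\mathcal{B})$ equals $\dim_K\mathcal{O}_P/(A,B)$ (this is the uniqueness theorem for the seven axioms, proved by an explicit reduction algorithm, not something one ``extracts'' in a line), and that $\mathrm{ord}_{x_0}\mathrm{Res}_Y(A,B)=\dim_K\mathcal{O}_P/(A,B)$ under your genericity assumption. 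You correctly identify these as the hard part and name viable tools (localisation, Chinese Remainder decomposition of the Artinian quotient), so the proposal is a sound roadmap, but as written it is a reduction of Bézout to its two standard key lemmas rather than a complete proof.
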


\subsection{Solutions of equations with low degree}

In this subsection, we introduce some known lemmas about the solutions of  some equations with low degree (quadratic, cubic, quartic), which will be used the proofs of our results.

\begin{Lemma}
	\cite{LN1997}
	\label{lem_deg_2}
	Let $u,v\in\gf_{2^n}$ and $u\neq0$. Then the quadratic equation $x^2+ux+v=0$ has solutions in $\gf_{2^n}$ if and only if $\tr_{2^n}\left(\frac{v}{u^2}\right)=0$.
\end{Lemma}

\begin{Lemma}
	\cite{SZM1967}
	\label{lem_deg_3}
	Let $a, b \in \gf_{2^n}$, where $b\neq0$. Then the cubic equation $x^3+ax+b=0$ has a unique solution in $ \gf_{2^n}$ if and only if $\tr_{2^n}\left(\frac{a^3}{b^2}+1\right) \neq 0$.
\end{Lemma}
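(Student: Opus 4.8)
The plan is to recast the root-counting problem in Galois-theoretic terms and then extract the stated trace condition from a suitable resolvent. First I would check that $f(x)=x^3+ax+b$ is separable: since the characteristic is $2$ we have $f'(x)=x^2+a=(x+a^{1/2})^2$, and the only possible repeated root $a^{1/2}$ satisfies $f(a^{1/2})=b\neq0$, so $f$ has three distinct roots $r_1,r_2,r_3$ in $\overline{\gf}_{2^n}$. Consequently $f$ has $0$, $1$, or $3$ roots in $\gf_{2^n}$. Since the splitting field is an extension of finite fields, its Galois group $G$ is cyclic and a subgroup of $S_3$, hence one of $\{e\}$, a transposition group $\mathbb{Z}/2$, or $A_3\cong\mathbb{Z}/3$; these correspond respectively to $3$, $1$, and $0$ roots in $\gf_{2^n}$. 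Thus ``exactly one solution'' is equivalent to ``$G\not\subseteq A_3$''. The usual discriminant test is useless here: using $r_1+r_2+r_3=0$ one finds the discriminant equals $b^2$, a perfect square, so I must instead detect $A_3$ through an Artin--Schreier (additive) resolvent.

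The key step is to produce this resolvent explicitly. I would take the two cyclic sums $P=r_1^2r_2+r_2^2r_3+r_3^2r_1$ and $Q=r_1r_2^2+r_2r_3^2+r_3r_1^2$, which are fixed by the $3$-cycles and interchanged by every transposition; hence $\{P,Q\}$ is $G$-stable, and since $Q=P+b$ with $b\in\gf_{2^n}$ one has $P,Q\in\gf_{2^n}$ if and only if $G\subseteq A_3$. Writing $e_1=0$, $e_2=a$, $e_3=b$ for the elementary symmetric functions of the roots, a short symmetric-function computation (conveniently carried out by parametrising the roots as $r_1,r_2,r_1+r_2$) gives $P+Q=b$ and $PQ=a^3+b^2$. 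In particular $P+Q=b\neq0$ forces $P\neq Q$, so $P,Q$ are the two distinct roots of the resolvent quadratic $z^2+bz+(a^3+b^2)=0$ over $\gf_{2^n}$.

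Finally I would apply the quadratic criterion of Lemma \ref{lem_deg_2} to this resolvent: it has its roots in $\gf_{2^n}$ if and only if $\tr_{2^n}((a^3+b^2)/b^2)=\tr_{2^n}(a^3/b^2+1)=0$. Combining this with the Galois dictionary, $G\subseteq A_3$ (equivalently $f$ has $0$ or $3$ roots in $\gf_{2^n}$) precisely when $\tr_{2^n}(a^3/b^2+1)=0$, and therefore $f$ has a unique root in $\gf_{2^n}$ exactly when $\tr_{2^n}(a^3/b^2+1)\neq0$, as claimed. The degenerate case $a=0$, where $f=x^3+b$, is subsumed by the same argument: the resolvent becomes $z^2+bz+b^2=0$, yielding the condition $\tr_{2^n}(1)\neq0$, i.e.\ $n$ odd, which matches the elementary fact that $x\mapsto x^3$ is a bijection of $\gf_{2^n}$ iff $\gcd(3,2^n-1)=1$.

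The main obstacle I anticipate is purely algebraic bookkeeping concentrated in the second paragraph: correctly identifying the additive resolvent $\{P,Q\}$ that tracks $A_3$ in characteristic $2$ (where the multiplicative square-discriminant test degenerates), and verifying the identity $PQ=a^3+b^2$, since the entire trace condition hinges on that single symmetric-function evaluation.
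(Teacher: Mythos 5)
Your proof is correct, but note that the paper does not actually prove this lemma: it is quoted as a known result from Berlekamp--Rumsey--Solomon \cite{SZM1967}, so the comparison is with that classical derivation rather than with any argument internal to the paper. Your two load-bearing identities check out: with $e_1=0$, $e_2=a$, $e_3=b$ one has $P+Q=e_1e_2-3e_3=b$ and $PQ=e_2^3+e_1^3e_3-6e_1e_2e_3+9e_3^2=a^3+b^2$ in characteristic $2$ (quickest via your parametrisation $r_3=r_1+r_2$, which gives $P=r_1^3+r_1^2r_2+r_2^3$ and $Q=r_1^3+r_1r_2^2+r_2^3$, whence $P+Q=r_1r_2(r_1+r_2)=b$). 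The Galois dictionary is also sound: $G$ is cyclic, so $G=S_3$ is excluded and the three possibilities $\{e\}$, a transposition group, $A_3$ yield $3$, $1$, $0$ rational roots respectively; since $P+Q=b\neq0$ forces $P\neq Q$, the condition $P\in\gf_{2^n}$ detects $G\subseteq A_3$ exactly, and Lemma \ref{lem_deg_2} applies legitimately because the middle coefficient $b$ of your resolvent is nonzero.

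The route is genuinely different from the classical one. Berlekamp, Rumsey and Solomon solve the cubic by radicals: substituting $x=u+a/u$ reduces $x^3+ax+b=0$ to the Cardano-type resolvent $t^2+bt+a^3=0$ with $t=u^3$, whose splitting criterion is $\tr_{2^n}\left(a^3/b^2\right)=0$; turning that into a root count requires a further cube-root (cubic residue) analysis depending on $\gcd\left(3,2^n-1\right)$, i.e.\ on the parity of $n$, with the lemma emerging in the form $\tr_{2^n}\left(a^3/b^2\right)\neq\tr_{2^n}(1)$. Your additive $A_3$-resolvent $z^2+bz+(a^3+b^2)$ has the $\tr_{2^n}(1)$ correction built into its constant term, so the stated criterion falls out of a single application of the quadratic trace lemma, uniformly in $n$ and with no case split; indeed the two resolvents are related by a rational shift $z\mapsto t+c$ only when $c^2+bc+b^2=0$ is solvable, i.e.\ precisely when $\tr_{2^n}(1)=0$, which explains the discrepancy. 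The cost of your approach is the symmetric-function bookkeeping for $PQ$, which you correctly identify as the crux and which is verified above; the closing consistency check for $a=0$ is redundant (the general argument never assumes $a\neq0$) but harmless.
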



If $h(x)$ is a quartic polynomial over $\gf_{2^n}$ which factors as a product of two linear factors times an irreducible quadratic, we write $h=(1,1,2)$; if $h(x)$ is a cubic irreducible polynomial over $\gf_{2^n}$, we write $h=(3)$.  In \cite{LW1972}, P. A. Leonard and K. S. Williams characterized the factorization of a quartic polynomial over $\gf_{2^n}$ as follows.

\begin{Lemma}
	\cite{LW1972}
	\label{lem_deg_4}
	Let $f(x)=x^4+a_2x^2+a_1x+a_0$ with $a_i\in\gf_{2^n}$ and $a_0a_1\neq0$. Let $f_1(y)=y^3+a_2y+a_1$ and $r_1,r_2,r_3$ denote roots of $f_1(y)=0$ when they exist in $\gf_{2^n}$. Set $w_i=a_0\frac{r_i^2}{a_1^2}$. Then the factorization of $f(x)$ over $\gf_{2^n}$ is characterized as follows:
	\begin{enumerate}[(1)]
		\item $f=(1,1,1,1)$ if and only if $f_1=(1,1,1)$ and $\tr_{2^n}(w_1)=\tr_{2^n}(w_2)=\tr_{2^n}(w_3)=0$;
		\item $f=(2,2)$ if and only if $f_1=(1,1,1)$ and $\tr_{2^n}(w_1)=0$, $\tr_{2^n}(w_2)=\tr_{2^n}(w_3)=1$;
		\item $f=(1,3)$ if and only if $f_1=(3)$;
		\item $f=(1,1,2)$ if and only if $f_1=(1,2)$ and $\tr_{2^n}(w_1)=0$;
		\item $f=(4)$ if and only if $f_1=(1,2)$ and $\tr_{2^n}(w_1)=1$.
	\end{enumerate} 
\end{Lemma}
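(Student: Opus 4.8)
The plan is to reduce the factorization of the quartic $f$ to a root-finding problem for the resolvent cubic $f_1$, and then to decide the finer splitting behaviour with the quadratic criterion of Lemma \ref{lem_deg_2}. First I would attempt to write $f(x)=(x^2+b_1x+c_1)(x^2+b_2x+c_2)$ with all coefficients in $\gf_{2^n}$. Expanding and matching coefficients, the vanishing of the cubic term forces $b_1+b_2=0$, hence $b_1=b_2=:b$ in characteristic $2$; the remaining relations become $c_1+c_2+b^2=a_2$, $b(c_1+c_2)=a_1$ and $c_1c_2=a_0$. Eliminating $c_1+c_2$ yields $b^3+a_2b+a_1=0$, so $b$ must be a root of $f_1(y)=y^3+a_2y+a_1$; conversely every root $r_i$ of $f_1$ produces the pair of constants determined by $c_1+c_2=a_1/r_i$ and $c_1c_2=a_0$. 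Note that $a_1\neq 0$ gives $f_1(0)=a_1\neq 0$, so every $r_i\neq 0$ and $a_1/r_i$ is well defined.

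Next, for a fixed rational root $r_i$ of $f_1$, the constants $c_1,c_2$ are exactly the roots of $z^2+\frac{a_1}{r_i}z+a_0=0$. By Lemma \ref{lem_deg_2} this quadratic has its roots in $\gf_{2^n}$ if and only if $\tr_{2^n}\!\left(\frac{a_0}{(a_1/r_i)^2}\right)=\tr_{2^n}\!\left(\frac{a_0r_i^2}{a_1^2}\right)=\tr_{2^n}(w_i)=0$. Thus each rational root $r_i$ with $\tr_{2^n}(w_i)=0$ yields a genuine factorization of $f$ over $\gf_{2^n}$ into two quadratics sharing the linear term $r_i$, whereas $\tr_{2^n}(w_i)=1$ means the corresponding pairing is not $\gf_{2^n}$-rational. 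I would also record the identity $w_1+w_2+w_3=\frac{a_0}{a_1^2}(r_1+r_2+r_3)^2=0$, coming from $r_1+r_2+r_3=0$ and the fact that squaring is additive in characteristic $2$; hence the number of indices $i$ with $\tr_{2^n}(w_i)=1$ is always even.

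The proof then splits according to the factorization type of $f_1$, for which the conceptual backbone is the dictionary between the Frobenius $x\mapsto x^{2^n}$ acting on the four roots of $f$ and its induced action on the three pairings of those roots (each pairing $\{\alpha_i,\alpha_j\},\{\alpha_k,\alpha_l\}$ supplies the root $\alpha_i+\alpha_j$ of $f_1$). If $f_1=(3)$, no pairing is rational, so $f$ has no quadratic factor over $\gf_{2^n}$; a Frobenius $4$-cycle would fix one pairing and force a rational root of $f_1$, so that case is excluded and $f=(1,3)$. If $f_1=(1,2)$ there is a single rational root $r_1$: when $\tr_{2^n}(w_1)=0$ the resulting two quadratics are such that one splits further while the other stays irreducible, giving $f=(1,1,2)$, while $\tr_{2^n}(w_1)=1$ leaves $f$ with no rational quadratic factor, i.e. $f=(4)$. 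If $f_1=(1,1,1)$, all three pairings are Frobenius-stable, so by the dictionary $f$ is either $(1,1,1,1)$ or $(2,2)$; by the parity remark either all three $\tr_{2^n}(w_i)=0$, which excludes $(2,2)$ and forces $f=(1,1,1,1)$, or exactly two traces equal $1$, which excludes $(1,1,1,1)$ and forces $f=(2,2)$.

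The step I expect to be the main obstacle is the last one: verifying that the trace pattern on $\{w_1,w_2,w_3\}$ distinguishes $(1,1,1,1)$ from $(2,2)$ (and $(1,1,2)$ from $(4)$) sharply, and that no unlisted type can occur. The clean way to settle it is to compute, for each admissible Frobenius cycle type on the roots of $f$, both the induced factorization type of $f_1$ and the rationality of the relevant $c_j$ (equivalently the value of $\tr_{2^n}(w_i)$ via Lemma \ref{lem_deg_2}); the identity $\sum_i w_i=0$ is precisely what rules out an odd number of trace-$1$ values and thereby pins down the $(1,1,1)$ case. Matching these two pieces of data against the five listed outcomes then closes the argument in both directions.
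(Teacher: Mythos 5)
The paper offers no proof of this lemma: it is quoted directly from Leonard and Williams \cite{LW1972}, so there is no internal argument to compare against. Your proposal is correct and is essentially the classical argument from that source: matching coefficients in $f(x)=(x^2+bx+c_1)(x^2+bx+c_2)$ to show $b$ must be a root of the resolvent $f_1$, applying the trace criterion of Lemma \ref{lem_deg_2} to the quadratic $z^2+\frac{a_1}{r_i}z+a_0$ satisfied by $c_1,c_2$ (which is exactly where $w_i=a_0r_i^2/a_1^2$ comes from), and sorting the five outcomes via the Frobenius action on the three root-pairings, with the parity identity $w_1+w_2+w_3=\frac{a_0}{a_1^2}(r_1+r_2+r_3)^2=0$ correctly ruling out an odd number of trace-$1$ values in the $(1,1,1)$ case. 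The one point you should state explicitly is that $a_1\neq0$ forces $f'(x)=a_1\neq0$, hence $f$ (and likewise $f_1$) is separable; this is what guarantees that the five listed factorization types are exhaustive, that the three pairings of the four distinct roots correspond bijectively to the three distinct roots of $f_1$, and, in the $(2,2)$ case, that the cross-pairing constants $\alpha_1\alpha_3$, $\alpha_2\alpha_4$ cannot be rational (Frobenius carries $\{\alpha_1,\alpha_3\}$ to $\{\alpha_2,\alpha_4\}$), yielding the pattern $\tr_{2^n}(w_1)=0$, $\tr_{2^n}(w_2)=\tr_{2^n}(w_3)=1$ that your sketch asserts.
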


\subsection{Resultant of polynomials}

In this subsection, we  recall some basic facts about the resultant of two polynomials. Given two non-zero polynomials of degrees $n$ and $m$ respectively
$$ u(x) = a_mx^m+a_{m-1}x^{m-1}+\cdots+a_0, ~~~ v(x) = b_nx^n+b_{n-1}x^{n-1}+\cdots+b_0 $$
with $a_m\neq0, b_n\neq0$ and coefficients in a field or in an integral domain $\mathbb{R}$, their resultant $\mathrm{Res}(u,v)\in \mathbb{R}$ is the determinant of the following matrix:
$$  \begin{pmatrix}
a_m & a_{m-1} & \cdots & \cdots & \cdots & \cdots & a_0 & 0 & 0 & 0 \\
0 & a_m & a_{m-1} & \cdots & \cdots & \cdots & \cdots &  a_0 & 0 & 0 \\
&  & \ddots & \ddots & & & &  &   \ddots &      \\
0 & 0 & 0 & a_m & a_{m-1} & \cdots & \cdots & \cdots & \cdots & a_0 \\
b_n & b_{n-1} & \cdots & \cdots & b_0 & 0 & 0 & 0 & 0 & 0\\
& \ddots & \ddots &  & & \ddots & &  &   &      \\ 
0   & 0 & 0 & 0 & 0 &  b_n & b_{n-1} &  \cdots &   \cdots &  b_0    \\
\end{pmatrix}.
$$
For a field $K$ and two polynomials $F(x,y), G(x,y) \in K[x,y]$, we use $ \mathrm{Res}(F,G,y)$ to denote the resultant  of $F$ and $G$ with respect to $y$. It is the resultant of $F$ and $G$ when considered as polynomials in the single variable $y$. In this case, $ \mathrm{Res}(F,G,y)\in K[x]$ belongs in the ideal generated by $F$ and $G$, and thus any $a,b$ satisfying $F(a,b)=0$ and $G(a,b)=0$ is such that $\mathrm{Res}(F,G,y)(a)=0$ (see \cite{LN1997}). 

\subsection{O-equivalence and QM-equivalence between two 2-to-1 mappings}

A permutation polynomial $f$ over $\gf_{{2^n}}$ is called an oval-polynomial (for short o-polynomial) if $f(0)=0$, and for each $s\in\gf_{{2^n}}$, $$f_s(x)=\frac{f(x+s)+f(s)}{x}$$
is a permutation polynomial.  It is well known that there is a close relation between o-polynomials  and $2$-to-$1$ mappings as follows.

\begin{Lemma}
	A polynomial $f$ from $\gf_{{2^n}}$ to itself with $f(0)=0$ is an o-polynomial if and only if $f_a(x) := f(x)+ax$ is $2$-to-$1$ for every $a\in\gf_{{2^n}}^{*}$. 
\end{Lemma}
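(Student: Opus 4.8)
The plan is to translate both conditions into statements about the auxiliary maps $f_s(x)=\frac{f(x+s)+f(s)}{x}$ through one computation. For $a\in\gf_{2^n}^{*}$ and $x\neq y$, since the characteristic is $2$,
\[
f(x)+ax=f(y)+ay\iff f(x)+f(y)=a(x+y)\iff \frac{f(x)+f(y)}{x+y}=a;
\]
writing $\delta=x+y\neq0$ and $y=x+\delta$ this reads $f_x(\delta)=a$. Thus the fibres of $f_a(x):=f(x)+ax$ are exactly governed by the level sets of the maps $f_x$. I would also record two facts used throughout: $f_x(0)=f'(x)$, the formal derivative (the numerator of $f_x$ has linear coefficient $f'(x)$), and $f_x(\delta)=0$ for some $\delta\neq0$ iff $f(x+\delta)=f(x)$, i.e.\ iff $f$ fails to be injective.

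For the forward implication, assume $f$ is an o-polynomial, so $f$ is a permutation and every $f_s$ is a permutation of $\gf_{2^n}$; fix $a\neq0$. Injectivity of $f_x$ shows no fibre of $f_a$ exceeds size $2$, since three elements $x,x+\delta_1,x+\delta_2$ in one fibre would force $f_x(\delta_1)=f_x(\delta_2)=a$ with $\delta_1\neq\delta_2$. To rule out singleton fibres I use the permutation property of $f$ to locate the zero of $f_x$: surjectivity of $f_x$ produces a unique $\delta^{*}$ with $f_x(\delta^{*})=0$, while injectivity of $f$ forbids $\delta^{*}\neq0$, so $\delta^{*}=0$ and $f'(x)=f_x(0)=0$. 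Consequently $a\neq0=f'(x)$ lies in $f_x(\gf_{2^n}^{*})$, so $x$ has a unique partner and its fibre has size exactly $2$. Hence every $f_a$ is $2$-to-$1$.

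For the converse, assume every $f_a$ with $a\neq0$ is $2$-to-$1$. First I would prove $f$ is a permutation by a global count. Each ordered pair $(x,y)$ with $x\neq y$ has a unique slope $\frac{f(x)+f(y)}{x+y}\in\gf_{2^n}$, so $\sum_{a\in\gf_{2^n}}N(a)=2^n(2^n-1)$, where $N(a)$ counts the pairs of slope $a$. A $2$-to-$1$ map $f_a$ has $2^{n-1}$ fibres of size $2$, contributing $N(a)=2^n$; summing over the $2^n-1$ nonzero values of $a$ already exhausts $2^n(2^n-1)$, which forces $N(0)=0$. Thus no two distinct elements share an $f$-value and $f$ is a permutation. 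Next, fix $s$: a coincidence $f_s(\delta_1)=f_s(\delta_2)=c$ with $\delta_1\neq\delta_2$ nonzero would (with $c\neq0$, as $f$ is a permutation) yield a fibre of size $\geq3$ for $f_c$, contradicting the hypothesis. Hence each $f_s$ is injective on $\gf_{2^n}^{*}$, and since its values there avoid $0$, its image is exactly $\gf_{2^n}^{*}$.

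It remains to upgrade this to the statement that $f_s$ is a permutation of all of $\gf_{2^n}$, equivalently that $f_s(0)=f'(s)=0$ for every $s$, so that the single missing value $0$ is supplied at $\delta=0$. I expect this step to be the main obstacle: the $2$-to-$1$ hypothesis directly controls only the nonzero slopes from each point, so it pins down $f_s$ on $\gf_{2^n}^{*}$ but says nothing a priori about $f_s(0)=f'(s)$. The route I would take is to exploit the rigidity of having \emph{all} the maps $f_s$ simultaneously bijective for a single polynomial $f$: this is precisely the hyperoval/o-polynomial condition, and it compels every exponent occurring in $f$ to be even, i.e.\ $f'\equiv0$ (as can be checked directly in small cases, where the permutation property of $f$ kills exactly the odd-degree part). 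Once $f'\equiv0$ is established, the bijection of $f_s$ on $\gf_{2^n}^{*}$ together with $f_s(0)=0$ gives a genuine permutation of $\gf_{2^n}$, completing the equivalence. The slope dictionary of the first paragraph and the counting argument are routine; this final reconciliation between ``bijective on the nonzero elements'' and ``permutation of $\gf_{2^n}$'' is the delicate point.
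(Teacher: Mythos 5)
Your forward direction is correct and complete, and the first two steps of your converse are also sound: the slope dictionary, the count $\sum_a N(a)=2^n(2^n-1)$ with $N(a)=2^n$ for each $a\neq 0$ forcing $N(0)=0$ (so $f$ is a permutation), and the injectivity of each $f_s$ on $\gf_{2^n}^{*}$ are all valid. (For reference, the paper states this lemma without proof, as a known fact, so there is no in-paper argument to compare against.) The genuine gap is exactly where you flagged it: you never actually prove $f_s(0)=f'(s)=0$, and the route you sketch is circular. You appeal to the fact that o-polynomials have only even exponents, but that is a property of the object you are trying to show $f$ is; at this stage of the converse you may not invoke it, and ``as can be checked directly in small cases'' is not an argument. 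What you know at that point --- that $f_s$ maps $\gf_{2^n}^{*}$ bijectively onto $\gf_{2^n}^{*}$ --- constrains only the values of $f_s$ at nonzero arguments and by itself says nothing about the constant term $f'(s)$ of the polynomial $f_s$. Note also that the step is not vacuous: the lemma is genuinely false for $n=1$, where $f(x)=x$ satisfies the $2$-to-$1$ condition ($f_1(x)=f(x)+x\equiv 0$) but $f_s\equiv 1$ is not a permutation, so any correct proof must use $n\ge 2$ somewhere.

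The missing step can be closed by a short power-sum computation very much in the spirit of your counting argument. Working with the reduced representative, $\deg f\le 2^n-1$, so the polynomial $f_s(x)=\bigl(f(x+s)+f(s)\bigr)/x$ has degree at most $2^n-2$; since $\sum_{x\in\gf_{2^n}}x^k=0$ for all $0\le k\le 2^n-2$, it follows that $\sum_{\delta\in\gf_{2^n}}f_s(\delta)=0$. On the other hand, you have already shown that $f_s$ restricted to $\gf_{2^n}^{*}$ is a bijection onto $\gf_{2^n}^{*}$, so for $n\ge 2$,
\begin{equation*}
f_s(0)\;=\;\sum_{\delta\in\gf_{2^n}^{*}}f_s(\delta)\;=\;\sum_{c\in\gf_{2^n}^{*}}c\;=\;0,
\end{equation*}
whence $f'(s)=0$ and $f_s$ is a permutation of all of $\gf_{2^n}$, completing the converse. (The failure at $n=1$ is visible here: $\sum_{c\in\gf_2^{*}}c=1\neq 0$.) With this replacement for your final paragraph, the proof is correct; without it, the converse is incomplete.
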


Each o-polynomial defines an (hyper-)oval. And  each hyperoval defines $2^n +2$ o-polynomials.
Two o-polynomials are called (projectively) o-equivalent\footnote{Note that  for a special type of bent functions, so-called Niho bent functions there is a
general equivalence relation called o-equivalence which is induced from the
equivalence of o-polynomials.}, if they define equivalent
hyperovals. Hyperovals being called equivalent if they are mapped to each other by collineations (i.e. permutations mapping lines to lines).  For example, for the following o-monomials $$x^k, x^{\frac{1}{k}}, x^{1-k}, x^{\frac{1}{1-k}}, x^{\frac{k}{k-1}}, x^{\frac{k-1}{k}}, $$ are o-equivalent to each other. We can naturally define an equivalent relation between $2$-to-$1$ (polynomial) mappings  as follows.

\begin{Def}
	\label{o-equivalence}
	Let $f_a(x) = f(x)+ax$ and $g_a(x) =g(x)+ax $ be $2$-to-$1$ (polynomial) mappings where $a\in\gf_{{2^n}}^{*}$. Then $f_a$ and $g_a$ are said to be o-equivalent if the corresponding o-polynomials $f$ and $g$ are equivalent.
\end{Def}

The o-equivalence between two 2-to-1 mappings will play an important role in our classification of $2$-to-$1$ binomials. Namely, the o-equivalence plays a major role in explaining the experiment results of $2$-to-$1$ binomials.
In particular,  under the o-equivalence, Proposition 4.2 can explain all experiment results on $2$-to-$1$ binomials over $\gf_{2^n}$ with $n\le7$ and odd. 

Next, we recall another equivalence between polynomials. Let $f(x), g(x), h(x)\in\gf_{q}[x]$ with $f(x)=g(h(x))$.  As we all know, if $h(x)$ permutes $\gf_q$, then  $f$  is a $2$-to-$1$ polynomial over $\gf_q$ if and only if $g(x)$ is $2$-to-$1$ polynomial over $\gf_q$.

 It is trivial that a monomial $x^d$ is a permutation polynomial over $\gf_q$ if and only if $\mathrm{gcd}(d, q-1)=1.$  Let $f$ and $g$ be two polynomials in $\gf_q[x]$ satisfying that $f(x) = a g(bx^d)$, where $a,b\in \gf_q^\ast$ and $1 \le d \le q-1$ is an integer such that $\mathrm{gcd}(d,q-1)=1$. Then $f$ is $2$-to-$1$ if and only if so is $g$.  Consequently,  we recall the following notion of \emph{QM-equivalence} \cite{WYD2017,LQC2017}.
\begin{Def}
	\cite{WYD2017}
	\label{m-equivalence}
	Two polynomials $f(x)$ and $g(x)$ in $\mathbb{F}_{q}[x]$ are said to be quasi-multiplicative (QM) equivalence if there exists an integer $1 \le d \le q-1$ with $\gcd(d,q-1)=1$ and $f(x)=a g(b  x^d)$ for some nonzero elements $a, b$ in $\gf_q$.
\end{Def}

Using the QM-equivalence, we can simplify the experiment results on $2$-to-$1$ trinomials (resp. quadrinomials). As indicated in Table II, there are for example only two $2$-to-$1$ trinomials of the form $x^k+\beta x^{\ell}+\alpha x$ over $\gf_{2^6}$ up to the QM-equivalence. In addition, we can avoid getting equivalent $2$-to-$1$ polynomials having the same terms. 

\section{ $2$-to-$1$ mappings with degree $5$ over $\gf_{2^n}$}

In this section, we completely determine the $2$-to-$1$ mappings with degree $5$ over $\gf_{2^n}$. Clearly, for any polynomials $f(x)\in\gf_{p^n}[x]$ with degree $d$, is $2$-to-$1$ over $\gf_{p^n}$ if and only if so is $g(x)=af(x+b)+c$, where $a\in\gf_{p^n}^{*}$ and $b,c\in\gf_{p^n}$. Hence, it is suffisant  to consider $f(x)\in\gf_{p^n}[x]$ with \emph{normalized form}, i.e., $f(x)$ is monic, $f(0)=0$, and when $\gcd(p,d)=1$, the coefficient of $x^{d-1}$ is $0$. That is to say, in this part, we suffice to consider $f(x) = x^5+a_3x^3+a_2x^2+a_1x,$ where $a_3,a_2,a_1\in\gf_{2^n}$ since $\gcd(2,5)=1$.

\begin{Th}
	Let $n>9$ and $f(x) = x^5+a_3x^3+a_2x^2+a_1x,$ where $a_3,a_2,a_1\in\gf_{2^n}$. Then $f(x)$ is not $2$-to-$1$ over $\gf_{2^n}$.
\end{Th}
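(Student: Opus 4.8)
The plan is to assume that $f$ is $2$-to-$1$ and to manufacture a \emph{triple collision}, i.e.\ three distinct $x,y,z\in\gf_{2^n}$ with $f(x)=f(y)=f(z)$, which is impossible for a $2$-to-$1$ map. First I would encode collisions through the difference substitution $y=x+u$: since
\[
\frac{f(x)+f(x+u)}{u}=\Phi(x,u):=x^4+a_3x^2+(u^3+a_3u)\,x+\bigl(u^4+a_3u^2+a_2u+a_1\bigr),
\]
the ordered collision pairs are exactly the $\gf_{2^n}$-points of the plane quartic $\Phi(x,u)=0$ with $u\neq0$, the partner of $x$ being $x+u$. In this language $f$ is $2$-to-$1$ if and only if every $x$ has a unique nonzero collision difference $u$; a triple collision is precisely an $x$ lying on two distinct differences $u_1\neq u_2$ (equivalently, a fibre of the projection $\Phi\to\mathbb{A}^1_x$ containing $\ge 2$ rational points off the degenerate locus $u=0$).

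Before chasing the triple collision it is worth recording a structural fact that both guides and simplifies the argument. For fixed $u\neq0$, $\Phi(\cdot,u)$ is a \emph{depressed} quartic, so Leonard--Williams (Lemma~\ref{lem_deg_4}) applies with resolvent cubic $f_1(w)=w^3+a_3w+(u^3+a_3u)$. One checks that $w=u$ is always a root, whence $f_1=(w+u)\bigl(w^2+uw+(u^2+a_3)\bigr)$ never has type $(3)$; consequently $\Phi(\cdot,u)$ is never of type $(1,3)$ and has an \emph{even} number $0,2$ or $4$ of roots in $\gf_{2^n}$. Thus collisions with a given difference always pair up (as they must, the root set being stable under $x\mapsto x+u$), whether the residual quadratic $w^2+uw+(u^2+a_3)$ splits being governed, via Lemma~\ref{lem_deg_2}, by $\tr_{2^n}(a_3/u^2)$. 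This confirms that the mere point count of $\Phi=0$, which Hasse--Weil pins at $\approx 2^n$, is already compatible with $2$-to-$1$-ness, so the contradiction must come from the \emph{distribution} of the points among the fibres, not from their total number.

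Next I would eliminate the common point to turn a triple collision into a single plane curve. Imposing $\Phi(x,u_1)=\Phi(x,u_2)=0$ and subtracting gives, for $u_1\neq u_2$,
\[
(u_1^2+u_1u_2+u_2^2+a_3)\,x=(u_1+u_2)^3+a_3(u_1+u_2)+a_2,
\]
which solves for $x$ as a rational function of $(u_1,u_2)$; substituting back into $\Phi(x,u_1)=0$ produces a plane curve $\mathcal{T}$ whose nontrivial $\gf_{2^n}$-points $(u_1,u_2)$, with $u_1,u_2,u_1+u_2$ all nonzero and distinct, correspond to triple collisions through $x,\ x+u_1,\ x+u_2$. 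Equivalently, writing $e_1=x+y+z$ and $e_2=xy+yz+zx$, triple collisions live on the quartic $e_2^2+e_2(a_3+e_1^2)+(e_1^4+a_3e_1^2+a_2e_1+a_1)=0$ as soon as the associated cubic $T^3+e_1T^2+e_2T+e_3$ (with $e_3=a_2+a_3e_1+e_1^3$) splits over $\gf_{2^n}$; passing to the splitting cover of this quartic yields the same curve $\mathcal{T}$.

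The hard part will be to show that $\mathcal{T}$ has an absolutely irreducible component defined over $\gf_{2^n}$, uniformly in $(a_1,a_2,a_3)$. I would bound its genus through the degree-in-each-variable estimate (Lemma~\ref{genus-bound}), equivalently through Riemann--Hurwitz for the degree-$6$ splitting cover, obtaining $g$ of size roughly $15$, and use B\'ezout's theorem (Lemma~\ref{Btheorem}) to exclude spurious factorizations by controlling how any hypothetical component could meet the trivial loci $u_1=u_2$, $u_i=0$, and the branch locus. The analysis necessarily splits into the generic case, where $\mathcal{T}$ is absolutely irreducible, and finitely many degenerate coefficient families where $\Phi=0$ itself becomes reducible, e.g.\ acquires a factor linear in the partner variable, the analogue of the fixed-point-free involution $x\mapsto x+1$ attached to $x^2+x$ in degree $2$; each degenerate family must be dispatched separately, showing that the complementary lower-degree factor still forces extra rational collisions for large $n$. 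Granting absolute irreducibility, the Hasse--Weil bound (Lemma~\ref{Hasse-weil}) gives $\#\mathcal{T}(\gf_{2^n})\ge 2^n+1-2g\,2^{n/2}-c$ for a bounded correction $c$, which exceeds the number of trivial points as soon as $2^{n/2}>2g\approx 30$, i.e.\ precisely for $n>9$. Any resulting nontrivial point produces distinct $x,y,z\in\gf_{2^n}$ with $f(x)=f(y)=f(z)$, contradicting that $f$ is $2$-to-$1$ and completing the proof.
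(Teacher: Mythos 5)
Your encoding of collisions and triple collisions is correct: the symmetric-function computation checks out (matching coefficients in $f(T)+c=(T^3+e_1T^2+e_2T+e_3)(T^2+e_1T+b_0)$ does give $e_3=a_2+a_3e_1+e_1^3$ and the quartic $e_2^2+e_2(a_3+e_1^2)+e_1^4+a_3e_1^2+a_2e_1+a_1=0$), and your observation that $w=u$ is always a root of the resolvent cubic of $\Phi(\cdot,u)$, so that collisions with a fixed difference pair up, is a genuinely good structural remark. Note also that your strategy is the mirror image of the paper's: the paper does not hunt for triple collisions at all, but instead extracts from Lemmas \ref{lem_deg_3} and \ref{lem_deg_4} a trace identity that $2$-to-$1$-ness forces to hold for \emph{almost all} $a$, encodes it in the Artin--Schreier curve $G(X,Y)=L_2^2(Y)(X^2+X)+L_1(Y)$, and uses Hasse--Weil as an \emph{upper} bound against the $\ge 2(2^n-9)$ points the identity guarantees; you instead want Hasse--Weil as a \emph{lower} bound to manufacture a rational point on a curve that $2$-to-$1$-ness forbids from having any.

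The genuine gap is the step you yourself flag as ``the hard part'': showing that $\mathcal{T}$ (more precisely, the splitting cover on which the cubic $T^3+e_1T^2+e_2T+e_3$ actually factors over $\gf_{2^n}$ with distinct roots) has an absolutely irreducible component defined over $\gf_{2^n}$, ``uniformly in $(a_1,a_2,a_3)$.'' No such uniform statement can be true, and this is not a technicality: under your standing hypothesis that $f$ is $2$-to-$1$, the cover has \emph{no} nontrivial rational points, so for $n>9$ it \emph{cannot} have an absolutely irreducible $\gf_{2^n}$-component --- the hypothesis forces degeneracy. Moreover, genuine $2$-to-$1$ quintics exist for $3\le n\le 9$ (Table I), so no coefficient-free geometric argument can settle irreducibility; the coefficients must enter. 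The proof therefore lives entirely in the part you defer: one must classify exactly which $(a_1,a_2,a_3)$ make the cover degenerate (reducible into conjugate or lower-degree pieces) and then kill each resulting family by hand. This is precisely what occupies the second half of the paper's proof: reducibility of its degree-$2$ cover forces the polynomial identity $L(Y)L_2(Y)+L(Y)^2=L_1(Y)$, and comparing the thirteen coefficients (\ref{D12})--(\ref{D0}) collapses the possibilities to $a_2=0$ with $a_3=0$ or $a_1=a_3^2\neq 0$, each then eliminated by exhibiting an explicit bad $a$. In your setting the analogous factorization analysis concerns a degree-$6$ cover of a quartic rather than an Artin--Schreier double cover, so the case analysis would be strictly harder, not easier; and the conjugate-component case cannot simply be excluded by B\'ezout as you suggest, since a pair of $\mathrm{Gal}(\overline{\gf}_{2^n}/\gf_{2^n})$-conjugate components is exactly what one expects when all rational points are confined to the trivial loci. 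Until that classification and the family-by-family dispatch are actually carried out, your final sentence restates the theorem rather than proving it.
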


\begin{proof}
	We assume that $f(x)$ is $2$-to-$1$ over $\gf_{2^n}$. According to the definition,  $f(x+a)+f(a)=0$ has exactly two solutions $x=0$  and $x=x_0\in\gf_{2^n}^{*}$ for any $a\in\gf_{2^n}$.  In addition, 	for any $a \in\gf_{2^n}$,  we have
	\begin{eqnarray*}
		&  & f(x+a) + f(a) \\
		& = & x^5+ax^4+a_3x^3+\left(a_3a+a_2\right)x^2+\left(a^4+a_3a^2+a_1\right)x.
	\end{eqnarray*} 
	Thus for any $a\in\gf_{2^n}$,
	\begin{equation}
	\label{case_4_1}
	x^4+ax^3+a_3x^2+\left(a_3a+a_2\right)x+a^4+a_3a^2+a_1=0
	\end{equation}
	has exactly one solution in $\gf_{2^n}^{*}$. 
	
  In the following, we assume that $a\neq0 $ and $a^4+a_3a^2+a_1\neq0$. Let $$F(x)=x^4+ax^3+a_3x^2+\left(a_3a+a_2\right)x+a^4+a_3a^2+a_1$$ and $\alpha^2=a_3+\frac{a_2}{a}.$
	Then $$x^4F\left(\frac{1}{x}+\alpha\right) = \tilde{A_0}x^4 + \tilde{A_2}x^2 + ax + 1,$$ where 
	$\tilde{A_0}=\left(\alpha^2+a^2\right)\left(\alpha^2+a^2+a_3\right)+a_1$ and $\tilde{A_2}=a\alpha + a_3$. We also assume that $\tilde{A_0}\neq 0$. Thus Eq. (\ref{case_4_1}) has exactly one solution in $\gf_{2^n}^{*}$ if and only if
	\begin{equation}
	\label{d4_2}
	x^4+A_2x^2+A_1x+A_0=0
	\end{equation}
does, where $A_2=\frac{\tilde{A_2}}{\tilde{A_0}}, A_1=\frac{a}{\tilde{A_0}},$ $A_0=\frac{1}{\tilde{A_0}}$. Let $f_1(y)=y^3+A_2y+A_1$. According to Lemmas \ref{lem_deg_3} and \ref{lem_deg_4}, if $x^4+A_2x^2+A_1x+A_0=0$ has exactly one solution in $\gf_{2^n}^{*}$, then $f_1 = (3)$ and
	\begin{equation}
	\label{hengdeng}
	\tr_{2^n}\left(\frac{A_2^3}{A_1^2}+1\right)=0
	\end{equation}
	for any  $a\in\gf_{2^n}^{*}$ satisfying  $a^4+a_3a^2+a_1\neq0$ and $\tilde{A_0}\neq 0$. Indeed,
	\begin{eqnarray*}
		\tr_{2^n}\left(\frac{A_2^3}{A_1^2}+1\right)&=&\tr_{2^n}\left(\frac{\left(a_3a^2+a_2a+a_3^2\right)^3}{\left(a_3^2a^2+a_2^2+a^6\right)\left(a_2^2+a^6\right)+a_1^2a^4}+1\right) \\
		&\triangleq& \tr_{2^n}\left(\frac{L_1(a)}{L_2^2(a )}\right),
	\end{eqnarray*}
	where $L_1(a)=a^{12}+a_3^2a^8+a_3^3a^6+a_2a_3^2a^5+\left(a_1^2+a_3^4+a_2^2a_3\right)a^4+a_2^3a^3+a_3^5a^2+a_2a_3^4a+a_2^4+a_3^6$ and $L_2(a)=a^6+a_3a^4+a_1a^2+a_2a_3a+a_2^2$. It should be noted that $\tilde{A_0}\neq 0$ is equivalent to $L_2(a)\neq 0$. 
	
	Assume that $L_1(Y) = Y^{12}+a_3^2Y^8+a_3^3Y^6+a_2a_3^2Y^5+\left(a_1^2+a_3^4+a_2^2a_3\right)Y^4+a_2^3Y^3+a_3^5Y^2+a_2a_3^4Y+a_2^4+a_3^6\in\gf_{2^n}[Y]$ and $L_2(Y) =  Y^6+a_3Y^4+a_1Y^2+a_2a_3Y+a_2^2 \in\gf_{2^n}[Y]$. Let 
	$$G(X,Y) = L_2^2(Y) \left(X^2+X\right)+L_1(Y)$$
	and 
	$$ V_n(G) = \{ (x,y)\in\gf_{2^n}^2 ~|~ G(x,y) = 0  \}.  $$ 
	Then $\deg G = 14$. Together with Eq. (\ref{hengdeng}) and that there exist at most $2$ elements $a\in\gf_{2^n}^{*}$ such that $a^4+a_3a^2+a_1=0$, as well as that at most $6$  elements $a\in\gf_{2^n}^{*}$ such that $L_2(a)=0$, we have 
	\begin{equation}
	\label{V_n}
	\# V_n(G)  \ge 2\left( 2^n-9 \right).
	\end{equation}
	
	If $G(X,Y)$ is  irreducible over $\overline{\gf}_{2^n}$, let $ \mathbb{X}, \mathbb{Y} $ be transcendentals over $\gf_{2^n}$ with $G(\mathbb{X}, \mathbb{Y} ) = 0$. Then by Lemma \ref{genus-bound}, the functional fields $\gf_{2^n}(\mathbb{X}, \mathbb{Y} )/ \gf_{2^n}$ has genus
	$$  g\le ([\gf_{2^n}(\mathbb{X}, \mathbb{Y} ): \gf_{2^n}(\mathbb{X})] -1 )([\gf_{2^n}(\mathbb{X}, \mathbb{Y} ): \gf_{2^n}(\mathbb{Y})] -1) \le (12-1)(2-1)=11. $$
	Then by the Hasse-Weil bound, i.e., Lemma \ref{Hasse-weil}, we have 
	$$  \# V_n(G) \le 2^n+1+2g2^{n/2}\le 2^n+1+ 22\cdot2^{n/2}<  2\left( 2^n-9 \right), $$
	when $n>9$, which is contradictory with (\ref{V_n}). 
	
	Therefore, $G(X,Y)$ is not irreducible over $\overline{\gf}_{2^n}$ and we assume that $G(X,Y) =  sG_1(X,Y)G_2(X,Y),$ where $s\in\gf_{2^n}^{*}$, $G_1, G_2 \in \overline{\gf}_{2^n}[X,Y]$ are irreducible and $\deg_X G_1 = \deg_X G_2 = 1.$ If $G_1\notin \gf_{2^n}[X,Y]$, choose $\sigma\in\mathrm{Aut} (\overline{\gf}_{2^n}/\gf_{2^n})$ such that $\sigma(G_1)\neq G_1$. Then $\sigma(G_1)=G_2$. Assume that $(x,y)\in V_{n}(G)$. Then $(x,y)\in V_{n}(G_1)$ or $V_{n}(G_2)$, say $(x,y)\in V_{n}(G_1).$ Then $(x,y)=(\sigma(x),\sigma(y))\in V_{n}(\sigma (G_1))$. Hence $(x,y)\in V_{n}(G_1)\cap V_{n}(\sigma (G_1))$ and we have 
	$$ V_n(G) \subset V_n(G_1)\cap V_n(\sigma(G_1)). $$
	Thanks to B\'{e}zout's Theorem, i.e., Lemma \ref{Btheorem},
	$$ \#V_n(G)\le \# \left( V_n(G_1) \cap V_n(\sigma(G_1))  \right) \le (\deg G_1)^2 \le 49,$$
	which is also contradictory with (\ref{V_n}). Thus $G_1, G_2\in\gf_{2^n}[X,Y]$. Namely, there exists some $L(Y)\in\gf_{2^n}[Y]$ such that 
	$$X^2+X+\frac{L_1(Y)}{L_2(Y)^2} = \left(X+ \frac{L(Y)}{L_2(Y)} \right)   \left( X+ 1 + \frac{L(Y)}{L_2(Y)} \right). $$
	Hence, 
	\begin{equation}
	\label{LY}
	L(Y)L_2(Y)+L(Y)^2 = L_1(Y).
	\end{equation}
Obviously, the degree of $L(Y)$ is $6$.	Assume that $L(Y) = \ell_0 Y^6+\ell_1Y^5+\ell_2Y^4+\ell_3Y^3+\ell_4Y^2+\ell_5Y+\ell_6\in\gf_{2^n}[Y].$ After comparing the coefficients of Eq. (\ref{LY}), we have 
	\begin{subequations} 
		\small
		\renewcommand\theequation{\theparentequation.\arabic{equation}}     
		\begin{empheq}[left={\empheqlbrace\,}]{align}
		&~D_{12}: 	\ell_0^2+\ell_0=1 \label{D12}  \\
		&~D_{11}:  \ell_1=0  \label{D11}  \\
		&~ D_{10}:  \ell_2+a_3\ell_0 =0  \label{D10} \\
		&~ D_{9}:  \ell_3=0, \label{D9}  \\
		&~D_{8}:  \ell_4+a_3\ell_2+a_1\ell_0+\ell_2^2=a_3^2, \label{D8}  \\
		&~ D_{7}:  \ell_5+a_2a_3\ell_0 = 0, \label{D7}  \\
		&~ D_{6}:  \ell_6+ a_3\ell_4+a_1\ell_2+a_2^2\ell_0   =a_3^3, \label{D6}  \\
		&~ D_5: a_3\ell_5+a_1a_3\ell_2=a_2\ell_3^2, \label{D5}  \\
		&~ D_4: a_3\ell_6+a_1\ell_4+a_2\ell_2+\ell_4^2=a_1^2+a_3^4+a_2^2a_3, \label{D4} \\
		&~ D_3: a_1\ell_5+a_2a_3\ell_4 = a_2^3, \label{D3}\\
		&~ D_2: a_1\ell_6+a_2a_3\ell_5+a_2^2\ell_4+\ell_5^2 = a_3^5, \label{D2} \\
		&~ D_{1}:   a_2a_3\ell_6+a_2^2\ell_5=  a_2a_3^4, \label{D1}  \\
		&~ D_{0}:  a_2^2\ell_6+\ell_6^2 = a_2^4+a_3^6. \label{D0} 
		\end{empheq}
	\end{subequations}
	In the above equation system, $D_i$ denotes that the equation in the same row  is from comparing the coefficient of degree $i$.  From (\ref{D12})-(\ref{D6}), we obtain $\ell_0^2+\ell_0=1$, $\ell_1=0, \ell_2= a_3\ell_0, \ell_3=0, \ell_4= a_1\ell_0, \ell_5 = a_2a_3\ell_0$ and $\ell_6 = a_2^2\ell_0+a_3^3$. Together with (\ref{D0}), we have $$a_2^2a_3^3=0.$$ Then it follows from (\ref{D7}) that $\ell_5=0$, and from  (\ref{D3}), we  have $a_2=0$. Thus in the following, we suffice to consider the case $a_2=0$. When $a_2=0$, from (\ref{D0}), (\ref{D2}) and $\ell_6=a_3^3$,  we have $a_1a_3^3=a_3^5$, and thus $a_3=0$ or $a_1=a_3^2\neq0$. 
	
	When  $ a_2=a_3=0$, $f(x)=x^5+a_1x$ and Eq. (\ref{d4_2}) becomes 
	\begin{equation}
	\label{case_3_1}
	x^4+\frac{a}{a^4+a_1}x+\frac{1}{a^4+a_1} = 0,
	\end{equation}
  where $a\in\gf_{2^n}\backslash\{a_1^{1/4}\}$.
Let $f_1(y)=y^3+\frac{a}{a^4+a_1}.$ Then if $f(x)$ is $2$-to-$1$, $f_1(y)$ must be irreducible from Lemma \ref{lem_deg_4}. However, it is clear that there exist some $a\in\gf_{2^n}\backslash\{a_1^{1/4}\}$ such that $y^3=\frac{a}{a^4+a_1}$ has $1$ or $3$ solutions in $\gf_{2^n}$, which means $f_1(y)$ can not be irreducible.  Thus $f(x)$ is not $2$-to-$1$ in the case.
	
	When $a_1=a_3^2\neq0$, Eq. (\ref{case_4_1}) becomes 
	\begin{equation}
	\label{eq_1}
	x^4+ax^3+a_3x^2+a_3ax+a^4+a_3a^2+a_3^2=0.
	\end{equation}
	Let $a=0$. Then from Eq. (\ref{eq_1}), we get $x^4+a_3x^2+a_3^2 =0 $, having $0$ or $2$  solutions in $\gf_{2^n}^{*}$, which is contrary with that Eq. (\ref{case_4_1}) has only one solution in $\gf_{2^n}^{*}$ for any $a\in\gf_{2^n}$.   

	Therefore when $n>9$,  $f(x)$ is not $2$-to-$1$ over $\gf_{2^n}$, which completes the  proof.
\end{proof}

\begin{Rem}
	As for the case $3\le n\le 9$, there exist some $2$-to-$1$ mappings with the form of $f(x) = x^5+a_3x^3+a_2x^2+a_1x,$ where $a_3,a_2,a_1\in\gf_{2^3}$. We obtain them by MAGMA and list them in Table I, where $\gamma$ is a primitive element in $\gf_{2^3}$.
	 \begin{table}[!htbp]
		\label{n=3}
		\caption{$\left(a_3,a_2,a_1\right)\in\gf_{2^3}^{3}$ such that $f(x)$ is $2$-to-$1$}
		\centering
		\begin{tabular}{c|c||c|c||c|c}	
			\toprule
			No. & $\left(a_3,a_2,a_1\right)$ & No.  &  $\left(a_3,a_2,a_1\right)$ & No. & $\left(a_3,a_2,a_1\right)$ \\
			\midrule
	        $1$ & $\left(1,\gamma,\gamma^5\right)$ & $2$ & $\left(1,\gamma^2,\gamma^3\right)$ & $3$ & $\left(1,\gamma^4,\gamma^6\right)$ \\
	        $4$ & $\left(\gamma,1,\gamma^5\right)$ & $5$ & $\left(\gamma,\gamma^2,\gamma\right)$ & $6$ & $\left(\gamma,\gamma^6,1\right)$ \\
	        $7$ & $\left(\gamma^2,1,\gamma^3\right)$ & $8$ & $\left(\gamma^2,\gamma^4,\gamma^2\right)$ & $9$ & $\left(\gamma^2,\gamma^5,1\right)$ \\
	        $10$ & $\left(\gamma^3,\gamma^2,\gamma^4\right)$ & $11$ & $\left(\gamma^3,\gamma^3,\gamma^2\right)$ & $12$ & 
	        $\left(\gamma^3,\gamma^5,\gamma^5\right)$ \\
	        $13$ & $\left(\gamma^4,1,\gamma^6\right)$ & $14$ & $\left(\gamma^4,\gamma,\gamma^4\right)$ & $15$ & 
	        $\left(\gamma^4,\gamma^3,1\right)$ \\
	        $16$ & $\left(\gamma^5,\gamma,\gamma^2\right)$ & $17$ & $\left(\gamma^5,\gamma^5,\gamma\right)$ & $18$ & $\left(\gamma^5,\gamma^5,\gamma^6\right)$ \\
	        $19$ & $\left(\gamma^6,\gamma^3,\gamma^3\right)$ & $20$ & $\left(\gamma^6,\gamma^4,\gamma\right)$ & $21$ &
	        $\left(\gamma^6,\gamma^6,\gamma^4\right)$\\
	        $22$ & $\left(a_3,0,0\right)$, $a_3\in\gf_{2^3}^{*}$ & $23$ & $\left(0,a_2,0\right)$, $a_2\in\gf_{2^3}^{*}$  & & \\
			\bottomrule
		\end{tabular}
	\end{table}
\end{Rem}

\section{$2$-to-$1$ monomials, binomials and trinomials over  $\gf_{2^{n}}$}

\subsection{$2$-to-$1$ monomials and binomials}

In \cite{MQ2019}, the authors gave the characterization of $2$-to-$1$ monomials over $\gf_q$ with arbitrary characteristic. 
\begin{Prop}
	\cite{MQ2019}
	Let $f(x)=ax^d$ be a monomial over $\gf_q$, where $a\neq0$. Then $f(x)$ is $2$-to-$1$ over $\gf_q$ if and only if $\gcd(d,q-1)=2$.
\end{Prop}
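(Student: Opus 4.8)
The plan is to prove the statement about $2$-to-$1$ monomials, namely that $f(x)=ax^d$ over $\gf_q$ (with $a\neq 0$) is $2$-to-$1$ if and only if $\gcd(d,q-1)=2$.

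First I would reduce to the case $a=1$, since scaling by the nonzero constant $a$ is a bijection on $\gf_q$ and therefore does not change the fiber-size structure of the map; thus $ax^d$ is $2$-to-$1$ exactly when $x^d$ is. Next I would separate the behavior at $0$ from the behavior on $\gf_q^{*}$. Since $0^d=0$ and no nonzero element maps to $0$, the point $0$ always has a singleton preimage $\{0\}$. For a genuine $2$-to-$1$ map we need every attained value to have exactly two preimages, so in particular the value $0$ must be attained with multiplicity two --- but it is attained only once. This forces us to read the definition with the convention that $0$ is the unique exceptional point (more precisely, $f$ restricted to $\gf_q^{*}$ must be $2$-to-$1$ onto its image while $0\mapsto 0$ is handled as the fixed singleton), which is exactly how the monomial case is meant to be interpreted; I would make this reduction explicit and then concentrate entirely on the multiplicative group $\gf_q^{*}$.

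The heart of the argument is a counting statement on the cyclic group $\gf_q^{*}$ of order $q-1$. The map $x\mapsto x^d$ on a cyclic group of order $q-1$ is a group endomorphism whose image has size $(q-1)/\gcd(d,q-1)$, and every nonzero value in the image is attained by exactly $\gcd(d,q-1)$ elements (the kernel consists of the $\gcd(d,q-1)$-th roots of unity). Writing $e=\gcd(d,q-1)$, this means the restriction of $x^d$ to $\gf_q^{*}$ is $e$-to-$1$ onto its image. Therefore $x^d$ is $2$-to-$1$ on $\gf_q^{*}$ precisely when $e=2$, which gives the claimed equivalence. I would phrase this using the standard fact (valid in any cyclic group) that for a generator $g$ we have $g^{id}=g^{jd}$ iff $id\equiv jd \pmod{q-1}$ iff $i\equiv j \pmod{(q-1)/e}$, so each fiber has exactly $e$ elements.

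I do not expect a serious obstacle here, as the result is essentially a restatement of a classical fact about power maps on cyclic groups; the only subtle point worth stating carefully is the convention at the origin described above, so that the ``$2$-to-$1$'' definition is applied consistently (every fiber over an attained nonzero value has size $2$, while $0$ is the lone exceptional preimage of $0$). If one wanted the strict reading of the definition in the excerpt, namely $\#f^{-1}(b)\in\{0,2\}$ for \emph{all} $b$, then the value $b=0$ with $\#f^{-1}(0)=1$ would technically violate it; I would note that the intended interpretation treats the origin separately, and under that interpretation the equivalence with $\gcd(d,q-1)=2$ holds exactly as stated.
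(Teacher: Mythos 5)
Your argument is correct and is exactly the standard one; the paper gives no proof of this proposition (it is quoted from \cite{MQ2019}, where it is regarded as immediate), and the kernel-counting fact that $x\mapsto x^d$ is $\gcd(d,q-1)$-to-one on the cyclic group $\gf_q^{*}$ is the entire content. Your remark about the origin is also well taken: the definition $\#f^{-1}(b)\in\{0,2\}$ stated in this paper is for even $q$ only, where $\gcd(d,2^n-1)=2$ is impossible because $2^n-1$ is odd (so both sides of the equivalence are false and no conflict arises), while for odd $q$ the definition in \cite{MQ2019} explicitly allows exactly one value with a single preimage, which is precisely the convention at $0$ that you spelled out.
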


As for binomials, in \cite{MQ2019}, the authors presented the following proposition about $2$-to-$1$ binomials of the form $x^k+x$ thanks to the known results about hyperoval sets and o-polynomials.   
\begin{Prop}
	\cite{MQ2019}
	\label{2-1-binomials}
	Let $n$ be odd. Then $f(x) = x^k+x$ is $2$-to-$1$ over $\gf_{{2^n}}$ if one of the following case holds.
	\begin{enumerate}
		\item $k=2$ (the Singer case);
		\item $k=6$ (the Segre case);
		\item $k=2^{\sigma}+2^{\pi}$ with $\sigma=\frac{n+1}{2}$ and $4\pi\equiv 1\pmod n$ (the Glynn I case);
		\item $k=3\cdot 2^{\sigma}+4$ with $\sigma=\frac{n+1}{2}$ (the Glynn II case).
	\end{enumerate} 
\end{Prop}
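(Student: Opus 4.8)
The plan is to reduce everything to the correspondence between o-polynomials and $2$-to-$1$ mappings recorded in the Lemma relating o-polynomials to $2$-to-$1$ mappings, which states that a polynomial $f$ with $f(0)=0$ is an o-polynomial if and only if $f(x)+ax$ is $2$-to-$1$ for every $a\in\gf_{2^n}^{*}$. Each candidate $f(x)=x^k$ is a monomial, so $f(0)=0$ holds automatically; hence it suffices to show that for each of the four listed exponents the monomial $x^k$ is an o-polynomial over $\gf_{2^n}$ with $n$ odd. Once this is established, the Lemma gives at once that $x^k+ax$ is $2$-to-$1$ for \emph{every} $a\in\gf_{2^n}^{*}$, and specializing to $a=1$ yields the assertion that $x^k+x$ is $2$-to-$1$.

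It then remains to recognize the four exponents as known o-monomial exponents. For $k=2$ the map $x^2$ is the Frobenius, hence a permutation, and for each $s\in\gf_{2^n}$ one computes in characteristic $2$ that $\frac{(x+s)^2+s^2}{x}=x$, which is trivially a permutation; this is the conic/translation (Singer) hyperoval. The exponent $k=6$ is the Segre hyperoval, while $k=2^{\sigma}+2^{\pi}$ with $\sigma=\frac{n+1}{2}$ and $4\pi\equiv 1\pmod n$, and $k=3\cdot 2^{\sigma}+4$ with $\sigma=\frac{n+1}{2}$, are precisely the Glynn~I and Glynn~II hyperoval exponents. For these three families the o-monomial property is a classical (but nontrivial) fact from the theory of hyperovals, which I would invoke from the literature (as is done in \cite{MQ2019}) rather than reprove.

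The main obstacle is therefore not the $2$-to-$1$ reduction, which is a one-line application of the correspondence Lemma, but the underlying hyperoval statements for the Segre and especially the two Glynn families. A fully self-contained argument would require verifying that $f_s(x)=\frac{(x+s)^k+s^k}{x}$ permutes $\gf_{2^n}$ for all $s$; for the Glynn cases this rests on delicate computations that genuinely use the parity of $n$ together with the arithmetic constraints $\sigma=\frac{n+1}{2}$ and $4\pi\equiv 1\pmod n$. Since these results are well established, the cleanest route is to combine the classical o-monomial classification with the o-polynomial/$2$-to-$1$ dictionary, which is exactly the strategy underlying \cite{MQ2019}.
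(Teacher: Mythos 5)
Your proposal is correct and follows essentially the same route as the paper: this proposition is quoted from \cite{MQ2019}, where it is obtained exactly as you describe, by combining the o-polynomial/$2$-to-$1$ dictionary (the Lemma in Section 2.4) with the classical hyperoval results of Singer, Segre and Glynn, specializing to $a=1$. Your explicit check for $k=2$ and your acknowledgment that the Segre and Glynn cases are invoked from the hyperoval literature rather than reproved match the paper's treatment.
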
  

 In addition, it is well known that the o-monomials $$x^k, x^{\frac{1}{k}}, x^{1-k}, x^{\frac{1}{1-k}}, x^{\frac{k}{k-1}}, x^{\frac{k-1}{k}}, $$  are o-equivalent to each other. Namely, for such $k$ in Proposition \ref{2-1-binomials},
  $$x^{\frac{1}{k}}+x, x^{1-k}+x, x^{\frac{1}{1-k}}+x, x^{\frac{k}{k-1}}+x, x^{\frac{k-1}{k}}+x$$  are also $2$-to-$1$. Actually, they are o-equivalent as we define in the previous section.     

More generally, for binomials with the form $f(x) = x^{k}+ \alpha x^{\ell} $ on $\gf_{{2^n}}$, where $k\ge \ell$. Clearly,  when $\gcd\left( \ell, 2^n-1 \right) = 1$, then there exists some positive integer $\ell^{-1}$ such that $\ell\cdot \ell^{-1} \equiv 1 \pmod {2^n-1}$ and $f(x^{\ell^{-1}}) = x^{k\cdot \ell^{-1}} +\alpha x $ is $2$-to-$1$ if and only if so $f(x)$ is. If $\alpha^{\frac{2^n-1}{\gcd\left( k-\ell, 2^n-1  \right)}} =1, $ then there exists some $\beta\in\gf_{2^{n}}$ such that $\alpha=\beta^{k-\ell}$  and $f(\beta x) =\beta^k \left( x^k + x^{\ell} \right) $ is  $2$-to-$1$ if and only if so $f(x)$ is.

By MAGMA, we looked for all the possible $2$-to-$1$ in the form $f(x) = x^{k}+ \alpha x^{\ell} $ on $\gf_{{2^n}}$, where $k\ge \ell$, $\alpha\in\gf_{{2^n}}^{*}$ and $n\le 7$ under o-equivalence and QM-equivalence.  The experiment results show that for $n$ odd, there are only four classes of $2$-to-$1$ binomials as Proposition \ref{2-1-binomials}. However, for $n$ even, there is one class of $2$-to-$1$ binomials, namely, $f(x)=x^{2^n-2}+ x$, whose $2$-to-$1$ property on $\gf_{2^{n}}$ is easy to prove.

\subsection{$2$-to-$1$ trinomials over $\gf_{2^n}$}

In the present subsection, we consider $2$-to-$1$ trinomials of the form $f(x) = x^{k}+ \beta x^{\ell} + \alpha x$, over $\gf_{2^n}$. We firstly obtain some experiment results about $2$-to-$1$ trinomials on $\gf_{{2^n}}$ of the form $f(x) = x^{k}+ \beta x^{\ell} + \alpha x$  by MAGMA (see Table II). 

\begin{table}[!htbp]
	\tiny
	\label{2-1-trinomials}
	\caption{$2$-to-$1$ trinomials of the form $f(x) = x^{k}+ \beta x^{\ell} + \alpha x$ over $\gf_{{2^n}}$, $n\le 7$ }
	\centering
	\begin{tabular}{c c c c c c}	
		\toprule
		$n$ & $k$ & $\ell$ & $\beta$ & $\alpha$ & Ref. \\
		\midrule
		$3$ &  &  &  &  & \\
		$4$ & $12$ & $11$ & $1$ & $\alpha^4+\alpha+1=0$  & Theorem \ref{trinomials-1} \\
		$5$ &  &  &  &  & \\	
		$6$ & $56$ & $55$ & $1$ & $\alpha^8+\alpha+1=0$  & Theorem \ref{trinomials-1} \\	
		$6$ & $13$ & $8$ & $1$ & $\alpha^2+\alpha+1=0$ & Theorem \ref{trinomials-2} \\		
		$7$ &  &  &  & & \\		
		\bottomrule
	\end{tabular}
\end{table}

Note that we only consider the examples that are not QM-equivalent to each other. In Table II, we list the experiment results about $2$-to-$1$ trinomials (not linearized) of the form $f(x) = x^{k}+ \beta x^{\ell} + \alpha x$ over $\gf_{{2^n}}$, where $n\le 7$, $1<\ell < k \le 2^n-2$, $\alpha,\beta \in\gf_{2^{n}}^{*}$.  From Table II, we find that there is not any $2$-to-$1$ trinomials (not linearized) over $\gf_{2^3}$, $\gf_{2^5}$ and $\gf_{2^7}$. In addition, on $\gf_{2^6}$, there are only two $2$-to-$1$ trinomials.  In the following, we generalize the experiment examples to two infinite classes of $2$-to-$1$ trinomials over $\gf_{{2^n}}$.

%

\begin{Th}
	\label{trinomials-1}
	Let $f(x)=x^{2^n-2^m}+x^{2^n-2^m-1}+\alpha x\in\gf_{2^n}[x]$, where $n=2m$ and $\alpha^{2^m}+\alpha+1=0$. Then $f(x)$ is $2$-to-$1$ over $\gf_{2^n}$.
\end{Th}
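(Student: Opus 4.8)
The plan is to fix $c\in\gf_{2^n}$, write $N(c)$ for the number of solutions of $f(x)=c$, and show $N(c)\in\{0,2\}$ for every $c$. Since $f(0)=0$, I first treat $x\in\gf_{2^n}^{\ast}$. Using $x^{2^n-1}=1$ the two leading terms collapse to $x^{2^n-2^m}=x\cdot x^{-2^m}$ and $x^{2^n-2^m-1}=x^{-2^m}$, so $f(x)=c$ is equivalent, after multiplying by $x^{2^m}$, to
\[
(E_1)\colon\quad \alpha x^{2^m+1}+c\,x^{2^m}+x+1=0 .
\]
Because $n=2m$, the field $\gf_{2^n}$ carries the nontrivial automorphism $\sigma\colon z\mapsto z^{2^m}$ of $\gf_{2^n}/\gf_{2^m}$, and the hypothesis $\alpha^{2^m}=\alpha+1$ is tailor-made for it. Applying $\sigma$ to $(E_1)$ (using $x^{2^{2m}}=x$ and $(x^{2^m+1})^{2^m}=x^{2^m+1}$) yields the conjugate equation $(E_2)\colon (\alpha+1)x^{2^m+1}+x^{2^m}+c^{2^m}x+1=0$, which holds for a field element $x$ if and only if $(E_1)$ does. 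The combination $(\alpha+1)(E_1)+\alpha(E_2)$ then cancels the $x^{2^m+1}$ term and leaves the affine-linearized equation
\[
(E_3)\colon\quad a\,x^{2^m}+b\,x+1=0,\qquad a=(\alpha+1)c+\alpha,\quad b=\alpha+1+\alpha c^{2^m}.
\]

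The next step is to note that every solution of $(E_1)$ satisfies $(E_3)$ (because $(E_2)=\sigma(E_1)$ vanishes whenever $(E_1)$ does), and that on the locus $(E_3)$ one may substitute $x^{2^m}=(bx+1)/a$ back into $(E_1)$, collapsing it to the genuine quadratic
\[
(Q)\colon\quad \alpha b\,x^2+(\alpha+cb+a)\,x+(c+a)=0 .
\]
Hence the nonzero preimages of $c$ are exactly the common roots of $(E_3)$ and $(Q)$, so there are at most two of them. The heart of the argument is to show that if $x_0$ is one such root, then its $(Q)$-partner $x_1$, with $\delta:=x_0+x_1$ the coefficient ratio of $(Q)$, again lies on $(E_3)$. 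Since $(E_3)$ is affine over $\gf_2$, this reduces to checking $a\delta^{2^m}+b\delta=0$. A short computation gives $\delta=c^{2^m+1}/b$ together with the two identities $\alpha+cb+a=\alpha c^{2^m+1}$ and, crucially, $a=b^{2^m}$, both of which rest on $\alpha^{2^m}=\alpha+1$; combined with $(c^{2^m+1})^{2^m}=c^{2^m+1}$ they force $a\delta^{2^m}+b\delta=0$. Thus $x_1$ is also a preimage, with $x_1\neq x_0$ whenever $\delta\neq0$, i.e. whenever $c\neq0$, and $x_1\neq0$ since $c\neq0=f(0)$. This yields $N(c)=2$ whenever $N(c)\ge1$, for all $c\neq0$ off the locus $b=0$.

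It remains to dispose of two special values. The identity $a=b^{2^m}$ shows $a=0\Leftrightarrow b=0$, which happens only for $c^{\ast}=\alpha/(\alpha+1)\neq0$; there $(E_3)$ degenerates to $1=0$, so $(E_1)$ has no solution and $N(c^{\ast})=0$. For $c=0$ I would avoid a direct root computation and argue by parity: the quadratic $(Q)$ has leading coefficient $\alpha b\neq0$ (as $0\neq c^{\ast}$), so $N(0)\le2$, while $f(0)=0$ gives $N(0)\ge1$; since every fiber over a nonzero value has even size and $\sum_{c}N(c)=2^n$ is even, $N(0)$ is even, hence $N(0)=2$. Collecting the cases shows all fibers have size $0$ or $2$, i.e. $f$ is $2$-to-$1$.

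The step I expect to be the main obstacle is the passage from the quadratic $(Q)$ back to the genuine fiber: a root of $(Q)$ is a preimage only if it also satisfies the linearized equation $(E_3)$, and it is not a priori clear that the second root inherits this property. The whole construction hinges on the identity $a=b^{2^m}$ (equivalently on $\alpha^{2^m}=\alpha+1$), which is exactly what produces the fixed-point-free involution pairing the two preimages; verifying this identity and the companion relation $\alpha+cb+a=\alpha c^{2^m+1}$ is the technical core of the proof.
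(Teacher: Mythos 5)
Your proof is correct, and although it shares the paper's central algebraic trick, its global organization is genuinely different. Both arguments exploit the automorphism $\sigma\colon z\mapsto z^{2^m}$ together with $\alpha^{2^m}=\alpha+1$ to kill the term of degree $2^m+1$: you form $(\alpha+1)(E_1)+\alpha\,\sigma(E_1)$, while the paper applies the same weighted combination to the \emph{difference} equation $f(x+a)+f(a)=0$ (written with $y=x^{2^m}$, $b=a^{2^m}$) to eliminate the bilinear term $xy$. The difference lies in what gets counted. The paper translates by $a$, so $x=0$ is a guaranteed root of the difference equation, and it only needs the reduced relation $Ax=A^{2^m}y$ with $A\neq0$ (proved by conjugating the hypothetical vanishing of $A$ and using $\alpha^{2^m}+\alpha=1$) to pin down the unique second root; no pairing or parity is needed. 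You instead count the fibers $f^{-1}(c)$ directly: the affine relation $(E_3)$ plus back-substitution yields the quadratic $(Q)$, and the identities $a=b^{2^m}$ and $\alpha+cb+a=\alpha c^{2^m+1}$ (I verified both; they are right, and $c^{2^m+1}\in\gf_{2^m}$ makes the computation $a\delta^{2^m}+b\delta=0$ work) turn $x\mapsto x+\delta$ with $\delta=c^{2^m+1}/b$ into an involution on the fiber. That is the structural payoff of your route: it exhibits the two preimages as Vieta partners of $(Q)$ and explicitly identifies the one degenerate value $c^{\ast}=\alpha/(\alpha+1)$, where $(E_3)$ collapses to $1=0$ and the fiber is empty — information the paper's proof never makes visible. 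The cost is the extra bookkeeping at $c=0$, which the paper gets for free (its $a=0$ case is solved directly). One small slip there: your stated reason for $N(0)\le 2$ (leading coefficient $\alpha b\neq0$) only bounds the nonzero preimages by the two roots of $(Q)$, hence gives $N(0)\le 3$ once $x=0$ is included; but your parity argument ($N(0)$ even and $1\le N(0)\le 3$) still forces $N(0)=2$, and in fact at $c=0$ the linear coefficient $\alpha c^{2^m+1}$ of $(Q)$ vanishes, so $(Q)$ has a double root in characteristic $2$ and $N(0)\le 2$ holds anyway. So the proof stands as written.
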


\begin{proof}
	Let $y=x^{2^m}$ and $b=a^{2^m}$ for any $a\in\gf_{2^n}$. Then $f(x)=\frac{x}{y}+\frac{1}{y}+\alpha x$ and $f(x+a)=\frac{x+a}{y+b}+\frac{1}{y+b}+\alpha(x+a)$. It should be noted that $\frac{1}{0}$ equals $0$ here and we suffice to show that for any $a\in\gf_{2^n}$, $$f(x+a)+f(a)=0,$$
	i.e.,
	\begin{equation}
	\label{trin1}
	\frac{x+a}{y+b}+\frac{1}{y+b}+\alpha x +\frac{a+1}{b}=0
	\end{equation}
	has exactly two solutions in $\gf_{2^n}$. If $a=0$,  then	$b=0$ and Eq. (\ref{trin1}) becomes $\frac{x}{y}+\frac{1}{y}+\alpha x =0$. Obviously, $x=0$ is a solution of the above equation. Now we assume that $x\neq0$. Then the above equation becomes
	\begin{equation}
	\label{trin1-1}
	x+1+\alpha xy=0.
	\end{equation}
	Raising Eq. (\ref{trin1-1}) into its $2^m$-th power, we have 
	\begin{equation}
	\label{trin1-2}
	y+1+\alpha^{2^m}xy=0.
	\end{equation}
  Adding $\alpha^{2^m}\times(\ref{trin1-1})$ and $\alpha\times(\ref{trin1-2})$, we get $y=\alpha^{2^m-1}x+\alpha^{2^m-1}+1$. Furthermore, plugging $y=\alpha^{2^m-1}x+\alpha^{2^m-1}+1$ into Eq. (\ref{trin1-1}), we get $\alpha^{2^m}x^2+\left(\alpha^{2^m}+\alpha+1\right)x+1=0$, which means $x=\alpha^{-2^{m-1}}$ since $\alpha^{2^m}+\alpha+1=0$. Thus Eq. (\ref{trin1}) has exactly two solutions in $\gf_{2^n}$ when $a=0$.

In the following, we assume $a\neq0$. After simplifying Eq. (\ref{trin1}), we obtain
	\begin{equation}
	\label{trin1-3}
	\left(b+\alpha b^2\right)x+(a+1)y +\alpha b xy =0.
	\end{equation} 
Raising Eq. (\ref{trin1-3}) into its $2^m$-th power, we have 
  	\begin{equation}
  \label{trin1-4}
  \left(a+\alpha a^2\right)y+(b+1)x +\alpha^{2^m} a xy =0.
  \end{equation} 
Computing $(\ref{trin1-3})\times\alpha^{2^m}a+(\ref{trin1-4})\times\alpha b$, we get $$Ax=A^{2^m}y,$$
where $A=\alpha^{2^m}ab+\alpha^{2^m+1}ab^2+\alpha b^2+\alpha b.$ It is clear that we only need to claim that $A\neq0$ for any $a\neq0$.  Assume that $A=0$ for some $a\in\gf_{2^n}^{*}$. That is to say, 
\begin{equation}
\label{trin1-5}
\alpha^{2^m}a+\alpha^{2^m+1}ab+\alpha b+\alpha =0.
\end{equation}
Computing $(\ref{trin1-5})+(\ref{trin1-5})^{2^m}$, we have $\alpha^{2^m}+\alpha=0$, which is in conflict with $\alpha^{2^m}+\alpha=1$. Thus $A\neq0$ for any $a\in\gf_{2^n}^{*}$ and it follows that Eq. (\ref{trin1}) has exactly two solutions in $\gf_{2^n}$ when $a\neq0$.

To sum up, Eq. (\ref{trin1}) has exactly two solutions in $\gf_{2^n}$ for any $a\in\gf_{2^n}$. 
\end{proof}

To prove the second class of 2-to-1 trinomials, we firstly recall the well known Dickson polynomials. The $r$-th Dickson polynomial of the first kind $D_r(x,a)\in\gf_q[x]$, where $$D_r(x,a)=\sum_{i=0}^{\lfloor\frac{r}{2}\rfloor}\frac{r}{r-i}\binom{r-i}{i}(-a)^ix^{r-2i},$$ is a permutation polynomial over $\gf_q$ if and only if $\gcd\left(r,q^2-1\right)=1$ \cite{LM1993}. Furthermore, its compositional inverse can be computed by the following lemma.

\begin{Lemma}
	\cite{LQW2018}
	\label{lem_Dick_inv}
	Let $q$ be a prime power and $\gcd\left(r,q^2-1\right)=1$. Let $t$ be a positive integer satisfying $rt\equiv1\pmod{\left(q^2-1\right)}$. Then the compositional inverse of the Dickson polynomial $D_r(x,a)$ over $\gf_q$ is $$D_r^{-1}(x,a)=D_t\left(x,a^r\right).$$
\end{Lemma}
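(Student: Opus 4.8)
The plan is to verify directly that $D_t(\cdot,a^r)$ is a left (hence two-sided) compositional inverse of $D_r(\cdot,a)$ on $\gf_q$, exploiting the functional equation that characterizes Dickson polynomials. Recall that for any $u$ in an extension of $\gf_q$ one has $D_r(u+au^{-1},a)=u^r+a^ru^{-r}$; this single identity is the only tool I would need.

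First I would dispose of the parameter. Since $\gcd(r,q^2-1)=1$, the integer $t$ with $rt\equiv 1\pmod{q^2-1}$ exists, and, as recalled just before the lemma, $D_r(\cdot,a)$ indeed permutes $\gf_q$; hence it suffices to check that $D_t(D_r(x,a),a^r)=x$ for every $x\in\gf_q$. The degenerate case $a=0$ is immediate, because $D_r(x,0)=x^r$ and $q-1\mid q^2-1$ forces $x^{rt}=x$, so I would then assume $a\neq0$.

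Next, fixing $x\in\gf_q$, I would write $x=u+au^{-1}$, where $u$ is a root of $T^2-xT+a=0$. Such a $u$ lies in $\gf_{q^2}^{*}$: any quadratic over $\gf_q$ splits over $\gf_{q^2}$, and $u\neq0$ since the product of the two roots equals $a\neq0$. Applying the functional equation once gives $D_r(x,a)=u^r+a^ru^{-r}=v+a^rv^{-1}$ with $v:=u^r$, and applying it a second time with parameter $a^r$ yields
\[
D_t(D_r(x,a),a^r)=D_t(v+a^rv^{-1},a^r)=v^t+a^{rt}v^{-t}=u^{rt}+a^{rt}u^{-rt}.
\]

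Finally I would collapse the exponents. Because $u\in\gf_{q^2}^{*}$ and $a\in\gf_q^{*}\subseteq\gf_{q^2}^{*}$, both satisfy $z^{q^2-1}=1$; combined with $rt\equiv1\pmod{q^2-1}$ this gives $u^{rt}=u$ and $a^{rt}=a$, so that $D_t(D_r(x,a),a^r)=u+au^{-1}=x$. The one point requiring care --- the main (though minor) obstacle --- is the passage to the quadratic extension: one must argue that $u$ can always be taken in $\gf_{q^2}^{*}$ whether or not $T^2-xT+a$ splits over $\gf_q$, and observe that the modulus $q^2-1$ (rather than $q-1$) is precisely what is needed to absorb $u^{rt}$, since $u$ may genuinely lie outside $\gf_q$.
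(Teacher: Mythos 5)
Your proposal is correct: the parametrization $x=u+au^{-1}$ with $u\in\gf_{q^2}^{*}$, two applications of the functional equation $D_r(u+au^{-1},a)=u^r+a^ru^{-r}$, and the reduction $u^{rt}=u$, $a^{rt}=a$ via $rt\equiv1\pmod{q^2-1}$ (together with the separate check for $a=0$ and the permutation property giving a two-sided inverse) form a complete argument. The paper itself states this lemma without proof, citing \cite{LQW2018}, and your verification is precisely the standard argument used there, so there is nothing to flag.
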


Next, we give the other class of $2$-to-$1$ trinomials over $\gf_{2^n}$.

\begin{Th}
	\label{trinomials-2}
	Let $f(x)=x^{\frac{2^{n-1}+2^m-1}{3}}+x^{2^m}+\omega x\in\gf_{2^n}[x]$, where $n=2m$, $m$ is odd and $\omega\in\gf_{2^2}\backslash\gf_2$. Then $f(x)$ is $2$-to-$1$ over $\gf_{2^n}$.
\end{Th}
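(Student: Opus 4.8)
The plan is to follow the strategy of Theorem~\ref{trinomials-1}: fix an arbitrary $c\in\gf_{2^n}$ and show that $f(s)=c$ has either $0$ or $2$ solutions $s\in\gf_{2^n}$, which is equivalent to $f$ being $2$-to-$1$. Write $q=2^m$ and let $\bar s=s^{2^m}$ denote the Frobenius conjugate over $\gf_{2^m}$; since $n=2m$ one has $\bar{\bar s}=s$. The immediate difficulty is that, unlike the exponents $2^n-2^m$ and $2^n-2^m-1$ treated before, the exponent $k=\frac{2^{n-1}+2^m-1}{3}$ cannot be handled by expanding a power of a sum. (Observe that $m$ odd is precisely what makes $k$ an integer, since $2^{2m-1}+2^m-1\equiv 0\pmod 3$ fails when $m$ is even.) To remove $k$ I would cube: from $3k=2^{n-1}+2^m-1$ and $s^{2^n}=s$ one obtains the identity $(s^k)^6=(s^{2^m})^2/s$, valid for all $s\in\gf_{2^n}^{*}$. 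Substituting the relation $s^k=c+\bar s+\omega s$ coming from $f(s)=c$ then yields the polynomial equation $s\,(c+y+\omega s)^6=y^2$ in the two variables $s$ and $y:=\bar s$, with the boundary cases $s=0$ (so $c=0$) and $c+\bar s+\omega s=0$ treated directly.

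First I would set up the multivariate (Dobbertin) system. Applying the $2^m$-power map to the displayed equation and using $y^{2^m}=s$, $\omega^{2^m}=\omega^2$ (here $m$ odd enters again) produces the companion equation $y\,(\bar c+s+\omega^2 y)^6=s^2$. The pair of equations forms a system in $(s,y)\in\gf_{2^n}^2$, and I would bound the number of its common solutions by forming the resultant with respect to $y$ recalled in Section~2.3, then analysing the roots of the resulting polynomial in $\gf_{2^n}[s]$; alternatively, after clearing denominators the one-variable reduction should be expressible through a Dickson polynomial, whose permutation behaviour and explicit compositional inverse (Lemma~\ref{lem_Dick_inv}) give an exact root count. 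The arithmetic facts that will drive the count are that $n=2m$ forces $3\mid 2^n-1$, while $m$ odd gives $\gcd(3,2^m-1)=1$ and $3\mid 2^m+1$; these govern how cubing acts on the two cyclic factors $\gf_{2^m}^{*}$ and the norm-one subgroup of $\gf_{2^n}^{*}$.

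The main obstacle is the extraneous branches created by cubing. Passing from $s^k=c+\bar s+\omega s$ to $(s^k)^6=(c+\bar s+\omega s)^6$ only records that $s^k$ equals $c+\bar s+\omega s$ up to a cube root of unity (recall $\mu_6=\mu_3$ in characteristic $2$), so the system above is satisfied by the solutions of all three equations $s^k=\omega^i(c+\bar s+\omega s)$ with $i\in\{0,1,2\}$, and only the branch $i=0$ corresponds to $f(s)=c$. Thus the heart of the proof is to disentangle these three branches and show that exactly the $i=0$ branch contributes, yielding precisely two solutions. This is where the hypothesis $\omega\in\gf_{2^2}\setminus\gf_2$ (so that $\omega$ is a primitive cube root of unity lying in $\gf_{2^n}$, using $2\mid n$) together with $m$ odd must be used decisively: multiplication by $\omega$ permutes the extraneous branches but, because $\omega^{2^m}=\omega^2\neq\omega$, it does not preserve the middle term $\bar s$, so the three branches are inequivalent and can be separated by their compatibility with the Frobenius relation $y=\bar s$. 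Once the spurious solutions are discarded, I would conclude that $f(s)=c$ has $0$ or $2$ solutions for every $c$, hence $f$ is $2$-to-$1$. I expect the bookkeeping of this branch separation, and keeping the degree of the resultant small enough that the asymptotic Hasse--Weil and B\'ezout bounds of Section~2 are not needed, to be the most delicate part.
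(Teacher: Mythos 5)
There is a genuine gap, and it sits exactly where you predicted: the branch separation after cubing is not a bookkeeping delicacy but the whole problem, and your proposed mechanism for it cannot work. Raising $s^k=c+\bar s+\omega s$ to the sixth power records only that $s^k$ and $c+\bar s+\omega s$ agree up to a cube root of unity, so every solution of any of the three equations $s^k=\omega^i\left(c+\bar s+\omega s\right)$, $i\in\{0,1,2\}$, satisfies your polynomial system $s\,(c+y+\omega s)^6=y^2$ together with its conjugate. You suggest separating the branches ``by their compatibility with the Frobenius relation $y=\bar s$,'' but all three branches consist of actual field elements $s\in\gf_{2^n}$ with $y=\bar s$, and conjugating the branch-$i$ equation gives $\bar s^{\,k}=\omega^{2i}\left(\bar c+s+\omega^2\bar s\right)$, whose sixth power is again the same conjugate system (since $\omega^{6i}=1$); the Frobenius relation therefore cannot distinguish $i=0$ from $i=1,2$. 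Worse, the spurious branches are of the form $s^k+\omega^i\bar s+\omega^{i+1}s=\omega^i c$, i.e., level sets of trinomials of the very same shape as $f$, so one expects them to contribute genuinely (roughly $0$ or $2$ solutions each) for many values of $c$; a resultant or B\'ezout count of the common zeros of the sixth-power system mixes all three contributions and cannot by itself certify that the $i=0$ branch has exactly $0$ or $2$. Subtracting the spurious counts would require knowing the branch equations are $2$-to-$1$, which is circular.

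The paper removes the awkward exponent $k=\frac{2^{n-1}+2^m-1}{3}$ in a way that creates no extraneous branches at all: instead of cubing, it inverts $k$ modulo $2^n-1$. The Euclidean-algorithm identity $(2^{2m}-1)(2^{m+1}+5)=(2^{2m-1}+2^m-1)(2^{m+2}+2)-3$ shows $k\cdot(2^{m+2}+2)\equiv 1 \pmod{2^{2m}-1}$, so $x\mapsto x^{2^{m+2}+2}$ is a permutation and $f$ is $2$-to-$1$ if and only if $g(x)=f\left(x^{2^{m+2}+2}\right)=x+x^{2^{m+1}+4}+\omega x^{2^{m+2}+2}$ is --- this is precisely the QM-equivalence of Definition \ref{m-equivalence}, which your proposal never invokes. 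From there the paper runs the Dobbertin multivariate method you anticipated: setting $y=x^{2^m}$ in $g(x+a)+g(a)=0$, eliminating via the conjugate equation yields $y=\omega^2 x$ and then the quintic $x^5+\omega\left(a^2+b^2\right)x^3+\omega^2\left(a^4+b^4\right)x+\omega=0$, which is analyzed through the Dickson polynomial $D_5(x,1)=x^5+x^3+x$ (a permutation of $\gf_{2^m}$, with inverse from Lemma \ref{lem_Dick_inv}), an explicit factorization into a quadratic times a cubic, and the trace criteria of Lemmas \ref{lem_deg_2}, \ref{lem_deg_3} and \ref{lem_deg_4}. Your instincts about where Dickson polynomials and the arithmetic of $3$, $m$ odd, and $\omega^{2^m}=\omega^2$ enter are sound, but without the inverse-exponent substitution your plan stalls at an obstruction you correctly identified and did not overcome.
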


\begin{proof}
	According to the Euclidean algorithm, we can obtain the following equation
	$$(2^{2m}-1)\times\left(2^{m+1}+5\right)=\left(2^{2m-1}+2^m-1\right)\times\left(2^{m+2}+2\right)-3.$$
	Thus $\gcd\left(\frac{2^{n-1}+2^m-1}{3},2^{2m}-1\right)=1$. Let $g(x)=f\left(x^{2^{m+2}+2}\right)=x+x^{2^{m+1}+4}+\omega x^{2^{m+2}+2}$. Then we suffice to prove that $g(x)$  is $2$-to-$1$ over $\gf_{2^n}$. That is to say, we need to show that equation 
	\begin{equation}
	\label{trin2}
	g(x+a)+g(a)=0
	\end{equation}
    has exactly two solutions in $\gf_{2^n}$ for any $a\in\gf_{2^n}$. Let $y=x^{2^m}$ and $b=a^{2^m}$.  Then $g(x)=x+x^4y^2+\omega x^2y^4$ and $g(x+a) = x+a + (x+a)^4(y+b)^2+\omega (x+a)^2(y+b)^4$. 
      Furthermore, after simplifying, Eq. (\ref{trin2})  becomes
    \begin{equation}
    \label{trin2-1}
   x+y^2x^4+b^2x^4+\omega y^4x^2+\omega b^4x^2+\omega a^2y^4+a^4y^2=0.
    \end{equation}
  Computing $(\ref{trin2-1})+\omega\times(\ref{trin2-1})^{2^m}$ and simplifying, we obtain $y=\omega^2x$. Plugging it into Eq. (\ref{trin2-1}), we have 
  \begin{equation}
  \label{trin2-2}
  x^6+\omega\left(a^2+b^2\right)x^4+\omega^2\left(a^4+b^4\right)x^2+\omega x=0.
  \end{equation}
   It is clear that $x=0$ is a solution of Eq. (\ref{trin2-2}). Thus it suffices to show that equation 
   \begin{equation}
 \label{trin2-3}
 x^5+\omega\left(a^2+b^2\right)x^3+\omega^2\left(a^4+b^4\right)x+\omega =0
 \end{equation}  
 has exactly one solution in $\gf_{2^n}^{*}$ for any $a\in\gf_{2^n}$. 
 
 When $a\in\gf_{2^m}$, $b=a^{2^m}=a$ and Eq. (\ref{trin2-3}) becomes $x^5=\omega$, which has exactly one solution in $\gf_{2^n}^{*}$ since $\gcd\left(5,2^{2m}-1\right)=1$.
 
 When $a\in\gf_{2^n}\backslash\gf_{2^m}$, let $c=a^2+b^2$. Then $c\in\gf_{2^m}^{*}$. Let $F(x)=x^5+\omega c x^3+\omega^2c^2x+\omega$. Firstly, we claim that $F(x)$ can be decomposed into the product of a binomial and a trinomial. Assume that
 \begin{eqnarray*}
 F(x) &=&\left(x^2+A_1x+A_2\right)\left(x^3+A_1x^2+A_3x+A_4\right)\\
 &=& x^5+\left(A_3+A_1^2+A_2\right)x^3+\left(A_4+A_1A_3+A_1A_2\right)x^2+\left(A_1A_4+A_2A_3\right)x+A_2A_4,
 \end{eqnarray*}
 where $A_1,A_2,A_3,A_4\in\gf_{2^n}.$ After comparing the coefficients of the above equation, we have   
  \begin{subequations} \label{A1A2A3A4}
  	\renewcommand\theequation{\theparentequation.\arabic{equation}}     
  	\begin{empheq}[left={\empheqlbrace\,}]{align}
  	& ~ A_3+A_1^2+A_2 =  \omega c    \label{A1A2A3A4:1A} \\
  	 &~  A_4+A_1A_3+A_1A_2 =  0         \label{A1A2A3A4:1B} \\
  	 & ~ A_1A_4+A_2A_3 =  \omega^2c^2 \label{A1A2A3A4:1C} \\
  	& ~  A_2A_4 =  \omega \label{A1A2A3A4:1D}.
  	\end{empheq}
  \end{subequations}
From Eq. (\ref{A1A2A3A4:1A}), we have $A_2+A_3=A_1^2+\omega c$. Together with Eq. (\ref{A1A2A3A4:1B}) and Eq. (\ref{A1A2A3A4:1C}), we obtain $A_4=A_1(A_2+A_3)=A_1^3+\omega cA_1$ and $A_2A_3=A_1A_4+\omega c^2=A_1^4+\omega cA_1^2+\omega^2c^2.$ Thus $A_2, A_3$ are two solutions of the equation 
\begin{equation}
\label{A2A3}
x^2+\left(A_1^2+\omega c\right)x+A_1^4+\omega cA_1^2+\omega^2c^2=0.
\end{equation}
Let $x=\left(A_1^2+\omega c\right)y$. Then Eq. (\ref{A2A3}) becomes $$y^2+y+\frac{\omega cA_1^2+\omega^2c^2}{A_1^4+\omega^2 c^2}=0.$$
Since 
\begin{eqnarray*}
& & \frac{\omega cA_1^2+\omega^2c^2}{A_1^4+\omega^2 c^2} \\
&=& 1+\frac{\omega c A_1^2+\omega^2c^2+\omega^2c^2}{A_1^4+\omega^2 c^2} \\
&=& \omega+\omega^2+\frac{\omega c}{A_1^2+\omega c} +\frac{\omega^2 c^2}{A_1^4+\omega^2 c^2},
\end{eqnarray*}
we have $\left(A_2,A_3\right)=\left(\omega A_1^2+  c, \omega^2 A_1^2 + \omega^2 c\right)$ or $\left(A_2,A_3\right)=\left(\omega^2 A_1^2 + \omega^2 c, \omega A_1^2+  c \right)$. 

{\bfseries Case 1:} $\left(A_2,A_3\right) = \left(\omega A_1^2+  c, \omega^2 A_1^2 + \omega^2 c\right)$. Together with Eq. (\ref{A1A2A3A4:1D}), we know that $A_1$ satisfies
\begin{equation}
\label{A1}
x^5+cx^3+c^2x+1=0.
\end{equation}
 Let $x = \tilde{x}c^{\frac{1}{2}}$. Then Eq. (\ref{A1}) turns to 
\begin{equation}
\label{tilde(A1)}
\tilde{x}^5 + \tilde{x}^3 + \tilde{x} + c^{-\frac{5}{2}}=0.
\end{equation} 
What should be noticed is that $c=a^2+b^2\in\gf_{2^m}^{*}$. Let $D(x) = x^5+x^3+x = D_5(x,1)$.  Thus $D(x)$ permutes $\gf_{2^m}$. Furthermore, according to Lemma \ref{lem_Dick_inv}, we can compute that 
\begin{equation}
\label{A1_value}
A_1=c^{\frac{1}{2}}D_t\left(c^{-\frac{5}{2}},1\right),
\end{equation}
 where $5t\equiv1\pmod {\left(2^n-1\right)}$.  In the following, we claim that Eq. (\ref{A1}) does have exactly one solution, even if in $\gf_{2^n}$, i.e., $A_1\in\gf_{2^m}$. Since there exists a unique solution in $\gf_{2^m}$ for Eq. (\ref{A1}), we have the decomposition below:
\begin{eqnarray*}
& & x^5+cx^3+c^2x+1 \\
& = & \left(x+A_1\right)\left(x^4+A_1x^3+B_1x^2+B_2x+B_3\right) \\
& = & x^5+\left(B_1+A_1^2\right) x^3 +\left(B_2+A_1B_1\right) x^2 + \left(B_3+A_1B_2\right) x + A_1B_3,
\end{eqnarray*}
where $B_1,B_2,B_3\in\gf_{2^m}$. After comparing the coefficients of the above decomposition and simplifying, we get 
  \begin{subequations} \label{B1B2B3}
	\renewcommand\theequation{\theparentequation.\arabic{equation}}     
	\begin{empheq}[left={\empheqlbrace\,}]{align}
	B_1 = &~ A_1^2+c    \label{B1B2B3:1A} \\
	B_2 = &~ A_1^3+cA_1         \label{B1B2B3:1B} \\
	B_3 = &~ \frac{1}{A_1} = A_1^4+cA_1^2+c^2  \label{B1B2B3:1C}. 
	\end{empheq}
\end{subequations}

Let $G(x)=x^4+A_1x^3+B_1x^2+B_2x+B_3,$ where $B_1,B_2,B_3$ are defined by (\ref{B1B2B3}). 
 Next we prove that $G(x) = 0$ has no solution in $\gf_{2^n}$. Assume that there exists $\gamma\in\gf_{2^n}\backslash\gf_{2^m}$ such that $G(\gamma) =0$. Then it is trivial that $G\left(\gamma^{2^m}\right)=0$. Let $u=\gamma+\gamma^{2^m}$ and $v=\gamma^{2^m+1}$. Then $u,v\in\gf_{2^m}$ and $x^2+ux+v \mid G(x) $.  Furthermore, $G(x)$ can factor as a product of two quadratic irreducible polynomials over $\gf_{2^m}$, i.e., $G(x)=(2,2)$ over $\gf_{2^m}$ since it is trivial that $G(x)$ has no solution in $\gf_{2^m}$.  Thus it suffices to show that $G(x)\neq(2,2)$ over $\gf_{2^m}$, which needs the help of Lemma \ref{lem_deg_4}.

 Let $\alpha^2=\frac{B_2}{A_1}=A_1^2+c$. Consider 
\begin{equation}
\label{G(x)}
x^4G\left(\frac{1}{x}+\alpha\right) = D_0x^4+D_2x^2+A_1x+1,
\end{equation}
where 
\begin{eqnarray*}
D_0 &=& \alpha^4+A_1\alpha^3+B_1\alpha^2+B_2\alpha+B_3\\
& = & B_3 = A_1^4+cA_1^2+c^2
\end{eqnarray*}
and 
$$D_2=A_1\alpha+B_1=c^{\frac{1}{2}}A_1+c.$$
It is clear that $D_0=B_3\neq0$ since $B_3$ also equals $\frac{1}{A_1}$ from Eq. (\ref{B1B2B3:1C}). Let $\tilde{G}(x)=\frac{1}{D_0}x^4G\left(\frac{1}{x}+\alpha\right) = x^4+\frac{D_2}{D_0}x^2+\frac{A_1}{D_0}x+\frac{1}{D_0}$. Then we suffice to show that $\tilde{G}(x)\neq (2,2)$ over $\gf_{2^m}$. Consider 
\begin{equation}
\label{char-Gx}
H(y) = y^3+\frac{D_2}{D_0} y^2+\frac{A_1}{D_0}=0.
\end{equation}
Since
\begin{eqnarray*}
&   & \tr_{2^m}\left(\left.\frac{D_2^3}{D_0^3}\right/\frac{A_1^2}{D_0^2}\right) \\
& = & \tr_{2^m}\left(\frac{\left(c^{\frac{1}{2}}A_1+c\right)^3}{A_1^2\left(A_1^4+cA_1^2+c^2\right)}\right) \\
& = & \tr_{2^m}\left(\frac{c^3A_1^6+c^6+c^4A_1^4+c^5A_1^2}{A_1^{12}+c^2A_1^8+c^4A_1^4}\right)\\
& = & \tr_{2^m}\left(\frac{c^3}{A_1^6+cA_1^4+c^2A_1^2}+\frac{c^6}{A_1^{12}+c^2A_1^8+c^4A_1^4}\right)\\
& = & 0,
\end{eqnarray*}
$\tr_{2^m}\left(\left.\frac{D_2^3}{D_0^3}\right/\frac{A_1^2}{D_0^2}+1\right)=\tr_{2^m}(1)=1$. Furthermore, Eq. (\ref{char-Gx}) has exactly one solution in $\gf_{2^m}$, i.e., $H \neq (1,1,1)$. Moreover, $\tilde{G}\neq(2,2)$ according to Lemma \ref{lem_deg_4}.  

Up to now, we have proved that $F(x)=x^5+\omega c x^3+\omega^3 c^2 x+\omega$ can be decomposed into one binomial and a trinomial. That is to say, there exists $A_1\in\gf_{2^m}$ such that $$F(x) = \left(x^2+A_1x+A_2\right)\left(x^3+A_1x^2+A_3x+A_4\right), $$
where $A_2=\omega A_1^2+c$, $A_3=\omega^2A_1^2+\omega^2c$, $A_4=A_1^3+\omega c A_1$ and under the relationship above, $A_1$ is unique. On one hand, consider $F_1(x)=x^2+A_1x+A_2=0$ firstly. Since 
\begin{eqnarray*}
& & \tr_{2^n}\left(\frac{A_2}{A_1^2}\right) \\
&=& \tr_{2^n}\left(\frac{\omega A_1^2+c}{A_1^2}\right) \\
&=& \tr_{2^n}\left(\omega+\frac{c}{A_1^2}\right) \\
&=& 1,
\end{eqnarray*}
where the last equal sign holds thanks to $\tr_{2^n}\left(\omega\right)=m\equiv1\pmod2$ and $\frac{c}{A_1^2}\in\gf_{2^m}$, $F_1(x)=0$ has no solution in $\gf_{2^n}$ according to Lemma \ref{lem_deg_2}. On the other hand, consider $F_2(x)=x^3+A_1x^2+A_3x+A_4=0$. In order to show that $F(x)=F_1(x)F_2(x)=0$ has exactly one solution in $\gf_{2^n}$, it suffices to prove that 
\begin{equation}
\label{F2}
F_2(x)=x^3+A_1x^2+A_3x+A_4=0
\end{equation}
 has exactly one solution in $\gf_{2^n}$. Let $x=A_1y$. Then Eq. (\ref{F2}) turns to
 \begin{equation}
 \label{F2-y}
 y^3+\left(A_1^2+A_3\right)y+A_1A_3+A_4=0.
 \end{equation}
Since $A_3=\omega^2A_1^2+\omega^2c$ and $A_4=A_1^3+\omega c A_1$, we have $A_1^2+A_3=\omega A_1^2+\omega^2c\neq0$ and $A_1A_3+A_4 = \omega A_1^3+cA_1$. From Lemma \ref{lem_deg_3}, we suffice to claim that 
$$\tr_{2^n}\left(\frac{\left(A_1^2+A_3\right)^3}{\left(\omega A_1^3+cA_1\right)^2}+1\right)=\tr_{2^n}\left(\frac{A_1^6+\omega c A_1^4+\omega^2c^2A_1^2+c^3}{\omega^2 A_1^6 + c^2 A_1^2}\right)=1.$$
As a matter of fact, 
\begin{eqnarray*}
& &\tr_{2^n}\left(\frac{A_1^6+\omega c A_1^4+\omega^2c^2A_1^2+c^3}{\omega^2 A_1^6 + c^2 A_1^2}\right) \\
&=&\tr_{2^n}\left(\omega+\frac{\omega c A_1^4+c^2A_1^2+c^3}{\omega^2 A_1^6 + c^2 A_1^2}\right) \\
&=&1+\tr_{2^n}\left(\frac{\omega D^4 + D^2 + 1}{\omega^2 D^6+D^2}\right)\\
&=&1+\tr_{2^n}\left(\frac{D}{\omega D^3+D}+\frac{1}{\omega^2 D^6 + D^2}\right)\\
&=&1+\tr_{2^n}\left(\frac{D+1}{\omega D^3+D}\right)\\
&=&1+\tr_{2^n}\left(\frac{1}{D}+\frac{\omega D+1}{\omega D^2+1}\right)\\
&=&1+\tr_{2^n}\left(\frac{1}{D}\right)+\tr_{2^n}\left(\frac{D+\omega^2}{D^2+\omega^2}\right)\\
&=&1+\tr_{2^n}\left(\frac{1}{D}\right)+\tr_{2^n}\left(\frac{1}{D+\omega}+\frac{1}{D^2+\omega^2}\right)\\
&=&1,
\end{eqnarray*}
where $D = D_t\left(c^{-\frac{5}{2}},1\right) = c^{-\frac{1}{2}} A_1$ from (\ref{A1_value}) and the last equation holds since $D\in\gf_{2^m}$. Thus Eq. (\ref{F2}) has exactly one solution in $\gf_{2^n}$ and so does Eq. (\ref{trin2-3}). 

{\bfseries Case 2: $\left(A_2,A_3\right)=\left(\omega^2 A_1^2 + \omega^2 c, \omega A_1^2+  c \right)$}. The proof of this case is similar with Case 1 and we omit it here.  

To sum up, Eq. (\ref{trin2-3}) has exactly one solution in $\gf_{2^n}$. It follows that $f(x)$ is $2$-to-$1$ over $\gf_{2^n}$.
\end{proof}

%
\begin{Rem}
	\emph{	Note that the reason why there exist two cases in the proof of the above theorem. In fact, $F(x)=x^5+\omega cx^3+\omega^2 c^2 x^2 + \omega = \left(x+x_0\right) \tilde{F_1}(x)\tilde{F_2}(x), $ where $\tilde{F_1}(x), \tilde{F_2}(x)$ are both quadratic and irreducible over $\gf_{2^n}$ and $x_0$ is the unique solution in $\gf_{2^n}$ of $F(x)=0$. The $A_2$ in Case 1  and that in Case 2 are the coefficients of $x$ of $\tilde{F_1}$ and $\tilde{F_2}$, respectively.  
	}
\end{Rem}

\section{$2$-to-$1$ quadrinomials over $\gf_{2^n}$}
In this section, we consider $2$-to-$1$ quadrinomials of the form $x^k + x^{\ell} + x^d + x$ over $\gf_{{2^n}}$. In detail, we present ten classes of $2$-to-$1$ quadrinomials over $\gf_{{2^n}}$ with $n$ odd and two classes with $n\equiv0\pmod 3$.

\subsection{The infinite classes for $n$ odd}

In the section, we present ten classes of $2$-to-$1$ quadrinomials of the form $x^{k}+x^{\ell} + x^d + x$ over $\gf_{2^n}$ with $n$ odd. An important tool we will use is the resultant of two polynomials. 

\begin{Th}
	\label{n-odd-1}
	Let $n=2m+1$ and $f(x)=x^{2^{m+1}+2}+x^{2^{m+1}}+x^2+x \in\gf_{2^n}[x]$. Then $f(x)$ is $2$-to-$1$ over $\gf_{2^n}$.
\end{Th}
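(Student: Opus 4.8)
The plan is to prove that for every $a\in\gf_{2^n}$ the equation $f(x+a)+f(a)=0$ has exactly two roots $x\in\gf_{2^n}$; since $f(0)=0$ forces $x=0$ to be a root, this reduces to exhibiting exactly one nonzero root. A convenient preliminary is the identity $f(x)=G(x)^2+x$ with $G(x)=x^{2^m+1}+x^{2^m}+x$, so that $f(x+a)+f(a)=\bigl(G(x+a)+G(a)\bigr)^2+x$. Writing $y=x^{2^m}$ and $b=a^{2^m}$ and expanding, the fiber equation becomes the bivariate relation
$$(x^2+a^2+1)\,y^2+(b^2+1)x^2+x=0,$$
which I call $F(x,y)=0$.

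Next I would use the field structure $n=2m+1$. Applying the Frobenius $t\mapsto t^{2^m}$ to $F=0$ and using $x^{2^{m+1}}=y^2$, $y^{2^{m+1}}=x$, $a^{2^{m+1}}=b^2$, $b^{2^{m+1}}=a$, one gets a second relation
$$(x+a+1)\,y^2+y+(b^2+1)x=0,$$
call it $\widetilde F(x,y)=0$, which every genuine solution (with $y=x^{2^m}$) must also satisfy, being the Frobenius image of the first. I then treat $F$ and $\widetilde F$ as quadratics in $y$ and form $R(x)=\mathrm{Res}(F,\widetilde F,y)\in\gf_{2^n}[x]$. The key algebraic fact, which I expect to obtain by expanding the Sylvester determinant, is that this resultant factors completely:
$$R(x)=x\,(x+a+1)^2\,(C_1x+1),\qquad C_1=\bigl((b+1)^2(a+1)+1\bigr)^2+(b+1)^2.$$

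Since any genuine nonzero root is a common zero of $F$ and $\widetilde F$, it lies among the roots $x=a+1$ and $x=1/C_1$ of $R$, giving the upper bound of two solutions. To pin down the count exactly I would argue as follows. Substituting $x=a+1$ (so $y=b+1$) into $F$ gives the $y$-independent value $(a+1)\bigl((a+1)(b+1)^2+1\bigr)$, so $x=a+1$ is a genuine solution only on the locus $(a+1)(b+1)^2=1$ (or the degenerate $a=1$); and precisely on that locus one computes $C_1=(b+1)^2$, whence $1/C_1=a+1$ and the two candidate roots coincide. Thus off this locus the only surviving candidate is $x_0=1/C_1$, and I would finish by checking directly that $x_0\in\gf_{2^n}^{*}$ together with $y_0=x_0^{2^m}$ satisfies $F$, i.e.\ that $f(x_0+a)+f(a)=0$; the remaining degenerate configurations ($a\in\{0,1\}$, the vanishing of a leading coefficient $x^2+a^2+1$ or $x+a+1$, and a possible $C_1=0$) I would dispatch by hand.

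The main obstacle is this last step: the resultant delivers only a necessary condition, so a root of $R$ need not correspond to a Frobenius-compatible pair $(x,x^{2^m})$. Establishing that $x_0=1/C_1$ is an \emph{actual} solution — rather than a spurious common zero arising over $\overline{\gf}_{2^n}$ or from a degree drop where $x^2+a^2+1$ or $x+a+1$ vanishes — is the delicate part, since it forces one to carry $b=a^{2^m}$ and $b^{2^m}=a^{2^{n-1}}$ through a nontrivial verification and to control the bad loci where the resultant can lose or gain solutions.
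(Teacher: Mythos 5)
Your proposal is correct and follows essentially the same route as the paper's proof of Theorem \ref{n-odd-1}: rewrite the fiber equation bivariately, adjoin its Frobenius conjugate, and eliminate $y$ by a resultant. Indeed, your factorization matches the paper's exactly: the paper works with $y=x^{2^{m+1}}$ and $b_{\mathrm{paper}}=a^{2^{m+1}}=b^2$ in your notation, obtains $\mathrm{Res}(F,G,y)=x(x+a+1)\bigl((a^2b_{\mathrm{paper}}^2+a^2+b_{\mathrm{paper}}^2+b_{\mathrm{paper}}+1)x+1\bigr)$, and expanding your $C_1$ gives precisely $a^2b^4+a^2+b^4+b^2+1$, the same constant.

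The one genuine difference is the endgame, and here the paper's version dissolves the obstacle you flag. Where you keep the locus $(a+1)(b+1)^2=1$ alive and argue that the two candidate roots merge on it, the paper shows this locus is empty for $a\neq 0,1$: writing the constraint as $P:=ab^2+a+b^2=0$ and raising to the $2^m$-th power (so $a\mapsto b$, $b^2\mapsto a$) gives $ab+a+b=0$, and eliminating forces $b^2=b$, i.e.\ $a\in\{0,1\}$; the identical two-line trick yields your deferred $C_1\neq 0$. As for the step you call delicate, it is in fact a finite identity check rather than a trap: since $F$ is linear in $y^2$, the condition $F\bigl(x_0,x_0^{2^m}\bigr)=0$ with $x_0=1/C_1$ reduces, after clearing denominators and using $C_1+b^2+1=P^2$ and $1+(a^2+1)C_1^2=P^2\bigl((a^2+1)P^2+1\bigr)$, to the identity
\begin{equation*}
(a^2+1)P^2+1=C_1^{2^{m+1}},
\end{equation*}
which holds identically from $a^{2^{m+1}}=b^2$, $b^{2^{m+1}}=a$. (The paper is equally terse here, simply asserting $x=1/C$; and the cases $a\in\{0,1\}$, which you postpone, are the same one-line checks giving solutions $\{0,1\}$ as in the paper.) So your plan, carried through, is a complete and faithful version of the paper's argument.
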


\begin{proof}
		Let $y=x^{2^{m+1}}$ and $b=a^{2^{m+1}}$ for any $a\in\gf_{2^n}$. Then  $f(x)=x^2y+y+x^2+x$ and $f(x+a)+f(a)= yx^2+\left(a^2+1\right)y+(b+1)x^2+x$ after computing and simplifying and we suffice to show that for any $a\in\gf_{2^n}$, $$f(x+a)+f(a)=0,$$
	i.e.,
	\begin{equation}
	\label{four1}
	 yx^2+\left(a^2+1\right)y+(b+1)x^2+x = 0
	\end{equation}
	has exactly two solutions in $\gf_{2^n}$. It is clear that when $a=0$ or $ 1$, Eq. (\ref{four1}) has exactly two solutions $x=0,1$. Therefore, in the rest of this proof, we assume that $a\neq0,1$ and $x\neq0$. That is to say, it suffices to prove that Eq. (\ref{four1}) has exactly one nonzero solution in $\gf_{2^n}$ for all $a\in\gf_{2^n}\backslash\{0, 1\}$.
	
	Raising Eq. (\ref{four1}) into its $2^{m+1}$-th power and simplifying it by $y^{2^{m+1}}=x^2$ and $b^{2^{m+1}}=a^2$, we have 
	\begin{equation}
\label{four1_1}
x^2y^2+\left(b^2+1\right)x^2+\left(a^2+1\right)y^2+y=0.
\end{equation}	
	Let $$F(x,y) =  yx^2+\left(a^2+1\right)y+(b+1)x^2+y$$ and
$$G(x,y) = x^2y^2+\left(b^2+1\right)x^2+\left(a^2+1\right)y^2+y.$$
After computing, we have
$$ \mathrm{Res}(F,G,y) = x(x+a+1)\left( \left( a^2b^2+a^2+b^2+b+1 \right) x+1 \right).$$
	Therefore, from Eq. (\ref{four1}) and Eq. (\ref{four1_1}), $x=a+1$ or $ \left( a^2b^2+a^2+b^2+b+1 \right) x+1 =0$. In the following, we claim $x\neq a+1$. Suppose the contrary, that $x=a+1$, then $y=b+1$ and from Eq. (\ref{four1}), $(b+1)\left(a^2+1\right)+(a+1)=0$, i.e., $ab+a+b=0$. After raising $ab+a+b=0$ into its $2^{m+1}$-th power, we have $ba^2+b+a^2=0.$ Together with $ab+a+b=0$, we get $a=0$, which leads to a contradiction. 
Hence, it suffices to show $a^2b^2+a^2+b^2+b+1\neq0$ for $a\in\gf_{{2^n}}$. Assume that there exists some $a$ such that 
\begin{equation}
\label{four1_3}
a^2b^2+a^2+b^2+b+1=0.
\end{equation}
Computing $(\ref{four1_3})^{2^{m+1}}+(\ref{four1_3})\times a^2$, we have $a^2\left(b^2+b\right)=b^2+1$, i.e., $a^2b=b+1$ since $b\neq 1$. Together with Eq. (\ref{four1_3}), we get $a^2=1$, leading $a=1$, which is also conflict. 
Thus $x=\frac{1}{a^2b^2+a^2+b^2+b+1}$, which completes the  proof.
\end{proof}

\begin{Th}
	\label{n-odd-2}
	Let $n=2m+1$ and $f(x)=x^{2^{m+1}+2}+x^{2^{m+1}+1}+x^2+x \in\gf_{2^n}[x]$. Then $f(x)$ is $2$-to-$1$ over $\gf_{2^n}$.
\end{Th}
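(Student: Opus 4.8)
The plan is to follow the template of the proof of Theorem~\ref{n-odd-1}. Writing $y=x^{2^{m+1}}$ and $b=a^{2^{m+1}}$, the first observation is that $f$ factors conveniently as $f(x)=\left(x^{2^{m+1}}+1\right)\left(x^2+x\right)=(y+1)(x^2+x)$. Hence $f(x+a)+f(a)=(y+b+1)\left(x^2+x+a^2+a\right)+(b+1)(a^2+a)$, which after cancellation becomes $F(x,y):=\left(x^2+x+a^2+a\right)y+(b+1)\left(x^2+x\right)=0$. It suffices to prove that for every $a$ this equation has exactly two solutions $x\in\gf_{2^n}$. The cases $a\in\{0,1\}$ are immediate: there $b\in\{0,1\}$ and $F$ collapses to $(x^2+x)(y+1)$ or $(x^2+x)y$, whose zero set is exactly $\{0,1\}$. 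So I may assume $a\notin\{0,1\}$, and since $x=0$ is always a root, I need to produce exactly one nonzero root.

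The next step is the resultant elimination. Raising $F=0$ to its $2^{m+1}$-th power and reducing via the field relations $y^{2^{m+1}}=x^2$, $x^{2^{m+1}}=y$, $a^{2^{m+1}}=b$, $b^{2^{m+1}}=a^2$ yields a second equation $G(x,y):=\left(x^2+a^2+1\right)\left(y^2+y\right)+\left(b^2+b\right)x^2=0$, which holds for every genuine $x\in\gf_{2^n}$. Because $F$ is linear in $y$, the resultant $\mathrm{Res}(F,G,y)$ is a single small determinant, and I expect it to factor as
\[
\mathrm{Res}(F,G,y)=(b+1)\,x\,(x+a+1)^2\big[(a^2+a+(a^2+1)b)x+a^2+a\big].
\]
Thus every nonzero $x\in\gf_{2^n}$ solving the system equals $a+1$ or $x^{\ast}:=\dfrac{a^2+a}{a^2+a+(a^2+1)b}$.

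I then dispose of the two remaining issues by short Frobenius bookkeeping. Substituting $x=a+1$, $y=b+1$ into $F$ gives $F(a+1,b+1)=(b+1)(a^2+a)\neq0$ for $a\notin\{0,1\}$, so $a+1$ is spurious. To see that $x^{\ast}$ is well defined and nonzero I must show the denominator $\Delta:=a^2+a+(a^2+1)b$ does not vanish: $\Delta=0$ is equivalent to $(a+1)b=a$, i.e.\ $b=a/(a+1)$, and applying $t\mapsto t^{2^{m+1}}$ together with $a^{2^{m+1}}=b$, $b^{2^{m+1}}=a^2$ forces $a^2=a$, contradicting $a\notin\{0,1\}$. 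Since the numerator $a^2+a=a(a+1)$ is nonzero, $x^{\ast}\neq0$.

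The subtle point, and the step I expect to be the main obstacle, is that vanishing of the resultant is only necessary: a root $x^{\ast}$ merely produces a common $y$ of $F(x^{\ast},\cdot)$ and $G(x^{\ast},\cdot)$, and I still have to confirm this $y$ equals $(x^{\ast})^{2^{m+1}}$, i.e.\ that $x^{\ast}$ is a \emph{genuine} solution of $F\!\left(x,x^{2^{m+1}}\right)=0$ rather than a parasitic root introduced by the elimination. I would verify this directly: compute $(x^{\ast})^{2^{m+1}}=\dfrac{b^2+b}{b^2+b+(b^2+1)a^2}$ from the Frobenius action on $x^{\ast}$, substitute into $F$, and clear denominators; carrying out the reduction, both sides collapse to the same expression $a^4b^2+a^4+a^2b+a^2+b^2+b$, so the identity $F\!\left(x^{\ast},(x^{\ast})^{2^{m+1}}\right)=0$ holds. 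Together with $x=0$ and the exclusion of $a+1$, this shows $F=0$ has exactly the two roots $0$ and $x^{\ast}$ for every $a\notin\{0,1\}$; combined with the trivial cases $a\in\{0,1\}$, $f$ is $2$-to-$1$. The only genuinely heavy computations are the determinant giving the displayed factorization and this final genuineness substitution; everything else reduces to the Frobenius relations above.
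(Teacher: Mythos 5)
Your proof is correct and takes essentially the same route as the paper's: your $F(x,y)=(x^2+x+a^2+a)y+(b+1)(x^2+x)$ and $G$ are exactly the paper's two equations regrouped, and your resultant agrees with the paper's $(a+1)(b+1)x(x+a+1)((ab+a+b)x+a)$ because $a^2+a+(a^2+1)b=(a+1)(ab+a+b)$ (your version even carries the correct multiplicity $(x+a+1)^2$, which the paper's display drops, though this does not change the root set), leading to the same unique nonzero solution $x^{\ast}=a/(ab+a+b)$. The three checks you spell out --- discarding $x=a+1$ via $F(a+1,b+1)=(b+1)(a^2+a)\neq 0$, the non-vanishing of $ab+a+b$ by Frobenius, and the genuineness substitution confirming $F\bigl(x^{\ast},(x^{\ast})^{2^{m+1}}\bigr)=0$ --- are precisely what the paper compresses into ``it is easy to check,'' so your more explicit treatment fills in rather than diverges from the published argument.
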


\begin{proof}
	Let  $y=x^{2^{m+1}}$ and $b=a^{2^{m+1}}$ for any $a\in\gf_{2^n}$. Then $f(x)=x^2y+xy+x^2+x$ and $f(x+a)+f(a)=x^2y+xy+(b+1)x^2+\left(a^2+a\right)y+(b+1)x$ after computing and simplifying. According to the definition, it suffices to show that for any $a\in\gf_{2^n}$, $$f(x+a)+f(a)=0,$$
	i.e.,
	\begin{equation}
	\label{four2}
	x^2y+xy+(b+1)x^2+\left(a^2+a\right)y+(b+1)x = 0
	\end{equation}
	has exactly two solutions in $\gf_{2^n}$.  It is clear that $x=0$ is a solution of Eq. (\ref{four2}) and when $a=1$, Eq. (\ref{four2}) has exactly two solutions $x=0,1$ in $\gf_{{2^n}}$. In addition, when $a=0$, Eq. (\ref{four2}) becomes $\left(x^2+x\right)(y+1)=0$, meaning $x=0$ or $1$. In the following, we assume $a\in\gf_{2^n}\backslash\{0,1\}$ and consider the nonzero solution of Eq. (\ref{four2}) in $\gf_{2^n}$. 
	
	Raising Eq. (\ref{four2}) into its $2^{m+1}$-th power and simplifying it by $y^{2^{m+1}} = x^2$ and $b^{2^{m+1}} = a^2$, we obtain 
	\begin{equation}
	\label{four2_1}
	x^2y^2+x^2y+\left(a^2+1\right)y^2+\left(b^2+b\right)x^2+\left(a^2+1\right)y = 0.
	\end{equation}
	Let $$F(x,y) =  	x^2y+xy+(b+1)x^2+\left(a^2+a\right)y+(b+1)x = 0 $$ and
$$G(x,y) = 	x^2y^2+x^2y+\left(a^2+1\right)y^2+\left(b^2+b\right)x^2+\left(a^2+1\right)y.$$
After computing, we have
$$ \mathrm{Res}(F,G,y) =  (a+1)(b+1)x(x+a+1)\left( (ab+a+b)x+a\right).$$

Similarly, it is easy to check that $x=\frac{a}{ab+a+b}$ is the only solution of Eq. (\ref{four2}) in $\gf_{2^n}^{*}$.	This completes the  proof.
\end{proof}

\begin{Th}
	\label{n-odd-3}
	Let $n=2m+1$ and $f(x)=x^{2^{m+2}+4}+x^{2^{m+1}+2}+x^2+x \in\gf_{2^n}[x]$. Then $f(x)$ is $2$-to-$1$ over $\gf_{2^n}$.
\end{Th}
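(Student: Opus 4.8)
The plan is to follow the multivariate elimination strategy already used for Theorems \ref{n-odd-1} and \ref{n-odd-2}. First I would set $y=x^{2^{m+1}}$ and $b=a^{2^{m+1}}$ for $a\in\gf_{2^n}$. Since $x^{2^{m+1}+2}=x^2y$ and $x^{2^{m+2}+4}=\left(x^{2^{m+1}}\right)^2x^4=x^4y^2$, the map becomes $f(x)=x^4y^2+x^2y+x^2+x$. Expanding $f(x+a)+f(a)$ in characteristic $2$ via $(x+a)^4=x^4+a^4$ and $(y+b)^2=y^2+b^2$, and cancelling the purely $a$-dependent part, I expect to reach
$$F(x,y):=\left(x^4+a^4\right)y^2+\left(x^2+a^2\right)y+b^2x^4+(b+1)x^2+x=0,$$
which is quadratic in $y$. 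As before, $f$ is $2$-to-$1$ if and only if this equation has exactly two solutions $x$ for every $a$; since $x=0$ is always a solution, the task is to exhibit exactly one nonzero $x$. The degenerate values $a\in\{0,1\}$ both reduce the count to determining $f^{-1}(0)$, which I would show equals $\{0,1\}$ by factoring $f(x)=x\left(x^3y^2+xy+x+1\right)$ with $y=x^{2^{m+1}}$ and checking that the cofactor vanishes only at $x=1$.

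For $a\in\gf_{2^n}\setminus\{0,1\}$ I would raise $F=0$ to its $2^{m+1}$-th power. Because $n=2m+1$ forces $y^{2^{m+1}}=x^{2^{n+1}}=x^2$ and $b^{2^{m+1}}=a^2$, this is the substitution $x\mapsto y$, $y\mapsto x^2$, $a\mapsto b$, $b\mapsto a^2$, and it produces the companion relation
$$G(x,y):=\left(x^4+a^4\right)y^4+\left(x^2+a^2+1\right)y^2+y+b^4x^4+b^2x^2=0,$$
which is quartic in $y$. The core step is then to eliminate $y$ by forming $R(x):=\mathrm{Res}(F,G,y)\in\gf_{2^n}[x]$: any genuine solution, having $y=x^{2^{m+1}}$, forces $x$ to be a root of $R$. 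I note that the two leading coefficients in $y$ share the factor $\left(x+a\right)^4$, so I would watch for the degree drop at $x=a$ and for spurious powers of $(x+a)$ entering $R$.

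The principal obstacle is precisely this elimination. With $\deg_yG=4$ the Sylvester matrix is $6\times6$, so $R(x)$ is a far bulkier polynomial (with coefficients in $\gf_{2^n}[a,b]$) than the degree-$3$ resultants of Theorems \ref{n-odd-1} and \ref{n-odd-2}, and it will likely carry several candidate factors rather than the clean product of three linear factors seen there. I would compute and factor $R(x)$ with computer algebra and then confirm the factorization by hand, expecting a visible factor $x$ together with low-degree cofactors in $x$ whose coefficients are polynomials in $a$ and $b$.

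Since the resultant condition is only necessary, each candidate root $x_0$ of $R$ must be tested against the original equation through $F\!\left(x_0,x_0^{2^{m+1}}\right)=0$. I expect exactly one genuine nonzero root, read off from the relevant linear cofactor as $x=Q(a,b)/P(a,b)$, and I expect the remaining candidates (of the shape $x=a+1$, as in the earlier theorems, or roots of $(x+a)$ coming from the leading-coefficient degeneracy) to be extraneous. Those I would discard exactly as before: substitute the candidate into $F=0$, raise the resulting identity in $a,b$ to its $2^{m+1}$-th power, and combine it with its image to collapse to $a\in\{0,1\}$, against our assumption; for $x=a+1$ this amounts to checking that $a^4b^2+a^2b+a^2+a=0$ together with its Frobenius image has no solution with $a\notin\{0,1\}$. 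Finally, to certify the genuine solution I must show $P(a,b)\neq0$ for all such $a$, which I would prove by the same squaring device: assuming $P(a,b)=0$, applying $(\cdot)^{2^{m+1}}$ with $b=a^{2^{m+1}}$, and deducing $a^2=a$, i.e.\ $a\in\{0,1\}$, a contradiction. This yields exactly one nonzero solution for each admissible $a$, hence exactly two solutions in all, so $f$ is $2$-to-$1$ over $\gf_{2^n}$.
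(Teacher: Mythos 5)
Your proposal is correct and follows the paper's proof essentially step for step: the same rewriting $f(x)=x^4y^2+x^2y+x^2+x$ with $y=x^{2^{m+1}}$, the identical pair $F(x,y)$, $G(x,y)$ (your two displayed equations expand to exactly the paper's), elimination of $y$ via a machine-computed resultant, verification of candidate roots against the original equation, and the same Frobenius-squaring device to rule out degenerate denominators; your handling of $a\in\{0,1\}$ through $f(1)=0$ is a harmless streamlining of the paper's direct checks. The only divergence is in the anticipated specifics, which you rightly deferred to the computation: the paper's resultant factors as $x^2(x+a)^8\left(x+\frac{1}{b}\right)^2\left(a^2bx+x+a^2\right)^2\left(a^2b^2x+bx+x+1\right)^2$, so the extraneous candidates are $x=a$, $x=\frac{1}{b}$ and $x=\frac{a^2}{a^2b+1}$ rather than $x=a+1$, and the unique nonzero solution is $x=\frac{1}{a^2b^2+b+1}$.
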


\begin{proof}
		Let  $y=x^{2^{m+1}}$ and $b=a^{2^{m+1}}$ for any $a\in\gf_{2^n}$. Then $f(x) = x^4y^2+x^2y+x^2+x$ and $f(x+a)+f(a) = x^4y^2+b^2x^4+a^4y^2+x^2y+bx^2+a^2y+x^2+x$ after simplifying. We suffice to show that for any $a\in\gf_{2^n}$, 
		$$f(x+a)+f(a)=0,$$
		i.e.,
		\begin{equation}
		\label{four3}
		x^4y^2+b^2x^4+a^4y^2+x^2y+bx^2+a^2y+x^2+x = 0
		\end{equation}
		has exactly two solutions in $\gf_{2^n}$. When $a=0$, Eq. (\ref{four3}) becomes $x^4y^2+x^2y+x^2+x = x(xy+1)\left(x^2y+x+1\right) =0$ and it is easy to check that $x=0, 1$  are all solutions of Eq. (\ref{four3}) in $\gf_{2^n}$. Similarly, we can show that when $a=1$, Eq. (\ref{four3}) also has exactly two solutions $x=0,1$ in $\gf_{2^n}$. In the following, we assume $a\in\gf_{2^n}\backslash\{0,1\}$.
		
		Raising Eq. (\ref{four3}) into its $2^{m+1}$-th power, we obtain 
		\begin{equation}
        \label{four3_1}
        x^4y^4+a^4y^4+b^4x^4+y^2x^2+a^2y^2+b^2x^2+y^2+y = 0.
        \end{equation}		
        Let $$F(x,y) = 	x^4y^2+b^2x^4+a^4y^2+x^2y+bx^2+a^2y+x^2+x$$ and 
        $$G(x,y) = 
        x^4y^4+a^4y^4+b^4x^4+y^2x^2+a^2y^2+b^2x^2+y^2+y.$$
        Then due to MAGMA, we obtain
        $$ \mathrm{Res}(F,G,y) = x^2(x+a)^8\left(x+\frac{1}{b}\right)^2\left(a^2bx+x+a^2\right)^2\left(a^2b^2x+bx+x+1\right)^2.$$ 
       Since $\gcd\left(2^{m+1}+2, 2^n-1\right)=1$, $a^2b+1\neq0$ for any $a\in\gf_{2^n}\backslash\{0,1\}$. In addition, we claim that $a^2b^2+b+1\neq0$. Assume $a^2b^2+b+1=0$, raising it into its $2^{m+1}$-th power, we get $a^4b^2+a^2+1=0$, i.e., $a^2b+a+1=0$. Together with $a^2b^2+b+1=0$, we have $ab+1=0$, which leads to a contradiction since $a^2b^2+b+1=0$. Thus the solutions of Eq. (\ref{four3}) may have $x=0, a, \frac{1}{b}, \frac{a^2}{a^2b+1}, \frac{1}{a^2b^2+b+1}$. After checking carefully, we find that Eq. (\ref{four3}) has exactly two solutions $x=0$ and $\frac{1}{a^2b^2+b+1}$, which completes the  proof.
\end{proof}

\begin{Th}
	\label{n-odd-4}
	Let $n=2m+1$ and $f(x)=x^{2^{n}-2^{m+1}+2}+x^{2^{m+1}}+x^2+x \in\gf_{2^n}[x]$. Then $f(x)$ is $2$-to-$1$ over $\gf_{2^n}$.
\end{Th}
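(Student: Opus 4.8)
The plan is to adapt the resultant-elimination scheme of Theorems~\ref{n-odd-1}--\ref{n-odd-3} to the present exponent, the only genuinely new feature being the negative part of the leading exponent, which forces us to work with a rational rather than polynomial expression for $f$. Put $y=x^{2^{m+1}}$ and $b=a^{2^{m+1}}$; since $n=2m+1$ one has $y^{2^{m+1}}=x^2$ and $b^{2^{m+1}}=a^2$ (and dually $x=y^{2^m}$, $a=b^{2^m}$). As $x^{2^n}=x$ on $\gf_{2^n}$, the leading term satisfies $x^{2^n-2^{m+1}+2}=x^3/y$ for $x\neq0$, so that
\begin{equation*}
f(x)=\frac{x^3}{y}+y+x^2+x,
\end{equation*}
where, exactly as in Theorem~\ref{trinomials-1}, the first term is read as $0$ when $x=0$. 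Since $(x+a)^{2^{m+1}}=y+b$ and everything cancels in characteristic~$2$, the equation $f(x+a)+f(a)=0$ becomes $\frac{(x+a)^3}{y+b}+\frac{a^3}{b}+y+x^2+x=0$. I would first verify that $x=0$ is always a solution and dispose of the degenerate inputs $a=0$ and $a=1$ (equivalently $b=0$ and $b=1$) directly, checking that there the solution set is the expected pair; handling $a=0$ also records that the only zeros of $f$ are $0$ and $1$, a fact used below.

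For generic $a\notin\{0,1\}$ I would clear denominators by multiplying through by $b(y+b)$, obtaining the bivariate polynomial
\begin{equation*}
F(x,y)=by^2+bx^2y+bxy+(a^3+b^2)y+bx^3+b(a+b)x^2+b(a^2+b)x,
\end{equation*}
of degree $2$ in $y$. Applying the Frobenius substitution $x\mapsto y,\ y\mapsto x^2,\ a\mapsto b,\ b\mapsto a^2$ (i.e. raising to the $2^{m+1}$-th power and using the relations above) yields a companion equation $G(x,y)=0$ of degree $3$ in $y$. Eliminating $y$ through $\mathrm{Res}(F,G,y)$---a $5\times5$ Sylvester determinant that I would evaluate with MAGMA as in Theorem~\ref{n-odd-3}---produces a polynomial in $x$ whose factorization over $\gf_{2^n}$ isolates the candidate roots: the always-present $x=0$, the root $x=a$ introduced by the cleared factor $y+b$, and one or two further linear factors whose coefficients are polynomials in $a,b$.

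The decisive step, and the one I expect to be hardest, is to show that exactly one nonzero candidate is a true solution. The spurious root $x=a$ must be discarded: substituting $x=a$ into the original equation gives $f(a)=0$, impossible for $a\notin\{0,1\}$ by the $a=0$ analysis. The squaring inherent in passing to $G$ may also create roots of $G=0$ that fail $F=0$, so each remaining candidate has to be fed back into $F=0$ and eliminated, precisely as $x=a+1$ was ruled out in Theorem~\ref{n-odd-1}. What remains is to prove that the leading coefficient of the surviving linear factor never vanishes for admissible $a$; I anticipate that assuming it vanishes, raising that relation to its $2^{m+1}$-th power, and combining it with the original, collapses to $a\in\{0,1\}$---a contradiction of the same flavour as the nonvanishing of $a^2b^2+a^2+b^2+b+1$ in Theorem~\ref{n-odd-1}. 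Once this is secured, the unique surviving nonzero $x$ together with $x=0$ gives exactly two solutions for every $a$, which is the $2$-to-$1$ property.
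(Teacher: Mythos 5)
Your proposal reproduces the paper's proof essentially step for step: the same substitution $y=x^{2^{m+1}}$, $b=a^{2^{m+1}}$ with $f(x)=\frac{x^3}{y}+y+x^2+x$, the same disposal of $a\in\{0,1\}$ and of the spurious root $x=a$, the identical polynomial $F(x,y)$ after clearing $b(y+b)$, the Frobenius companion $G(x,y)$ of degree $3$ in $y$, and elimination via $\mathrm{Res}(F,G,y)$ computed in MAGMA followed by back-substitution of the candidate roots (the paper's resultant factors as $(a+b)^3(a^2+b)^2\,x\,(x+a)^2(bx+a^2)(ax+b)(abx+a^3+ab+b^2)$, with $x=\frac{a^3+ab+b^2}{ab}$ the unique nonzero genuine solution). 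The only cosmetic deviation is that three candidate linear factors appear rather than the "one or two" you anticipate, but your plan of testing each candidate in the original equation is precisely what the paper does.
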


\begin{proof}
		Let  $y=x^{2^{m+1}}$ and $b=a^{2^{m+1}}$ for any $a\in\gf_{2^n}$. Then $f(x) = \frac{x^3}{y} + y + x^2 + x$ and $f(x+a)+f(a) = \frac{(x+a)^3}{y+b}+\frac{a^3}{b} + y + x^2+x,$ where $\frac{1}{0}$ equals $0$. We suffice to show that for any $a\in\gf_{2^n}$, 
		$$f(x+a)+f(a)=0,$$
		i.e.,
		\begin{equation}
		\label{four4}
		\frac{(x+a)^3}{y+b}+\frac{a^3}{b} + y + x^2+x = 0
		\end{equation}
		has exactly two solutions in $\gf_{2^n}$. When $a=0$, Eq. (\ref{four4}) turns to $\frac{x^3}{y}+y+x^2+x =\frac{(x+y)\left(x^2+y\right)}{y} =0$, which has exactly two solutions $x=0,1$ in $\gf_{2^n}$ since $\gcd\left(2^{m+1}-2,2^n-1\right)=1$ and $\gcd\left(2^{m+1}-1,2^n-1\right)=1$.  On the other hand, if $a\neq0$ and $x=a$ is a solution of  Eq. (\ref{four4}),  then $\frac{a^3}{b}+b+a^2+a=0$, which means $a=1$. In addition, when $a=1$, it is easy to check that Eq. (\ref{four4}) has also exactly two solutions $x=0,1$ in $\gf_{2^n}$. Thus in the following, we assume $a\in\gf_{2^n}\backslash\{0,1\}$, which means $x=a$ is not a solution of Eq. (\ref{four4}).  After multiplying $b(y+b)$ to Eq. (\ref{four4}) and simplifying, we have 
		\begin{equation}
		\label{four4_1}
		by^2+\left(bx^2+bx+a^3+b^2\right)y+bx^3+\left(ab+b^2\right)x^2+\left(a^2b+b^2\right)x=0.
		\end{equation}  
		Raising Eq. (\ref{four4_1}) into its $2^{m+1}$-th power, we get 
		\begin{equation}
		\label{four4_2}
		a^2x^4+\left(a^4+b^3+a^2y^2+a^2y\right)x^2+a^2y^3+\left(a^2b+a^4\right)y^2+\left(b^2a^2+a^4\right)y=0.
		\end{equation} 
		Let $$F(x,y) = 	by^2+\left(bx^2+bx+a^3+b^2\right)y+bx^3+\left(ab+b^2\right)x^2+\left(a^2b+b^2\right)x$$ and
		$$G(x,y) = 	a^2x^4+\left(a^4+b^3+a^2y^2+a^2y\right)x^2+a^2y^3+\left(a^2b+a^4\right)y^2+\left(b^2a^2+a^4\right)y.$$
		Then thanks to MAGMA, we get
		$$ \mathrm{Res}(F,G,y) = (a+b)^3\left(a^2+b\right)^2x(x+a)^2\left(bx+a^2\right)(ax+b)\left(abx+a^3+ab+b^2\right).$$
		Thus the common solution of Eqs. (\ref{four4_1}) and (\ref{four4_2}) may be $x=0, a, \frac{a^2}{b}, \frac{b}{a}, \frac{a^3+ab+b^2}{ab}$. After testing carefully, Eq. (\ref{four4}) has exactly two solutions  $x=0$ and $\frac{a^3+ab+b^2}{ab}$ in $\gf_{2^n}$.
\end{proof}

\begin{Th}
	\label{n-odd-5}
	Let $n=2m+1$ and $f(x)=x^{2^n-2}+x^{2^n-2^{m+1}}+x^{2^n-2^{m+1}-2}+x$. Then $f(x)$ is $2$-to-$1$ over $\gf_{{2^n}}$.
\end{Th}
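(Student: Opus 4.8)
The plan is to follow the resultant-based strategy of Theorems \ref{n-odd-1}--\ref{n-odd-4}. First I would set $y=x^{2^{m+1}}$ and $b=a^{2^{m+1}}$ for any $a\in\gf_{2^n}$. Since every element of $\gf_{2^n}$ satisfies $x^{2^n}=x$, the four exponents reduce modulo $2^n-1$: one has $x^{2^n-2}=\frac{1}{x}$, $\,x^{2^n-2^{m+1}}=\frac{x}{y}$, and $x^{2^n-2^{m+1}-2}=\frac{1}{xy}$ (with the convention $\frac{1}{0}=0$), so that
\[
f(x)=\frac{1}{x}+\frac{x}{y}+\frac{1}{xy}+x=\frac{(x+1)^2(y+1)}{xy}.
\]
As in the earlier proofs, it suffices to show that $f(x+a)+f(a)=0$ has exactly two solutions in $\gf_{2^n}$ for every $a\in\gf_{2^n}$; note that $x=0$ is always one of them, since $f(a+0)+f(a)=0$ holds trivially. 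Thus the task reduces to proving that there is \emph{exactly one} nonzero solution for each $a$.

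Next I would dispose of the degenerate cases. Using $y+b=(x+a)^{2^{m+1}}$ gives the companion form $f(x+a)=\frac{(x+a+1)^2(y+b+1)}{(x+a)(y+b)}$, whose denominators vanish only at $x=a$. A direct substitution settles $a=0$ and $a=1$ (recall $b=a^{2^{m+1}}$, so $b=1\iff a=1$ because $\gcd(2^{m+1},2^n-1)=1$), where I expect the two solutions to be $x=0$ and $x=1$. For $a\in\gf_{2^n}\setminus\{0,1\}$, at the point $x=a$ the convention gives $f(x+a)=f(0)=0$, so $x=a$ is a solution iff $f(a)=\frac{(a+1)^2(b+1)}{ab}=0$, which forces $a=1$ and is therefore excluded. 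Hence for $a\neq0,1$ I can clear denominators without introducing spurious roots and obtain a polynomial identity $F(x,y)=0$; raising it to its $2^{m+1}$-th power and substituting $y^{2^{m+1}}=x^2$ and $b^{2^{m+1}}=a^2$ produces a second polynomial $G(x,y)=0$.

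With $F$ and $G$ in hand I would compute $\mathrm{Res}(F,G,y)\in\gf_{2^n}[x]$, most likely with MAGMA as in Theorems \ref{n-odd-3} and \ref{n-odd-4}, and factor it. By the resultant property recalled in Section 2, every common zero $(x,y)$ of $F=G=0$ has $x$ a root of $\mathrm{Res}(F,G,y)$, so each nontrivial factor supplies a candidate value of $x$ as an explicit rational function of $a$ and $b$. The main obstacle, exactly as in Theorems \ref{n-odd-1} and \ref{n-odd-3}, will be twofold. First, the linear factor that produces the genuine root carries a coefficient (an expression in $a,b$ of the shape $a^2b^2+\cdots$) which must be shown to be nonzero for every admissible $a$; I expect to handle this by a Frobenius-twist argument, applying the $2^{m+1}$-power map to the hypothetical vanishing relation and combining it with $b=a^{2^{m+1}}$ to force $a\in\{0,1\}$, a contradiction. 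Second, I must verify that the remaining candidate roots coming from the other factors (such as $x=a$ and the other rational expressions appearing in the factorization) are \emph{not} actual solutions of $f(x+a)+f(a)=0$, so that precisely one nonzero value survives. Once this bookkeeping is complete, the equation has exactly the two solutions $x=0$ and the surviving nonzero root for every $a\in\gf_{2^n}$, and therefore $f$ is $2$-to-$1$ over $\gf_{2^n}$.
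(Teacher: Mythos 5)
Your proposal matches the paper's proof essentially step for step: the same substitution $y=x^{2^{m+1}}$, $b=a^{2^{m+1}}$, the same direct treatment of $a\in\{0,1\}$ and exclusion of $x=a$, clearing denominators to get $F$, Frobenius-twisting to get $G$, and computing $\mathrm{Res}(F,G,y)$ (which the paper finds to be $ab(a+1)^2(b+1)^2x(x+a)^2(x+a+1)^2(abx+a^2b+a^2+b+1)$), leaving $x=0$ and $x=\frac{a^2b+a^2+b+1}{ab}$ as the two solutions. The only minor simplification you miss is that the nonvanishing of the constant term needs no Frobenius-twist argument here, since $a^2b+a^2+b+1=(a^2+1)(b+1)\neq 0$ directly for $a\notin\{0,1\}$.
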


\begin{proof}
	Let  $y=x^{2^{m+1}}$ and $b=a^{2^{m+1}}$ for any $a\in\gf_{2^n}$. Then $f(x) = \frac{1}{x} + \frac{x}{y} + \frac{1}{xy} + x$ and $f(x+a)+f(a) = \frac{1}{x+a}+\frac{x+a}{y+b}+\frac{1}{(x+a)(y+b)}+x+\frac{1}{a}+\frac{1}{ab}+\frac{a}{b},$ where $\frac{1}{0}$ equals $0$. We suffice to show that for any $a\in\gf_{2^n}$, 
	$$f(x+a)+f(a)=0,$$
	i.e.,
	\begin{equation}
	\label{four5}
\frac{1}{x+a}+\frac{x+a}{y+b}+\frac{1}{(x+a)(y+b)}+x= \frac{a^2+b+1}{ab}
	\end{equation}
	has exactly two solutions in $\gf_{2^n}$. When $a=0$, Eq. (\ref{four5}) turns to $ \frac{1}{x} + \frac{x}{y} + \frac{1}{xy} + x =\frac{x^2y+x^2+y+1}{xy} =0$, which has exactly two solutions $x=0,1$ in $\gf_{2^n}$.  On the other hand, if $a\neq0$ and $x=a$ is a solution of  Eq. (\ref{four5}),  then $\frac{1}{a} + \frac{a}{b} + \frac{1}{ab} + a =0$, which means $a=1$. In addition, when $a=1$, it is easy to check that Eq. (\ref{four5}) has also exactly two solutions $x=0,1$ in $\gf_{2^n}$. Thus in the following, we assume $a\in\gf_{2^n}\backslash\{0,1\}$, which means $x=a$ is not a solution of Eq. (\ref{four5}).
	After multiplying $ab(x+a)(y+b)$ to Eq. (\ref{four5}) and simplifying, we have 
	\begin{equation}
	\label{four5_1}
abx^2y + (ab^2 + ab) x^2  + (a^2b+a^2+b+1)  xy + (a^2b^2+ a^2b + b^2 +b) x + (a^3+a)  y=0.
	\end{equation}  
		Raising Eq. (\ref{four5_1}) into its $2^{m+1}$-th power, we obtain 
		\begin{equation}
	\label{four5_2}
a^2bx^2y^2+( a^4b+a^2b )y^2 +(a^2b^2+b^2+a^2+1)x^2y + (a^4b^2+a^2b^2 +a^4+a^2)y+(b^3+b)x^2 =0.
	\end{equation} 	
	Let $$F(x,y) = 		abx^2y + (ab^2 + ab) x^2  + (a^2b+a^2+b+1)  xy + (a^2b^2+ a^2b + b^2 +b) x + (a^3+a)  y$$ and
	$$G(x,y) = 	a^2bx^2y^2+( a^4b+a^2b )y^2 +(a^2b^2+b^2+a^2+1)x^2y + (a^4b^2+a^2b^2 +a^4+a^2)y+(b^3+b)x^2.$$
	Then the resultant of $F$ and $G$ aiming to $y$ is
	$$ \mathrm{Res}(F,G,y) = ab(a+1)^2(b+1)^2x(x+a)^2(x+a+1)^2(abx+a^2b+a^2+b+1).$$ 
Clearly,  $a^2b+a^2+b+1=(a^2+1)(b+1)\neq0$ for any $a\in \gf_{2^n}\backslash\{0,1\}$. After checking carefully, we have Eq. (\ref{four5}) has exactly two solutions $x=0$ and $x= \frac{a^2b+a^2+b+1}{ab}$ in $\gf_{{2^n}}$.
\end{proof}

\begin{Th}
	\label{n-odd-6}
	Let $n=2m+1$ and $f(x) = x^{2^n-2}+x^{2^n-2^{m+1}}+x^{2^{m+1}-1}+x$. Then $f(x)$ is $2$-to-$1$ over $\gf_{{2^n}}$.
\end{Th}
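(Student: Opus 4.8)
The plan is to follow the same blueprint as the proof of Theorem~\ref{n-odd-5}, to which the present statement is entirely parallel: the only change is that the third exponent $2^n-2^{m+1}-2$ (giving the term $\tfrac{1}{xy}$) is replaced by $2^{m+1}-1$ (giving $\tfrac{y}{x}$). First I would set $y=x^{2^{m+1}}$ and $b=a^{2^{m+1}}$, and use $x^{2^n-1}=1$ to rewrite, for $x\neq0$, $x^{2^n-2}=\tfrac1x$, $x^{2^n-2^{m+1}}=\tfrac{x}{y}$ and $x^{2^{m+1}-1}=\tfrac{y}{x}$, so that (with the convention $\tfrac10=0$, consistent with $f(0)=0$) one has $f(x)=\tfrac1x+\tfrac{x}{y}+\tfrac{y}{x}+x$. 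Writing out $f(x+a)+f(a)=0$ and moving the constant term to the right, the goal becomes to show that
\[ \frac{1}{x+a}+\frac{x+a}{y+b}+\frac{y+b}{x+a}+x=\frac{a^2+b^2+b}{ab} \]
has exactly two solutions $x\in\gf_{2^n}$ for every $a\in\gf_{2^n}$.

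Next I would dispose of the degenerate cases. For $a=0$, clearing denominators by $xy$ gives $x^2y+x^2+y^2+y=(x^2+y)(y+1)=0$; since $n=2m+1$ forces $\gcd(m,n)=1$, we have $\gcd(2^{m+1},2^n-1)=\gcd(2^{m+1}-2,2^n-1)=1$, so both $y=1$ and $y=x^2$ yield only $x=1$, and together with $x=0$ this gives exactly two solutions. The case $a=1$ is checked directly to have the two solutions $x=0,1$. For $a\in\gf_{2^n}\setminus\{0,1\}$ I would first note that $x=a$ is not a solution: $x=a$ forces $y+b=(x+a)^{2^{m+1}}=0$ and collapses the equation to $f(a)=0$, whose numerator $a^2b+a^2+b^2+b$ factors as $(a^2+b)(b+1)$ and vanishes only when $a=1$. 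Hence for admissible $a$ one may multiply through by $ab(x+a)(y+b)$ to obtain a genuine polynomial identity $F(x,y)=0$, quadratic in both $x$ and $y$.

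The heart of the argument is the elimination of $y$. Raising $F(x,y)=0$ to its $2^{m+1}$-th power and reducing via $x^{2^{m+1}}=y$, $y^{2^{m+1}}=x^2$, $a^{2^{m+1}}=b$ and $b^{2^{m+1}}=a^2$ produces a second polynomial relation $G(x,y)=0$. Computing $\mathrm{Res}(F,G,y)\in\gf_{2^n}[x]$ (with MAGMA, as in the earlier theorems) and factoring it, every common zero of $F$ and $G$ must have its $x$-coordinate among the roots of this resultant, by the resultant property recalled in Section~2. Experience with Theorem~\ref{n-odd-5} and its siblings suggests that the factorization will consist of $x$, a few factors $x+(\text{rational in }a,b)$ accounting for spurious candidates such as $x=0,a,\dots$, and a single linear factor $cx+d$, with $c,d\in\gf_{2^n}[a,b]$, that carries the genuine nonzero solution $x=d/c$.

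The main obstacle, exactly as before, is twofold: showing that the leading coefficient $c$ of the decisive linear factor never vanishes for $a\in\gf_{2^n}\setminus\{0,1\}$, and then verifying by back-substitution that among all candidate roots precisely one nonzero value actually satisfies the original equation, the others being extraneous roots introduced by raising to the $2^{m+1}$-th power. For the non-vanishing I would assume $c=0$, raise that relation to its $2^{m+1}$-th power to obtain a conjugate identity, and combine the two (mirroring the treatment of $a^2b^2+a^2+b^2+b+1\neq0$ in Theorem~\ref{n-odd-5}) to force $a\in\{0,1\}$, a contradiction. The final back-substitution is the only laborious step, but it is purely mechanical once the resultant has been factored.
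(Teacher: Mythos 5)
Your proposal follows essentially the same route as the paper's own proof: the same substitution $y=x^{2^{m+1}}$, $b=a^{2^{m+1}}$ giving $f(x)=\frac{1}{x}+\frac{x}{y}+\frac{y}{x}+x$, the same disposal of $a=0$, $a=1$ and exclusion of $x=a$ via the factorization $(a^2+b)(b+1)$, then clearing denominators by $ab(x+a)(y+b)$, taking the $2^{m+1}$-th power conjugate, computing $\mathrm{Res}(F,G,y)$ by MAGMA, and back-substituting. The only deviation from your forecast is benign: the resultant factors as $a^2b^2x^2(x+a)^2(bx+ab+a)^2(ax+a^2+1)^2(ax+a^2+b)^2$, so the leading coefficients of the decisive linear factors are simply $a$ and $b$ (your anticipated non-vanishing argument is not needed), and three candidates $x=a+\frac{a}{b},\,a+\frac{1}{a},\,a+\frac{b}{a}$ survive to the back-substitution step, where $x=a+\frac{1}{a}$ is confirmed as the unique nonzero solution, exactly as in the paper.
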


\begin{proof}
	Let  $y=x^{2^{m+1}}$ and $b=a^{2^{m+1}}$ for any $a\in\gf_{2^n}$. Then $f(x) = \frac{1}{x} + \frac{x}{y} + \frac{y}{x} + x$ and $f(x+a)+f(a) = \frac{1}{x+a}+\frac{x+a}{y+b}+\frac{y+b}{x+a}+x+\frac{1}{a}+\frac{b}{a}+\frac{a}{b},$ where $\frac{1}{0}$ equals $0$. It suffices to show that for any $a\in\gf_{2^n}$, 
	$$f(x+a)+f(a)=0,$$
	i.e.,
	\begin{equation}
	\label{n-odd-8-eq-1}
	\frac{1}{x+a}+\frac{x+a}{y+b}+\frac{y+b}{x+a}+x= \frac{a^2+b^2+b}{ab}
	\end{equation}
	has exactly two solutions in $\gf_{2^n}$. When $a=0$, Eq. (\ref{n-odd-8-eq-1}) turns to $ \frac{1}{x} + \frac{x}{y} + \frac{y}{x} + x =\frac{x^2y+x^2+y^2+y}{xy} =0$, which has exactly two solutions $x=0,1$ in $\gf_{2^n}$.  On the other hand, if $a\neq0$ and $x=a$ is a solution of  Eq. (\ref{n-odd-8-eq-1}),  then $\frac{1}{a} + \frac{a}{b} + \frac{b}{a} + a =0$, which means $a=1$. In addition, when $a=1$, it is easy to check that Eq. (\ref{n-odd-8-eq-1}) has also exactly two solutions $x=0,1$ in $\gf_{2^n}$. Thus in the following, we assume $a\in\gf_{2^n}\backslash\{0,1\}$, which means $x=a$ is not a solution of Eq. (\ref{n-odd-8-eq-1}).
	After multiplying $ab(x+a)(y+b)$ to Eq. (\ref{n-odd-8-eq-1}) and simplifying, we have 
	\begin{equation}
	\label{n-odd-8-eq-2}
	abx^2y+(ab^2+ab)x^2+(a^2b+a^2+b^2+b)xy+(a^2b^2+a^2b+b^3+b^2)x+aby^2+(a^3+ab^2)y=0.
	\end{equation}
	Raising Eq. (\ref{n-odd-8-eq-2}) into its $2^{m+1}$-th power, we obtain 
	\begin{equation}
	\label{n-odd-8-eq-3}
	a^2bx^2y^2+(a^4b+a^2b)y^2+(a^2b^2+b^2+a^4+a^2)x^2y+(a^4b^2+a^2b^2+a^6+a^4)y+a^2bx^4+(b^3+a^4b)x^2=0.
	\end{equation}		
	Let $$F(x,y) = 		abx^2y+(ab^2+ab)x^2+(a^2b+a^2+b^2+b)xy+(a^2b^2+a^2b+b^3+b^2)x+aby^2+(a^3+ab^2)y$$ and
	$$G(x,y) = 	a^2bx^2y^2+(a^4b+a^2b)y^2+(a^2b^2+b^2+a^4+a^2)x^2y+(a^4b^2+a^2b^2+a^6+a^4)y+a^2bx^4+(b^3+a^4b)x^2.$$
	Then the resultant of $F$ and $G$ aiming to $y$ is	
	$$ \mathrm{Res}(F,G,y) = a^2b^2x^2(x+a)^2(bx+ab+a)^2(ax+a^2+1)^2(ax+a^2+b)^2.$$
	Then the solutions of Eq. (\ref{n-odd-8-eq-1}) may be $x=a+\frac{a}{b}, a+\frac{1}{a}$ and $a+\frac{b}{a}$. It is easy to check that $x=a+\frac{1}{a}$ is the unique solution of Eq. (\ref{n-odd-8-eq-1}) in $\gf_{{2^n}}^{*}$.
\end{proof}

\begin{Th}
	\label{n-odd-7}
	Let $n=2m+1$ and $f(x) = x^{2^n-2}+x^{2^{n-1}+1}+x^{2^{n-1}-2}+x$. Then $f(x)$ is $2$-to-$1$ over $\gf_{{2^n}}$.
\end{Th}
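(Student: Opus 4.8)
The exponents $2^{n-1}$ that appear in $f$ signal \emph{square roots} rather than the Frobenius power $2^{m+1}$, so the substitution $y=x^{2^{m+1}}$ used in Theorems \ref{n-odd-1}--\ref{n-odd-6} will not turn $f$ into a low-degree bivariate system here (one checks that $x^{2^{n-1}}=y^{2^{m-1}}$, a high Frobenius power of $y$, and the resultant machinery becomes unusable). The plan is therefore to clear these square roots first by composing with the Frobenius (squaring) permutation. Since $x\mapsto x^{2}$ is a bijection of $\gf_{2^n}$, $f$ is $2$-to-$1$ if and only if $g(x):=f(x^{2})$ is. Reducing exponents modulo $2^n-1$, for $x\neq0$ one finds
\begin{equation*}
g(x)=x^{2^{n+1}-4}+x^{2^{n}+2}+x^{2^{n}-4}+x^{2}=x^{3}+x^{2}+x^{-2}+x^{-3},
\end{equation*}
with $g(0)=0$. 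So it suffices to prove that this Laurent form $g$ is $2$-to-$1$.

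The next step is to exploit the reciprocal symmetry $g(x)=g(1/x)$. Putting $u=x+x^{-1}$ and using the characteristic-$2$ identities $x^{2}+x^{-2}=u^{2}$ and $x^{3}+x^{-3}=u^{3}+u$, one gets $g(x)=h(u)$ with $h(u)=u^{3}+u^{2}+u$. For a fixed $a$, solving $g(X)=g(a)$ is then the same as $h(u)=h(w)$, where $u=X+X^{-1}$ and $w=a+a^{-1}$, and the cubic difference factors cleanly as
\begin{equation*}
h(u)+h(w)=(u+w)\bigl(u^{2}+uw+w^{2}+u+w+1\bigr).
\end{equation*}
The branch $u=w$ gives $X^{2}+wX+1=0$, hence $X\in\{a,a^{-1}\}$, which corresponds exactly to the two expected solutions $x=0$ and $x=a+a^{-1}$ of $g(x+a)+g(a)=0$. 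Thus the whole theorem reduces to showing that the quadratic branch contributes no further element of $\gf_{2^n}$.

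The heart of the argument, and the step I expect to be the main obstacle, is to prove that $u^{2}+(w+1)u+(w^{2}+w+1)=0$ has no root $u\in\gf_{2^n}$. By Lemma \ref{lem_deg_2} this is equivalent to $\tr_{2^n}\!\left(\frac{w^{2}+w+1}{(w+1)^{2}}\right)=1$; writing $s=w+1$ turns the argument into $1+\frac{1}{s}+\frac{1}{s^{2}}$, whose trace is $\tr_{2^n}(1)=1$ because $\tr_{2^n}(1/s^{2})=\tr_{2^n}(1/s)$ and $n$ is odd. This is precisely where the oddness of $n$ enters. It also requires $w\neq1$, which holds since $w=1$ would force $a^{2}+a+1=0$, impossible over $\gf_{2^n}$ for odd $n$ as $3\nmid 2^n-1$.

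Finally I would dispose of the degenerate cases $a=0$ and $a=1$ (where $w=0$ and the pairing $\{a,a^{-1}\}$ collapses) by computing the fibre over $0$ directly: from $x^{3}g(x)=(x+1)^{2}(x^{4}+x^{3}+x^{2}+x+1)$ and the fact that $x^{4}+x^{3}+x^{2}+x+1$ has no root in $\gf_{2^n}$ for odd $n$ (as $5\nmid 2^n-1$), one gets $g^{-1}(0)=\{0,1\}$, again of size $2$. Together with the generic case above, this shows every fibre of $g$ has size exactly $2$, so $g$, and hence $f$, is $2$-to-$1$. The decisive work is the conic trace evaluation plus the bookkeeping ensuring $\{a,a^{-1}\}$ are the only preimages; note that, unlike the earlier theorems, the reciprocal substitution $u=x+x^{-1}$ replaces the resultant computation entirely.
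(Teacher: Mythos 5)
Your proposal is correct, and it takes a genuinely different route from the paper. The paper also reduces to the Laurent form (it sets $g(x)=f(x)^2=\frac{1}{x^2}+\frac{1}{x^3}+x^3+x^2$, the same function as your $f(x^2)$ since all coefficients lie in $\gf_2$), but from there it proceeds mechanically: it clears denominators in $g(x+a)+g(a)=0$, factors the resulting sextic as $x(ax+a^2+1)$ times a quartic in $x$ with coefficients in $a$, normalizes that quartic via $x\mapsto \frac{1}{x}+a+1$, and then invokes the Leonard--Williams criterion (Lemma \ref{lem_deg_4}) with a resolvent cubic $G_1(y)=(y+1)\bigl(y^2+y+\frac{a^3+a^2+a}{A}\bigr)$, $A=a^4+a^3+a^2+a+1$, splitting into two cases according to $\tr_{2^n}\bigl(\frac{a^3+a^2+a}{A}\bigr)$ and computing traces of the $\omega_i$ to conclude the quartic is of type $(4)$ or $(2,2)$, hence rootless. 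Your argument replaces all of this with the structural observation $g(x)=g(1/x)$: the substitution $u=x+x^{-1}$ collapses $g$ to the cubic $h(u)=u^3+u^2+u$, the difference $h(u)+h(w)$ factors as $(u+w)(u^2+(w+1)u+w^2+w+1)$, the branch $u=w$ recovers exactly the fiber $\{a,a^{-1}\}$ (the paper's solutions $x=0$ and $x=a+\frac{1}{a}$ in the shifted variable), and the quadratic branch dies from a single application of Lemma \ref{lem_deg_2}, since with $s=w+1$ the relevant trace is $\tr_{2^n}\bigl(1+\frac{1}{s}+\frac{1}{s^2}\bigr)=\tr_{2^n}(1)=1$ for odd $n$. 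Your side conditions are all sound: $w\neq 1$ because $3\nmid 2^n-1$, and $g^{-1}(0)=\{0,1\}$ because $x^3g(x)=(x+1)^2(x^4+x^3+x^2+x+1)$ and $5\nmid 2^n-1$ for odd $n$. What your approach buys is a shorter proof that needs only the quadratic lemma rather than the quartic machinery, avoids the two-case trace analysis, and explains conceptually why the fibers are the reciprocal pairs $\{a,a^{-1}\}$; what the paper's approach buys is uniformity, since the same shift-multiply-factor template (with resultants or Lemma \ref{lem_deg_4}) handles Theorems \ref{n-odd-1}--\ref{n-odd-8}, most of which lack the reciprocal symmetry your method exploits.
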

\begin{proof}
	Let  $g(x)=f(x)^2=\frac{1}{x^2}+\frac{1}{x^3}+x^3+x^2$, where $\frac{1}{0}=0$. Then it suffices to show that for any $a\in\gf_{{2^n}}$, $g(x+a)+g(x)=0$, i.e., 
	\begin{equation}
	\label{n-odd-7-eq-1}
	\frac{1}{(x+a)^2}+\frac{1}{(x+a)^3}+(x+a)^3+x^2+\frac{a^6+a+1}{a^3}=0
	\end{equation}
	has exactly two solutions in $\gf_{{2^n}}$. When $a=0$, Eq. (\ref{n-odd-9-eq-1}) becomes $\frac{1}{x^2}+\frac{1}{x^3}+x^3+x^2=\frac{x^6+x^5+x+1}{x^3}=0,$ which has exactly two solutions $x=0,1$ in $\gf_{{2^n}}$. If $a\neq0$ and $x=a$ is a solution of  Eq. (\ref{n-odd-7-eq-1}),  then $\frac{1}{a^2}+\frac{1}{a^3}+a^3+a^2=0$ and thus $a=1$.  In addition, when $a=1$, it is easy to check that Eq. (\ref{n-odd-7-eq-1}) has also exactly two solutions $x=0,1$ in $\gf_{2^n}$. Thus in the following, we assume $a\in\gf_{2^n}\backslash\{0,1\}$, which means that $x=a$ is not a solution of Eq. (\ref{n-odd-7-eq-1}). 	After multiplying $a^3(x+a)^3$ to Eq. (\ref{n-odd-7-eq-1}) and simplifying, we have 
	\begin{equation}
	\label{n-odd-7-eq-2}
	x(ax+a^2+1)\left( a^2x^4+(a^3+a^2+a)x^3+(a+1)x^2+(a^5+a^4+a^2+a)x+a^6+a^4+a^2  \right) = 0.
	\end{equation}
	Therefore, $x=0, a+\frac{1}{a}$ or 
	\begin{equation}
	\label{n-odd-7-eq-3}
	a^2x^4+(a^3+a^2+a)x^3+(a+1)x^2+(a^5+a^4+a^2+a)x+a^6+a^4+a^2  =0.
	\end{equation}
	In the following, we show that Eq. (\ref{n-odd-7-eq-3}) has no solutions in $\gf_{{2^n}}$ for any $a\in\gf_{2^n}\backslash\{0,1\}$. Let $F(x)= 	a^2x^4+(a^3+a^2+a)x^3+(a+1)x^2+(a^5+a^4+a^2+a)x+a^6+a^4+a^2.$ Let $A=a^4+a^3+a^2+a+1$. Then after simplifying, we have 
	$$G(x)=\frac{1}{A}x^4F\left(\frac{1}{x}+a+1\right) = x^4+\frac{a^4+1}{A}x^2+\frac{a^3+a^2+a}{A}x+\frac{a^2}{A}. $$
	Next we consider the factorization of $G(x)$ using Lemma \ref{lem_deg_4}. Here, $$G_1(y) = y^3+\frac{a^4+1}{A}y+\frac{a^3+a^2+a}{A} = (y+1)\left( y^2+y+\frac{a^3+a^2+a}{A} \right).$$

 (i) When $\tr_{2^{n}}\left( \frac{a^3+a^2+a}{A}  \right) = 1$,	 $G_1(y)=0$ has exactly one solution $r_1=1$. Moreover, in Lemma \ref{lem_deg_4}, $\omega_1 = \frac{a^2}{A} \cdot \frac{A^2}{a^6+a^4+a^2} = \frac{A}{a^4+a^2+1}$ and 
	\begin{eqnarray*}
	\tr_{2^{n}}\left(\omega_1\right) &=& \tr_{2^{n}}\left( \frac{A}{a^4+a^2+1} \right)\\
	&=& \tr_{2^{n}}\left( 1+\frac{a^3+a}{a^4+a^2+1} \right) \\
	&=& \tr_{2^{n}}\left(\frac{a}{a^2+a+1}+\frac{a^2}{a^4+a^2+1}\right)+1\\
	&=& 1.
	\end{eqnarray*}
Therefore, according to Lemma \ref{lem_deg_4}, we know $G=(4)$ and $G(x)=0$, i.e., Eq. (\ref{n-odd-7-eq-3}) has no solutions in $\gf_{{2^n}}$.
	
(ii)	 When  $\tr_{2^{n}}\left( \frac{a^3+a^2+a}{A}  \right) = 0$, $G_1(y)=0$  has three solutions $r_1=1, r_2, r_3$ in $\gf_{{2^n}}$, where $r_2+r_3=1$. Moreover, in Lemma \ref{lem_deg_4}, $\omega_1 = \frac{A}{a^4+a^2+1}$, $\omega_2 = \frac{A}{a^4+a^2+1}r_2^2$ and $\omega_3=\frac{A}{a^4+a^2+1} r_3^2$. From the discussion of the above case (i), we know $\tr_{2^{n}}\left(\omega_1\right)=1$. In addition, we have $\omega_2+\omega_3=\omega_1$ and thus $\tr_{2^{n}}\left(\omega_2+\omega_3\right)=1$. Namely, there exists one element in $\omega_i$ for $i=1,2,3$ such that $\tr_{2^{n}}\left(\omega_i\right)=0$ and according to Lemma \ref{lem_deg_4}, we have $G = (2,2)$ and $G(x)=0$, i.e., Eq. (\ref{n-odd-7-eq-3}) has no solutions in $\gf_{{2^n}}$.

All in all, Eq. (\ref{n-odd-7-eq-1}) has exactly two solutions in $\gf_{{2^n}}$.
\end{proof}

\begin{Th}
	\label{n-odd-8}
	Let $n=2m+1$ and $f(x) = x^{2^n-2}+x^{2^n-4}+x^3+x$. Then $f(x)$ is $2$-to-$1$ over $\gf_{2^{n}}$.
\end{Th}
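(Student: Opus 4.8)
The plan is to exploit a hidden compositional structure of $f$ rather than to grind through a resultant computation. First I would rewrite $f$ on $\gf_{2^n}^{*}$: since $x^{2^n-1}=1$ for $x\neq0$, we have $x^{2^n-2}=x^{-1}$ and $x^{2^n-4}=x^{-3}$, so that
$$f(x)=x^{-1}+x^{-3}+x^3+x,\qquad x\in\gf_{2^n}^{*},$$
while $f(0)=0$. Setting $t=x+x^{-1}$ and using that the characteristic is $2$, a direct expansion gives $t^3=x^3+x+x^{-1}+x^{-3}$, whence $f(x)=\left(x+x^{-1}\right)^3$ for $x\neq0$ (and trivially $f(0)=0^3$). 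In other words $f=P\circ g$, where $P(z)=z^3$ and $g(x)=x+x^{2^n-2}$ is the map $x\mapsto x+1/x$ with the convention $1/0=0$.

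The second step is to remove the cube. Because $n=2m+1$ is odd we have $2^n\equiv2\pmod3$, hence $2^n-1\equiv1\pmod3$ and $\gcd\left(3,2^n-1\right)=1$; thus $P(z)=z^3$ permutes $\gf_{2^n}$. Post-composing a map with a permutation does not change the cardinalities of its fibres, since $(P\circ g)^{-1}(a)=g^{-1}\left(P^{-1}(a)\right)$ and $P^{-1}$ runs bijectively over $\gf_{2^n}$ as $a$ does. Consequently $f=P\circ g$ is $2$-to-$1$ if and only if $g$ is $2$-to-$1$, which reduces the whole problem to analysing the very simple map $g$.

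Finally I would count the preimages of $g$. Fix $a\in\gf_{2^n}$ and solve $g(x)=a$. The value $x=0$ is a solution exactly when $a=0$; for $x\neq0$ the equation $x+1/x=a$ is equivalent to $x^2+ax+1=0$. If $a=0$ this reads $(x+1)^2=0$, giving the single nonzero root $x=1$, so $g^{-1}(0)=\{0,1\}$ has two elements. If $a\neq0$, then $x=0$ is not a root of $x^2+ax+1$, and by Lemma \ref{lem_deg_2} this quadratic has two distinct roots when $\tr_{2^n}\left(1/a^2\right)=0$ and none when $\tr_{2^n}\left(1/a^2\right)=1$ (the roots being distinct because their sum $a$ is nonzero). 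In every case $\#g^{-1}(a)\in\{0,2\}$, so $g$ is $2$-to-$1$, and therefore so is $f$.

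The only genuinely nonobvious point is the opening identity $f(x)=\left(x+1/x\right)^3$; once it is spotted, every remaining step is elementary, and in particular the resultant machinery used in Theorems \ref{n-odd-1}--\ref{n-odd-7} is bypassed entirely. One could instead clear denominators in $f(x+a)+f(a)=0$ and count roots by the resultant and low-degree-factorization method used above (invoking Lemmas \ref{lem_deg_2} and \ref{lem_deg_4} for the surviving factor), but the cube factorization renders this unnecessary.
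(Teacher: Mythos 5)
Your proof is correct, and it takes a genuinely different route from the paper's. The identity $\left(x+x^{-1}\right)^3=x^3+x+x^{-1}+x^{-3}$ in characteristic $2$ is right, the convention $1/0=0$ makes $f=P\circ g$ hold at $x=0$ as well, $\gcd\left(3,2^n-1\right)=1$ for odd $n$ is right, and the fibre count for $g(x)=x+1/x$ via $x^2+ax+1=0$ and Lemma \ref{lem_deg_2} (with separability from $a\neq0$) is complete. The paper instead runs the same machinery as in its other quadrinomial theorems: it writes $f(x)=\frac{1}{x}+\frac{1}{x^3}+x^3+x$, sets $f(x+a)+f(a)=0$, handles $a\in\{0,1\}$ separately, multiplies through by $a^3(x+a)^3$ to factor the resulting equation as $x\left(ax+a^2+1\right)Q(x)=0$ with $Q$ quartic, and then kills $Q$ by a M\"{o}bius-type substitution $\frac{1}{x}+a+1$ followed by the Leonard--Williams criteria of Lemma \ref{lem_deg_4}: the resolvent cubic splits with roots $1,\frac{1}{a+1},\frac{a}{a+1}$, and a parity argument on $\tr_{2^n}(\omega_i)$ (using $n$ odd) forces $Q=(2,2)$, leaving exactly the two solutions $x=0$ and $x=a+\frac{1}{a}$. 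Your decomposition buys brevity and transparency: it makes the hypothesis ``$n$ odd'' visibly equivalent to $x\mapsto x^3$ being a permutation, it explains structurally why the paper's surviving root is $a+\frac{1}{a}$ (the fibres of $g$ are the pairs $\{x,1/x\}$), and it avoids the quartic-factorization analysis entirely; the paper's method, by contrast, is uniform across Theorems \ref{n-odd-1}--\ref{n-odd-10}, where no such compositional shortcut is available, which is presumably why the authors did not single this case out.
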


\begin{proof}
 Clearly, $f(x)=\frac{1}{x}+\frac{1}{x^3}+x^3+x$, where $\frac{1}{0}=0$. To prove the theorem, it suffices to show that for any $a\in\gf_{{2^n}}$, $f(x+a)+f(x)=0$, i.e., 
 \begin{equation}
 \label{n-odd-9-eq-1}
 \frac{1}{x+a}+\frac{1}{(x+a)^3}+(x+a)^3+x+\frac{a^6+a^2+1}{a^3}=0
 \end{equation}
  has exactly two solutions in $\gf_{{2^n}}$. When $a=0$, Eq. (\ref{n-odd-9-eq-1}) becomes $\frac{1}{x}+\frac{1}{x^3}+x^3+x=0,$ which has exactly two solutions $x=0,1$ in $\gf_{{2^n}}$. If $a\neq0$ and $x=a$ is a solution of  Eq. (\ref{n-odd-9-eq-1}),  then $\frac{1}{a}+\frac{1}{a^3}+a^3+a=0$ and thus $a=1$.  In addition, when $a=1$, it is easy to check that Eq. (\ref{n-odd-9-eq-1}) has also exactly two solutions $x=0,1$ in $\gf_{2^n}$. Thus in the following, we assume $a\in\gf_{2^n}\backslash\{0,1\}$, which means $x=a$ is not a solution of Eq. (\ref{n-odd-9-eq-1}). 	After multiplying $a^3(x+a)^3$ to Eq. (\ref{n-odd-9-eq-1}) and simplifying, we have 
  \begin{equation}
  \label{n-odd-9-eq-2}
  x(ax+a^2+1)\left( a^2x^4+(a^3+a)x^3+(a^2+1)x^2+(a^5+a)x+a^6+a^2  \right) = 0.
  \end{equation}
  Therefore, $x=0, a+\frac{1}{a}$ or 
  \begin{equation}
  \label{n-odd-9-eq-3}
  a^2x^4+(a^3+a)x^3+(a^2+1)x^2+(a^5+a)x+a^6+a^2 =0.
  \end{equation}
  In the following, we show that Eq. (\ref{n-odd-9-eq-3}) has no solutions in $\gf_{{2^n}}$ for any $a\in\gf_{2^n}\backslash\{0,1\}$. Let $F(x)= a^2x^4+(a^3+a)x^3+(a^2+1)x^2+(a^5+a)x+a^6+a^2.$ After simplifying, we have 
  $$G(x)=\frac{1}{a^4+1}x^4F\left( \frac{1}{x}+a+1 \right) = x^4+\frac{a^2+a+1}{a^2+1}x^2+\frac{a}{a^2+1}x+\frac{a^2}{a^4+1}.$$
  Let $G_1(y)=y^3+ \frac{a^2+a+1}{a^2+1}y+\frac{a}{a^2+1}=(y+1)\left(y+\frac{1}{a+1}\right)\left(y+\frac{a}{a+1}\right)$. Then the solutions of $G_1(y)=0$ are $r_1=1, r_2= \frac{1}{a+1}$ and $r_3=\frac{a}{a+1}$, respectively. Moreover, in Lemma \ref{lem_deg_4}, $\omega_1= 1$, $\omega_2=\frac{1}{a^2+1}$, $\omega_3=\frac{a^2}{a^2+1}=1+\omega_2$. Since $n$ is odd, we have $\tr_{2^n}(\omega_1)=1$ and $\tr_{2^n}(\omega_2+\omega_3)=1$. Therefore, there is only one element in $\omega_i$ for $i=1,2,3$ such that $\tr_{2^n}(\omega_i) =0$ and by Lemma \ref{lem_deg_4}, we know that $G(x)=(2,2)$. Namely, $G(x)=0$ has no solutions in $\gf_{{2^n}}$ for any $a\in\gf_{2^n}\backslash\{0,1\}$ and neither does Eq. (\ref{n-odd-9-eq-3}). Therefore, Eq. (\ref{n-odd-9-eq-1}) has exactly two solutions $x=0, a+\frac{1}{a}$ for any $a\in\gf_{{2^n}}$.  
\end{proof}

\begin{Th}
	\label{n-odd-9}
	Let $n=2m+1$ and $f(x)=x^6+x^4+x^3+x$. Then $f(x)$  is $2$-to-$1$ over $\gf_{{2^n}}$.
\end{Th}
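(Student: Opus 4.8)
The plan is to follow the template of Theorems~\ref{n-odd-7} and~\ref{n-odd-8}: reduce the $2$-to-$1$ property to counting, for each $a\in\gf_{2^n}$, the solutions of $f(x+a)+f(a)=0$, and to show this number is exactly $2$. First I would expand, using $(x+a)^{2^k}=x^{2^k}+a^{2^k}$ and $(x+a)^3=x^3+ax^2+a^2x+a^3$, to obtain
\[ f(x+a)+f(a)=x^6+(a^2+1)x^4+x^3+(a^4+a)x^2+(a^2+1)x = x\,q(x), \]
where $q(x)=x^5+(a^2+1)x^3+x^2+(a^4+a)x+(a^2+1)$. Since $x=0$ is always a root, the task becomes to prove that $q$ has exactly one root in $\gf_{2^n}$. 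The cases $a=0$ and $a=1$ I would treat by hand: there $f(x)+f(0)=x(x+1)^3(x^2+x+1)$ and $f(x+1)+f(1)=x^3(x+1)(x^2+x+1)$, and because $n$ is odd the factor $x^2+x+1$ is irreducible over $\gf_{2^n}$, so each equation has exactly the two roots $x=0,1$.

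The heart of the argument is an explicit factorization of $q$. Writing $w=x^2+ax+a^2$ one checks $q(x)=xw^2+x^3+w+1$; multiplying by $x$ and setting $u=xw=x^3+ax^2+a^2x$ and $v=x^2+x$ turns the relation $q=0$ (for $x\neq0$) into $u^2+u=x^4+x=v^2+v$, that is $(u+v)(u+v+1)=0$. Resolving the two cases $u=v$ and $u=v+1$ yields the factorization
\[ q(x)=\bigl(x^2+(a+1)x+(a^2+1)\bigr)\bigl(x^3+(a+1)x^2+(a^2+1)x+1\bigr)=:P(x)Q(x), \]
which can also be confirmed by direct expansion. Thus it remains to count the roots of the quadratic $P$ and the cubic $Q$ in $\gf_{2^n}$ for $a\notin\{0,1\}$.

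For $P$ I would invoke Lemma~\ref{lem_deg_2}: since $a\neq1$ the relevant quantity is $\frac{a^2+1}{(a+1)^2}=1$, and $\tr_{2^n}(1)=1$ because $n$ is odd, so $P$ has \emph{no} root in $\gf_{2^n}$. For $Q$, the substitution $x=X+a+1$ removes both the quadratic and the linear term (the latter vanishes precisely because $(a+1)^2=a^2+1$ in characteristic~$2$), reducing $Q=0$ to the pure equation $X^3=a^3+a^2+a=a(a^2+a+1)$. Since $n$ is odd we have $\gcd(3,2^n-1)=1$, so $X\mapsto X^3$ permutes $\gf_{2^n}$ and this equation has exactly one solution; as $a(a^2+a+1)\neq0$ the corresponding root of $Q$ is nonzero. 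Hence $q=PQ$ has exactly one root in $\gf_{2^n}$, so $f(x+a)+f(a)=0$ has exactly the two roots $x=0$ and that root, and $f$ is $2$-to-$1$.

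The main obstacle is discovering the factorization $q=PQ$; once the reduction $(u+v)(u+v+1)=0$ is found, the quadratic factor falls under Lemma~\ref{lem_deg_2} and the cubic factor collapses to a bijective cubing, so the rest is routine. I expect no difficulty beyond this, provided one checks that the roots produced in the two cases are genuinely distinct from $x=0$ and that the special values $a=0,1$ are handled separately.
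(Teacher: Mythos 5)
Your proof is correct and follows essentially the same route as the paper: the same factorization $q(x)=\bigl(x^2+(a+1)x+a^2+1\bigr)\bigl(x^3+(a+1)x^2+(a^2+1)x+1\bigr)$, the trace argument via Lemma \ref{lem_deg_2} showing the quadratic has no roots, and the reduction of the cubic to $X^3=a^3+a^2+a$ with $\gcd(3,2^n-1)=1$. The only difference is cosmetic: you derive the factorization via the substitution $(u+v)(u+v+1)=0$ where the paper simply asserts it, and you treat $a=0$ separately where the paper's general argument already covers it.
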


\begin{proof}
	It suffices to show that for any $a\in\gf_{{2^n}}$, $f(x+a)+f(a)=x^6+(a^2+1)x^4+x^3+(a^4+a)x^2+(a^2+1)x=0$ has exactly two solutions in $\gf_{{2^n}}$. Namely, we need to prove that
	\begin{equation}
	\label{n-odd-10-eq-1}
	x^5+(a^2+1)x^3+x^2+(a^4+a)x+a^2+1=0
	\end{equation}
	has exactly one solution in $\gf_{{2^n}}^{*}$. When $a=1$, Eq. (\ref{n-odd-10-eq-1}) becomes $x^5+x^2=0$, which means that Eq. (\ref{n-odd-10-eq-1}) has a unique solution $x=1$ in $\gf_{{2^n}}^{*}$ due to $\gcd(3,2^n-1)=1$. Moreover, it is easy to check that Eq. (\ref{n-odd-10-eq-1}) is actually
	$$\left( x^2+(a+1)x+a^2+1 \right) \left( x^3+(a+1)x^2+(a^2+1)x+1 \right) = 0. $$
	Therefore, we have $x^2+(a+1)x+a^2+1=0$, which has no solutions in $\gf_{{2^n}}$ since $\tr_{2^n}(1)=1$, or $ x^3+(a+1)x^2+(a^2+1)x+1=0$. For the latter equation, we have $(x+(a+1))^3=(a+1)^3+1=a^3+a^2+a$ and thus Eq. (\ref{n-odd-10-eq-1}) has exactly one solution in $\gf_{{2^n}}^{*}$ since $\gcd\left(3,2^n-1\right)=1$, which completes the  proof.
\end{proof}

\begin{Th}
	\label{n-odd-10}
	Let $n=2m+1$ and $f(x)=x^6+x^5+x^3+x$. Then $f(x)$ is $2$-to-$1$ over $\gf_{{2^n}}$. 
\end{Th}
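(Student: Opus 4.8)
The plan is to recognise that $f$ factors through the Artin--Schreier map. First I would check the identity
\[ f(x)=\left(x^2+x\right)^3+\left(x^2+x\right)^2+\left(x^2+x\right), \]
so that $f=g\circ\pi$ with $\pi(x)=x^2+x$ and $g(w)=w^3+w^2+w$ (a one-line expansion in characteristic $2$). The point of this reformulation is that $\pi$ is $\gf_2$-linear with kernel $\{0,1\}$, hence exactly $2$-to-$1$ from $\gf_{2^n}$ onto its image; since $\tr_{2^n}(x^2+x)=0$ and $\#\,\pi(\gf_{2^n})=2^{n-1}$, that image is precisely the hyperplane $H=\{w\in\gf_{2^n}:\tr_{2^n}(w)=0\}$. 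Granting that $g$ permutes $\gf_{2^n}$, the result is immediate: for any $y$ the fibre $f^{-1}(y)=\pi^{-1}\!\big(g^{-1}(y)\big)$ is a single $\pi$-fibre, of size $2$ or $0$ according as $g^{-1}(y)$ lies in $H$ or not, which is exactly the $2$-to-$1$ property.

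The main work is thus to prove that $g(w)=w^3+w^2+w$ is a permutation of $\gf_{2^n}$ for $n$ odd. I would argue by contradiction: suppose $g(w_1)=g(w_2)$ with $w_1\neq w_2$, and set $s=w_1+w_2\neq0$, $p=w_1w_2$. Using $w_1^3+w_2^3=s^3+ps$ and $w_1^2+w_2^2=s^2$, the relation $g(w_1)+g(w_2)=0$ collapses to $s\big(s^2+s+1+p\big)=0$, whence $p=s^2+s+1$. Therefore $w_1,w_2$ are the two roots of $z^2+sz+(s^2+s+1)$, and by Lemma \ref{lem_deg_2} these roots can lie in $\gf_{2^n}$ only if $\tr_{2^n}\!\big((s^2+s+1)/s^2\big)=0$.

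The decisive --- and I expect only genuinely delicate --- step is the evaluation of this trace:
\[ \tr_{2^n}\!\left(\frac{s^2+s+1}{s^2}\right)=\tr_{2^n}(1)+\tr_{2^n}\!\left(\frac1s\right)+\tr_{2^n}\!\left(\frac1{s^2}\right)=\tr_{2^n}(1)=1, \]
where $\tr_{2^n}(1/s)$ and $\tr_{2^n}(1/s^2)$ cancel by Frobenius-invariance of the trace, and $\tr_{2^n}(1)=1$ because $n$ is odd. This contradicts the solvability condition of Lemma \ref{lem_deg_2}, so no distinct pair $w_1,w_2$ can collide; $g$ is injective, hence a permutation. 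This is also exactly where the hypothesis ``$n$ odd'' is indispensable, since for even $n$ one has $\tr_{2^n}(1)=0$ and the argument --- indeed the conclusion --- breaks down.

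For a concrete cross-check in the style of Theorem \ref{n-odd-9}, one may instead expand directly and obtain $f(x+a)+f(a)=x(x+1)R_a(x)$ with $R_a(x)=x^4+wx^2+(w+1)x+(w+1)^2$ and $w=a^2+a$; since $R_a(x)=(x^2+x)^2+(w+1)(x^2+x)+(w+1)^2$, any root $\xi\in\gf_{2^n}$ of $R_a$ would force $\xi^2+\xi$ to be a root in $\gf_{2^n}$ of $z^2+(w+1)z+(w+1)^2$, which is impossible because $\tr_{2^n}(1)=1$ (note $w+1=a^2+a+1\neq0$). Hence $x=0$ and $x=1$ are the only solutions of $f(x+a)+f(a)=0$ for every $a$, recovering the conceptual argument above.
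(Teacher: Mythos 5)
Your proposal is correct, and it takes a genuinely different route from the paper's own proof. The paper argues computationally, in the same style as its other quadrinomial theorems: it expands $f(x+a)+f(a)=x\bigl(x^5+x^4+(a^2+a)x^3+x^2+(a^4+a)x+a^4+a^2+1\bigr)$, substitutes $x\mapsto x+1$, and shows the residual quartic $G(x)=x^4+(a^2+a)x^2+(a^2+a+1)x+a^4+a^2+1$ is irreducible via the Leonard--Williams criterion (Lemma \ref{lem_deg_4}), using the resolvent cubic $(y+1)(y^2+y+a^2+a+1)$ and $\tr_{2^n}(\omega_1)=\tr_{2^n}(1)=1$. You instead exploit the structural identity $f=g\circ\pi$ with $\pi(x)=x^2+x$ and $g(w)=w^3+w^2+w$ --- which is correct, as $(x^2+x)^3+(x^2+x)^2+(x^2+x)=x^6+x^5+x^3+x$ in characteristic $2$ --- so that everything reduces to $g$ permuting $\gf_{2^n}$ for odd $n$, which your collision argument settles using only the quadratic criterion (Lemma \ref{lem_deg_2}) and $\tr_{2^n}(1+1/s+1/s^2)=\tr_{2^n}(1)=1$; both steps check out, including the fiber bookkeeping through the hyperplane $\tr_{2^n}(w)=0$. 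Your cross-check is in substance the paper's computation, but it replaces Lemma \ref{lem_deg_4} by the observation that the quartic factor is a quadratic in $x^2+x$: indeed your $R_a$ satisfies $R_a(x+1)=G(x)$, so the two calculations are consistent, and your no-root conclusion (together with the correctly noted point $a^2+a+1\neq0$ for $n$ odd) is weaker than the paper's irreducibility claim but entirely sufficient. What your route buys: it is more elementary (only Lemma \ref{lem_deg_2}, which the paper uses throughout anyway), it isolates exactly where the hypothesis ``$n$ odd'' enters, and it explains structurally why $f$ is $2$-to-$1$ --- a permutation composed on the outside with the canonical $2$-to-$1$ Artin--Schreier map, the outer-composition analogue of the principle recalled in the paper's Section 2.4. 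What the paper's route buys is uniformity with Theorems \ref{n-odd-7}--\ref{n-odd-9} and the stronger fact that $G$ is irreducible rather than merely root-free. One further shortcut you might enjoy: $g(w)=(w+1)^3+1$, so $g$ permutes $\gf_{2^n}$ simply because $\gcd(3,2^n-1)=1$ for odd $n$; equivalently $f(x)=(x^2+x+1)^3+1$, which makes the theorem nearly immediate and also shows at a glance why the conclusion fails for even $n$.
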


\begin{proof}
	We suffice to prove that for $a\in\gf_{{2^n}}$, $f(x+a)+f(a)=x(x^5+x^4+(a^2+a)x^3+x^2+(a^4+a)x+a^4+a^2+1)=0$ has exactly two solutions in $\gf_{{2^n}}$. Namely, we need to prove that
	\begin{equation}
	\label{n-odd-12-eq-1}
	x^5+x^4+(a^2+a)x^3+x^2+(a^4+a)x+a^4+a^2+1=0
	\end{equation} 
	has exactly one solution in $\gf_{{2^n}}^{*}$. Using $x+1$ to substitute $x$ in Eq. (\ref{n-odd-12-eq-1}) and simplifying, Eq. (\ref{n-odd-12-eq-1}) becomes
	\begin{equation}
	\label{n-odd-12-eq-2}
	x^5+(a^2+a)x^3+(a^2+a+1)x^2+(a^4+a^2+1)x=0
	\end{equation} 	
	and it suffices to show that Eq. (\ref{n-odd-12-eq-2}) has exactly one solution in $\gf_{{2^n}}\backslash\{1\}$. Clearly, $x=0$ is a solution of Eq. (\ref{n-odd-12-eq-2}). In the following, we show that $G(x)=x^4+(a^2+a)x^2+(a^2+a+1)x+a^4+a^2+1$ is irreducible and thus $G(x)=0$ has no solutions in $\gf_{{2^n}}$. Let $G_1(y)=y^3+(a^2+a)y+a^2+a+1=(y+1)(y^2+y+a^2+a+1)$. Since $\tr_{2^n}(a^2+a+1)=1$, $G_1(y)=0$ has exactly one solution $x_1=1$ in $\gf_{{2^n}}$. Moreover, in Lemma \ref{lem_deg_4}, $\omega_1=\frac{a^4+a^2+1}{(a^2+a+1)^2}=1$ and $\tr_{2^n}(\omega_1)=1$. Therefore, $G(x)$ is irreducible and Eq. (\ref{n-odd-12-eq-2}) has exactly one solution $x=0$ in $\gf_{{2^n}}\backslash\{1\}$. The proof has been finished.
\end{proof}

\subsection{The infinite classes for $n\equiv 0 \pmod 3$}

\begin{Th}
	\label{n=3m-1}
	Let $n=3m$ and $f(x)=x^{2^{2m}+2^m} + x^{2^{2m}+1} +x^{2^m+1} +x $. Then $f(x)$ is $2$-to-$1$ over $\gf_{{2^n}}$.
\end{Th}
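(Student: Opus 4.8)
The plan is to exploit the order-$3$ Frobenius automorphism $\sigma\colon t\mapsto t^{2^m}$ of $\gf_{2^n}$ (recall $n=3m$, so $\sigma^3=\mathrm{id}$), under which the four monomials of $f$ assemble into an elementary symmetric expression. Concretely, I would set $y=x^{2^m}$, $z=x^{2^{2m}}$ and $b=a^{2^m}$, $c=a^{2^{2m}}$, noting the cyclic relations $z^{2^m}=x$ and $c^{2^m}=a$. A direct rewriting gives
\[
f(x)=xy+yz+zx+x,
\]
so the quadratic part of $f$ is the second elementary symmetric function of the Frobenius orbit $\{x,y,z\}$ and is therefore $\sigma$-invariant. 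First I would expand $f(x+a)+f(a)$ in the variables $x,y,z,a,b,c$; the constant contribution $f(a)=bc+ac+ab+a$ cancels, leaving the single equation
\[
(*)\colon\quad xy+yz+zx+(b+c+1)x+(a+c)y+(a+b)z=0,
\]
whose solutions $x\in\gf_{2^n}$ we must enumerate.

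The key observation is that, for a genuine solution $x$, the left-hand side of $(*)$ is an element of $\gf_{2^n}$ equal to $0$; applying $\sigma$ and $\sigma^2$ (i.e. raising to the $2^m$- and $2^{2m}$-th powers) therefore yields two further equations $(**)$ and $(***)$ satisfied by the same $x$, each obtained from $(*)$ by the cyclic substitution $x\to y\to z\to x$, $a\to b\to c\to a$. Since the quadratic part $xy+yz+zx$ is fixed by $\sigma$, it is identical in all three equations and vanishes upon adding them in pairs. I would then form $(*)+(**)$ and $(*)+(***)$ and track only the linear coefficients: the first collapses to $x+y=0$ and the second to $x+z=0$. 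Hence every solution satisfies $x=y=z$, that is, $x\in\gf_{2^m}$.

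Substituting $x=y=z$ back into $(*)$ reduces the quadratic part to $3x^2=x^2$ and the linear part to $(2a+2b+2c+1)x=x$, so $(*)$ becomes $x^2+x=0$ and forces $x\in\{0,1\}$. Conversely $x=0$ is trivially a solution, and $x=1$ (hence $y=z=1$) satisfies $(*)$ for every $a$, since there the quadratic part equals $1$ and the linear part equals $2a+2b+2c+1=1$, summing to $0$; this is simply the identity $f(a+1)=f(a)$. Thus for each $a\in\gf_{2^n}$ the equation $f(x+a)+f(a)=0$ has exactly the two solutions $x=0$ and $x=1$, which is precisely the $2$-to-$1$ property.

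I expect the main difficulty to be conceptual rather than computational: one must argue cleanly that each genuine solution of $(*)$ really satisfies $(**)$ and $(***)$, which rests on $f(x+a)+f(a)$ lying in $\gf_{2^n}$ and on $\sigma$ acting as the $2^m$-power map permuting $(x,y,z)$ and $(a,b,c)$ cyclically, and one must also confirm that the passage to $x=y=z$ discards no solutions. Once this Frobenius descent is set up, the argument is entirely uniform in $a$ and needs no case analysis on the value of $a$, which makes it markedly shorter than the resultant computations used in Theorems \ref{n-odd-1}--\ref{n-odd-10}.
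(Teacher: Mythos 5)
Your proposal is correct and takes essentially the same route as the paper: the paper observes that $f(x+a)+f(a)=0$ rewrites as $\tr_m^{3m}\left(x^{2^m+1}+ax^{2^m}+a^{2^m}x\right)=x$, whose left-hand side lies in $\gf_{2^m}$, forcing $x\in\gf_{2^m}$ and reducing the equation to $x^2+x=0$ with solutions $x=0,1$. Your pairwise sums $(*)+(**)$ and $(*)+(***)$, which give $x+y=0$ and $x+z=0$, are exactly this trace/Frobenius descent written out in expanded multivariate form, so the two arguments coincide.
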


\begin{proof}
	We suffice to show that for any $a\in\gf_{{2^n}}$, $f(x+a)+f(a)=0$, i.e., 
	\begin{equation}
	\label{n=3m-1-eq-1}
	\tr_m^{3m}\left( x^{2^m+1}+ax^{2^m}+a^{2^m}x \right) = x
	\end{equation}
	has exactly two solutions in $\gf_{{2^n}}$. From Eq. (\ref{n=3m-1-eq-1}), we have $x\in\gf_{2^m}$ and thus Eq. (\ref{n=3m-1-eq-1}) becomes $x^2+x=0$. Therefore, Eq. (\ref{n=3m-1-eq-1}) has exactly two solutions $x=0,1$.
\end{proof}

\begin{Th}
	\label{n=3m-2}
	Let $n=3m$ with $m\not\equiv 1\pmod 3$ and $f(x)= x^{2^{2m}+1}+x^{2^{m+1}}+x^{2^m+1}+x$. Then $f(x)$ is $2$-to-$1$ over $\gf_{{2^n}}$.
\end{Th}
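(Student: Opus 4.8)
The plan is to imitate the substitution strategy of Theorem \ref{n=3m-1}. First I would set $y=x^{2^m}$, $z=x^{2^{2m}}$, $b=a^{2^m}$, $c=a^{2^{2m}}$, so that $f(x)=xz+y^2+xy+x$, and expand $f(x+a)+f(a)=0$. After cancellation this becomes the bilinear-type relation $(x+a)(y+z)+y^2+(1+b+c)x=0$. Writing $s=\tr_m^{3m}(x)=x+y+z$ and $t=\tr_m^{3m}(a)=a+b+c$ (both in $\gf_{2^m}$) and using $y+z=s+x$, $1+b+c=1+a+t$, this collapses to the single relation $x^2+y^2+(1+s+t)x+as=0$, which I call $E_0$.

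Next I would apply the Frobenius $\sigma:u\mapsto u^{2^m}$ twice to obtain the cyclic images $E_1,E_2$ (here $\sigma$ fixes $s,t$ and cycles $x\to y\to z\to x$, $a\to b\to c\to a$). Summing, $E_0+E_1+E_2$ is identically $(x+y+z)^2+(x+y+z)=s^2+s$, so every genuine solution forces $s\in\{0,1\}$. Since $x^2+y^2=(x+y)^2=(s+z)^2=s^2+z^2$, the relation $E_0$ rewrites as $z^2=(1+s+t)x+s(1+a)$, that is $x^{2^{2m+1}}=(1+s+t)x+s(1+a)$, a linearized/affine equation once $s$ is fixed. The hypothesis enters here through $\gcd(2m+1,3m)=\gcd(3,m-1)$, which gives $\gcd(2^{2m+1}-1,2^n-1)=1$ precisely when $m\not\equiv1\pmod 3$; the same computation yields $\gcd(2^{m-1}-1,2^n-1)=1$.

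For $s=0$ the equation is $x^{2^{2m+1}}+(1+t)x=0$. Because $\gcd(2^{2m+1}-1,2^n-1)=1$, its only roots are $x=0$ and (when $t\neq1$) the element $1+t\in\gf_{2^m}$; but $\tr_m^{3m}(1+t)=1+t$, which is nonzero unless $t=1$, so the case $s=0$ contributes exactly $x=0$. Thus it remains to show that $s=1$, i.e. $L'(x):=x^{2^{2m+1}}+tx=a+1$, contributes exactly one solution with $\tr_m^{3m}(x)=1$. Uniqueness is easy: for $t\neq0$ the kernel of the $\gf_2$-linear map $L'$ is $\{0,t\}\subseteq\gf_{2^m}$, so any two preimages differ by $t$ and their traces differ by $\tr_m^{3m}(t)=t$; the Frobenius argument applied to $L'(x)=a+1$ forces $\tr_m^{3m}(x)\in\{1,1+t\}$, so exactly one preimage has trace $1$ (and for $t=0$ the map $L'$ is bijective, its unique preimage having trace $1$ automatically).

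The main obstacle is therefore \emph{existence}: I must prove $a+1\in\im(L')$ for every $a$. For $t\neq0$ I would locate the generator $w_0$ of $\im(L')^{\perp}=\ker\big((L')^{*}\big)$, where $(L')^{*}(w)=w^{2^{m-1}}+tw$ is the trace-adjoint, so that $w_0$ is the unique nonzero solution of $w^{2^{m-1}-1}=t$. Since $\gcd(2^{m-1}-1,2^m-1)=1$ this already has a solution in $\gf_{2^m}$, and uniqueness in $\gf_{2^n}$ then forces $w_0\in\gf_{2^m}$. Consequently $w_0t=w_0^{2^{m-1}}$, and because $w_0\in\gf_{2^m}$ and $\tr_m^{3m}(a+1)=t+1$ a direct computation gives $\tr_{2^n}\big((a+1)w_0\big)=\tr_{2^m}\big(w_0(t+1)\big)=\tr_{2^m}\big(w_0^{2^{m-1}}\big)+\tr_{2^m}(w_0)=0$, using $\tr_m^{3m}(1)=1$ and the Frobenius-invariance of $\tr_{2^m}$ (for $t=0$ there is nothing to prove, as $L'$ is bijective). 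Hence $a+1\in\im(L')$, the case $s=1$ yields exactly one solution, and together with $x=0$ this proves that $f$ is $2$-to-$1$.
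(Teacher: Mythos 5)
Your proof is correct, and while it opens exactly as the paper does, its core case analysis is genuinely different. The shared part: the same trivariate rewriting $f=xz+y^2+xy+x$ and the same trick of adding the equation to its two Frobenius conjugates to force $s=x+y+z\in\{0,1\}$ (your identity $E_0\colon x^2+y^2+(1+s+t)x+as=0$ and the sum $E_0+E_1+E_2=s^2+s$ both check out). From there the paper treats the two cases by hand: for $s=0$ it derives the pair $x^2+y^2+dx=0$, $x^2+dy=0$ and excludes nonzero solutions via an ad hoc computation in $\gf_{2^3}$ (roots of $\gamma^3+\gamma+1=0$, using $\gcd(2^m-2,7)=1$ when $m\not\equiv1\pmod 3$); for $s=1$ it normalizes to the linearized polynomial $L(x)=x^{2^{m+1}}+x^2+x$ and proves $L$ permutes $\gf_{2^n}$, so existence and uniqueness of the solution come for free. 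You instead collapse each case to the single affine equation $x^{2^{2m+1}}+(1+s+t)x=s(1+a)$, and in the case $s=1$ your map $L'(x)=x^{2^{2m+1}}+tx$ is \emph{not} bijective (kernel $\{0,t\}$), so you must pay for existence with the adjoint $(L')^{*}(w)=w^{2^{m-1}}+tw$, the identification of its kernel generator $w_0\in\gf_{2^m}$ satisfying $w_0^{2^{m-1}}=tw_0$, and the orthogonality check $\tr_{2^n}\left((a+1)w_0\right)=\tr_{2^m}\left(w_0^{2^{m-1}}\right)+\tr_{2^m}(w_0)=0$; I verified all of this, including the gcd facts $\gcd(2^{2m+1}-1,2^n-1)=\gcd(2^{m-1}-1,2^n-1)=1$ precisely when $m\not\equiv1\pmod 3$. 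As to what each approach buys: the paper's permutation-polynomial route is shorter, but it leaves implicit that the unique solution of its reduced equation actually satisfies the case constraint $x+y+z=1$ (only the subcase $d=0$ is verified explicitly there); your trace dichotomy $\tr_m^{3m}(x)\in\{1,1+t\}$, combined with the fact that the two preimages of $a+1$ under $L'$ have traces differing by $t$, settles exactly this point and makes the count of solutions airtight, at the cost of the kernel/image machinery. Your $s=0$ case is also arguably cleaner than the paper's $\gf_8$ argument: the unique nonzero root $x=1+t$ of $x^{2^{2m+1}}+(1+t)x$ lies in $\gf_{2^m}$, has trace $1+t\neq0$, and is thereby excluded directly.
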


\begin{proof}
	Let $y=x^{2^m}$ and $z=y^{2^m}$. Then $f=xz+y^2+xy+x$ and we suffice to show that for any $a\in\gf_{{2^n}}$, $f(x+a)+f(a)=0$, i.e., 
	\begin{equation}
	\label{n=3m-3-eq-1}
	xz+xy+y^2+(b+c+1)x+ay+az=0,
	\end{equation}
	where $b=a^{2^m}$ and $c=b^{2^{m}}$, has exactly two solutions in $\gf_{{2^n}}$. Raising Eq. (\ref{n=3m-3-eq-1}) into $2^m$-th power twice and we have
	\begin{equation}
	\label{n=3m-3-eq-2}
	xy+yz+z^2+(c+a+1)y+bz+bx=0
	\end{equation}
	and 
	\begin{equation}
	\label{n=3m-3-eq-3}
	yz+xz+x^2+(a+b+1)z+cx+cy=0.
	\end{equation}
		Computing $(\ref{n=3m-3-eq-1})+(\ref{n=3m-3-eq-2})+(\ref{n=3m-3-eq-3})$, we have $x+y+z=0$ or $1$. 
		
	(i) If $x+y+z=0$, plugging it into Eq. (\ref{n=3m-3-eq-1}) and Eq. (\ref{n=3m-3-eq-2}), we obtain $x^2+y^2+dx=0$ and $x^2+dy=0,$ where $d=a+b+c+1\in\gf_{{2^m}}$. If $d=0,$ then $x=0$, which is clearly one solution of Eq. (\ref{n=3m-3-eq-1}). Otherwise, we have $x^4+d^2x^2+d^3x=0$. Assume that $x=d\gamma,$ where  $\gamma^4+\gamma^2+\gamma=0$, then  $y=d\gamma^{2^m}$. On the other hand, plugging $x=d\gamma$ into $x^2+dy=0$ and we have $y=d\gamma^2$. Thus $\gamma^{2^m}=\gamma^2$. Since $\gcd\left(2^m-2,7\right)=1$ due to $m\not\equiv1\pmod3$, $\gamma =\gamma^{\gcd\left(2^m-2,7\right)} = 1$, which does not satisfy $\gamma^4+\gamma^2+\gamma=0$ clearly.  Hence, in the case, $x=0$ is the unique solution of Eq. (\ref{n=3m-3-eq-1}).
	
	(ii) If $x+y+z=1$, plugging it into Eq. (\ref{n=3m-3-eq-1}), we obtain $x^2+y^2+dx=a$, where $d=a+b+c\in\gf_{{2^m}}$. If $d=0$, then $x= (b+1)^{\frac{1}{2}}$, which is indeed the other solution of Eq. (\ref{n=3m-3-eq-1}). If $d\neq0$, we have 
	\begin{equation}
	\label{n=3m-3-eq-4}
	\left( \frac{y}{d}\right)^2+\left( \frac{x}{d} \right)^2+\frac{x}{d} =\frac{a}{d^2}.
	\end{equation}
 Let $L(x) = x^{2^{m+1}}+x^2+x \in\gf_{{2^n}}[x] $. Next we prove that $L$ is a permutation polynomial over $\gf_{{2^n}}$. Since $L$ is linearized, we suffice to prove that $L(x)= x^{2^{m+1}}+x^2+x =0$ has unique solution $x=0$ in $\gf_{{2^n}}$. Clearly, $\tr_m^n(L(x)) = \tr_{m}^n(x) = x+x^{2^m}+x^{2^{2m}} =0$ from $L(x)=0$. Thus $x^{2^{m+1}}+x^2=x^{2^{2m+1}} = x$, which means $x=0$ or $x^{2^{2m+1}-1}=1$. Since $\gcd\left( 2^{2m+1}-1, 2^n-1 \right) =1$ due to $m\not\equiv1\pmod 3$, we have $x=x^{\gcd\left( 2^{2m+1}-1, 2^n-1 \right)}=1$,  which does not satisfy $L(x)=0$ clearly. Hence, $L$ permutes $\gf_{{2^n}}$ and Eq. (\ref{n=3m-3-eq-4}) has exactly one solution in $\gf_{{2^n}}$. 
	
	All in all, Eq. (\ref{n=3m-3-eq-1}) has exactly  two solutions in $\gf_{{2^n}}$.
\end{proof}

In the last of this section, we summarize all twelve classes of $2$-to-$1$  quadrinomials in the following Table III. In addition, note that the experiment results by MAGMA show that our quadrinomials are not mutually QM-equivalent. 

\begin{table}[!htbp]
	\caption{$2$-to-$1$ quadrinomials $f(x)$ over $\gf_{{2^n}}$  }
	\centering
	\begin{tabular}{c cc c}	
		\toprule
		No. & Condition & $f(x)$ & Ref. \\
		\midrule
		$1$ & $n=2m+1$  & $x^{2^{m+1}+2}+x^{2^{m+1}}+x^2+x$ & Theorem \ref{n-odd-1} \\
		$2$ & $n=2m+1$ & $x^{2^{m+1}+2}+x^{2^{m+1}+1}+x^2+x $ & Theorem \ref{n-odd-2}\\
		$3$ & $n=2m+1$ & $x^{2^{m+2}+4}+x^{2^{m+1}+2}+x^2+x$ &Theorem \ref{n-odd-3}	\\
		$4$ & $n=2m+1$ & $x^{2^{n}-2^{m+1}+2}+x^{2^{m+1}}+x^2+x$ & Theorem \ref{n-odd-4}\\
		$5$ & $n=2m+1$ & $ x^{2^{n}-2}+x^{2^n-2^{m+1}}+x^{2^{n}-2^{m+1}-2}+x $& Theorem \ref{n-odd-5} \\
		$6$ & $n=2m+1$ & $x^{2^n-2}+x^{2^n-2^{m+1}}+x^{2^{m+1}-1}+x$& Theorem \ref{n-odd-6}\\
		$7$ & $n=2m+1$ & $x^{2^n-2}+x^{2^{n-1}+1}+x^{2^{n-1}-2}+x$& Theorem \ref{n-odd-7} \\
		$8$ & $n=2m+1$ & $x^{2^n-2}+x^{2^n-4}+x^3+x$& Theorem \ref{n-odd-8} \\
		$9$ & $n=2m+1$ & $x^6+x^4+x^3+x$ & Theorem \ref{n-odd-9}\\
		$10$ & $n=2m+1$ & $x^6+x^5+x^3+x$&  Theorem \ref{n-odd-10}\\
		$11$ & $n=3m$ & $x^{2^{2m}+2^m}+x^{2^{2m}+1}+x^{2^m+1}+x$& Theorem \ref{n=3m-1}\\
		$12$ & $n=3m$, $m{\not\equiv}1\pmod3$ & $x^{2^{2m}+1}+x^{2^{m+1}}+x^{2^m+1}+x$& Theorem \ref{n=3m-2} \\ 
		\bottomrule
	\end{tabular}
\end{table}

\section{Conclusions}
 A systematic study on $2$-to-$1$ mappings over finite fields was provided very recently by Mesnager and Qu \cite{MQ2019}.  But despite their numerous applications in many areas, very little is known in the literature about $2$-to-$1$ mappings up to now. Motivated by their work,  we have pushed further the study of $2$-to-$1$ mappings, particularly, over finite fields in even characteristic which are of particular interest. We contributed to the classification of $2$-to$1$ mappings and enriched the results about $2$-to-$1$ mappings using several elegant algebraic methods and tools. Firstly, we have completely determined  $2$-to-$1$ polynomials with degree $5$ over $\gf_{2^n}$ using the Hasse-Weil bound. In addition, using the multivariate method and the resultant of two polynomials,  we have presented two classes of $2$-to-$1$ trinomials and twelve classes of $2$-to-$1$  quadrinomials over $\gf_{2^n}$. 
%
%
%
%


\begin{thebibliography}{99}

\bibitem{Bartoli2018} D. Bartoli, \newblock On a conjecture about a class of permutation trinomials, \newblock {\em Finite Fields Appl.}, 52 (2018), 30-50.
 	\bibitem{SZM1967} E. R. Berlekamp, H. Rumsey and G. Solomon.
\newblock  On the solution of algebraic equations over finite fields. \newblock {\em Information And Control.}  10 (1967): 553-564.
 \bibitem{CM2011} C. Carlet and S. Mesnager, \newblock On Dillon's class $\mathcal{H}$ of bent functions, Niho bent functions and o-polynomials, \newblock {\em J. Comb. Theory, Ser. A}, 118 (2011), 2393-2410.  
 
 \bibitem{Dobbertin2002} H. Dobbertin, \newblock Uniformly representable permutation polynomials, \newblock {\em Sequence and their Applications-SETA 2001, Springer,} 2 (2002), 1-22. 
 
 	\bibitem{Hou2018} X. Hou, \newblock On a class of permutation trinomials in characteristic $2$, \newblock {\em Cryptogr. Commun.}, 2018,  https://doi.org/10.1007/s12095-018-0342-1.
 	 
 	 \bibitem{Houbook2018} X. Hou, \newblock Lectures on Finite Fields, \newblock {\em Grad. Stud. Math., vol. 190, Amer. Math. Soc., Providence, RI,} 2018.
 	 
 	 \bibitem{HKT2008} J. Hirschfeld, G. Korchm\'{a}ros and F. Torres, \newblock Algebraic curves over a finite field, \newblock {\em  Princeton University Press, STU - Student edition,} 2008. 
  \bibitem{LM1993} R. Lidl, G.L. Mullen and G. Turnwald, \newblock {Dickson Polynomials,} \newblock {\em Longman Scientific and Technical,} 1993.
 	\bibitem{LN1997} R. Lidl, H. Niederreiter, \newblock { Finite Fields}, 2nd ed. \newblock  {\em Cambridge Univ. Press, Cambridge}, 1997.
 
   	\bibitem{LQC2017} K. Li, L. Qu and X. Chen, \newblock New classes of permutation binomials and permutation trinomials over finite fields,\newblock {\em Finite Fields Appl.}, 43(2017), 69-85.

 
 	\bibitem{LQW2018} K. Li, L. Qu and Q. Wang, \newblock Compositional inverses of permutation polynomials of the form $x^rh\left(x^s\right)$ over finite fields, \newblock {\em Cryptogr. Commun.}, 11 (2019), 279-298.
 	\bibitem{LW1972} P. A. Leonard and K. S. Williams, \newblock Quartics over $\mathbb{GF}\left(2^n\right)$. \newblock {\em Proc. Am. Math. Soc.} 36 (1972): 347-350.

 	
 	\bibitem{MQ2019} S. Mesnager and L. Qu, \newblock { On two-to-one mappings over finite fields, }\newblock {\em IEEE Trans. Inf. Theory}, 2019,  https://doi.org/10.1109/TIT.2019.2933832.
  \bibitem{S1993} H. Stichtenoth, \newblock Algebraic function fields and codes, \newblock {\em Springer, Berlin}, 1993. 
    	\bibitem{WYD2017} D. Wu, P. Yuan, C. Ding and Y. Ma, \newblock Permutation trinomials over $\gf_{{2^m}}$,\newblock {\em Finite Fields Appl.}, 46(2017), 38-56.
  \end{thebibliography}
\end{document}